\documentclass{fundam}

\usepackage{bm}
\usepackage{enumerate}
\usepackage{enumitem}
\usepackage{bbm}
 \usepackage{amsmath,amsfonts}

\usepackage{url} 
\usepackage[ruled,lined]{algorithm2e}
\usepackage{graphicx}
\usepackage{tikz}
\usetikzlibrary{automata, positioning, arrows}
\usepackage{subcaption}

\usepackage{todonotes}

\newcommand{\Z}{\mathbb{Z}}

\newcommand{\F}{\mathbb{F}}
\newcommand{\bu}{\mathbf{u}}

\newcommand{\bt}{\mathbf{t}}
\newcommand{\bz}{\mathbf{z}}
\newcommand{\bx}{\mathbf{x}}
\newcommand{\by}{\mathbf{y}}
\newcommand{\bv}{\mathbf{v}}
\newcommand{\bw}{\mathbf{w}}
\newcommand{\bd}{\mathbf{d}}

\newcommand{\bc}{\mathbf{c}}
\newcommand{\be}{\mathbf{e}}
\newcommand{\ba}{\mathbf{a}}
\newcommand{\bb}{\mathbf{b}}
\newcommand{\bbd}{\mathbf{d}}

\newcommand{\LD}{\gamma^{LD}}
\newcommand{\ID}{\gamma^{ID}}
\newcommand{\LLD}{\gamma^{L-LD}}
\newcommand{\LID}{\gamma^{L-ID}}

\usepackage{hyperref}

\begin{document}

\setcounter{page}{351}
\publyear{24}
\papernumber{2187}
\volume{191}
\issue{3-4}

\finalVersionForARXIV


\title{Optimal Local Identifying and Local Locating-dominating Codes}

\author{Pyry Herva\thanks{Address for correspondence: Department of Mathematics and Statistics,
                                University of Turku,  Turku, FI-20014, Finland.}\thanks{Research supported
                                 by the Emil Aaltonen Foundation.}
               Tero Laihonen\thanks{Research supported by the Academy of Finland grant 338797.}\, Tuomo Lehtil\"{a}$^\ddag$
\\
Department of Mathematics and Statistics \\
University of Turku \\
Turku FI-20014, Finland \\
\{pysahe, terolai, tualeh\}@utu.fi
}

\maketitle

\runninghead{P. Herva et al.}{Optimal Local Identifying and Local Locating-dominating Codes}

\begin{flushright}
  \textit{In honour of the 60th birthday of Iiro Honkala\;}
\end{flushright}

\begin{abstract}
We introduce two new classes of covering codes in graphs for every positive integer $r$. These new codes are called local $r$-identifying and local $r$-locating-dominating codes and they are derived from $r$-identifying and $r$-locating-dominating codes, respectively.
We study the sizes of optimal local 1-identifying codes in binary hypercubes.
We obtain lower and upper bounds that are asymptotically tight.
Together the bounds show that the cost of changing covering codes into local 1-identifying codes is negligible.
For some small $n$ optimal constructions are obtained.
Moreover, the
upper bound is obtained by a linear code construction.
Also, we study the densities of optimal local 1-identifying codes and local 1-locating-dominating codes in the infinite square grid, the hexagonal grid, the triangular grid and the king grid.
We prove that seven out of eight of our constructions have optimal densities.

\medskip\noindent
\textbf{Keywords:} Local identifying codes, local locating-dominating codes, identifying codes, locating-dominating codes, dominating sets, hypercubes, infinite grids, discharging methods
\end{abstract}

\section{Introduction and preliminaries}

There are three widely studied ways to locate vertices in a graph using subsets of vertices; namely, \emph{resolving sets} \cite{Harary76,Slater75}
which separate vertices
using the distances to the elements in the subset, \emph{identifying codes} \cite{Karpovsky} and
\emph{locating-dominating codes (or sets)} \cite{Slater2,Slater}  both of which separate using different neighbourhoods of the vertices in the subset.
In the case of resolving sets, the question of
separating only the adjacent vertices \cite{localmd1,localmd2}  has been extensively studied, see, for example, \cite{Klavzar} and the references therein. Such subsets are called \emph{local resolving sets}.
Inspired by this,
we study in this paper the analogous question with respect to identifying codes and locating-dominating codes.

Consequently, we introduce two new classes of codes derived from identifying and locating-dominating codes and study them in some graphs.
We concentrate on the \emph{optimal} codes.
In finite graphs by optimal we refer to the smallest possible size of the code and in infinite graphs to the smallest possible density of the code.
Since the new code classes are closely related to identifying and locating-dominating codes, some comparison is made.
Some of the results in this paper have been published in \cite{rufidim,Herva}.

\subsection*{Graphs and codes}

In this paper we consider simple, connected and undirected graphs $G=(V,E)$ with
vertex set $V$ and edge set $E \subseteq \{ \{u,v\} \mid u,v \in V, u \neq v \}$.
The graph $G$ is finite if its vertex set $V$ is a finite set and infinite if $V$ is an infinite set.
The (graphic) \emph{distance} $d(u,v)$ of two vertices $u,v \in V$ of $G$ is the number of edges in a shortest path between $u$ and $v$.
Let $r$ be a non-negative integer.
A vertex $u$ is said to \emph{$r$-cover}
a vertex $v$ (and vice versa) if $d(u,v) \leq r$.
When $r=1$, we may say just that $u$ \emph{covers} $v$.
More generally, we say that a subset of the vertex set of the graph
$r$-covers a vertex $u$ if the subset has an element which $r$-covers vertex $u$.

Any non-empty subset $C \subseteq V$ of vertices of a graph $G=(V,E)$ is called a \emph{code} (in the graph $G$).
The elements of $C$ are called \emph{codewords} and the elements of $V \setminus C$ are called \emph{non-codewords}.
A code $C \subseteq V$ is an \emph{$r$-covering code} if it $r$-covers every vertex.
If $r=1$, we may say that $C$ is simply a \emph{covering code}.
In other words, if there is a codeword of $C$ in distance at most $r$ from any vertex of $V$, then $C$ is an $r$-covering code in $G$.
Covering codes are also called \emph{dominating sets}.

\medskip
The \emph{(closed) $r$-neighbourhood} of a vertex $u \in V$ in a graph $G=(V,E)$ is the set
$N_r[u] = \{ v \in V \mid d(v,u) \leq r \}$.
The \emph{open $r$-neighbourhood} of $u$ is the set
$N_r(u) = N_r[u] \setminus \{u\}$.
The \emph{$r$-identifying set} $I_{C,r}(u)$ or the \emph{$I$-set} of a vertex $u \in V$ with respect to a code $C$ is the set
$$
I_{C,r}(u) = N_r[u] \cap C.
$$
For $r=1$ we denote $N[u] = N_1[u]$ and $I(u)=I_C(u) = I_{C,1}(u)$.
A code  \emph{$r$-separates} (or just \emph{separates} if $r=1$) two vertices $u$ and $v$ if their $I$-sets are different.
If $C$ separates $u$ and $v$, we may also say that $C$ separates $u$ from $v$ (or vice versa).
More generally, we say that $C$ separates $u$ from a set of vertices $S$ if $C$ separates $u$ from every vertex of $S$.
Note that $C$ is an $r$-covering code if and only if $I_{C,r}(u) \neq \emptyset$ for every $u \in V$.

\medskip
A code in a certain class of covering codes in a finite graph is \emph{optimal} if its size is the smallest among every code in the class.
We also consider optimality with respect to densities in certain infinite graphs (for the definitions of the densities see Section 3).

\medskip
Let us define two widely studied classes of covering codes for every positive integer that are useful in locating vertices in a graph:

\begin{definition}
	A code $C \subseteq V$ in a graph $G = (V,E)$ is an \emph{$r$-identifying code} if it is an $r$-covering code and
	$I_{C,r}(u) \neq I_{C,r}(v)$
	for every distinct $u,v \in V$.
\end{definition}	

\begin{definition}
	A code $C \subseteq V$ in a graph $G = (V,E)$ is an \emph{$r$-locating-dominating code} if it is an $r$-covering code and
	$I_{C,r}(u) \neq I_{C,r}(v)$
	for every distinct $u,v \in V \setminus C$.
\end{definition}	

\noindent	
In other words, a code $C$ is an $r$-identifying code if it $r$-covers every vertex and $r$-separates any two vertices, and it is an $r$-locating-dominating code if it $r$-covers every vertex and $r$-separates any two non-codewords.
By identifying and locating-dominating codes we mean $1$-identifying and $1$-locating-dominating codes, respectively.

The concept of identifying codes was introduced by Karpovsky, Chakrabarty and Levitin in \cite{Karpovsky} 
and the concept of locating-dominating codes was introduced by Slater and Rall in \cite{Slater2,Slater}. 
Since their discovery, these (and many related) classes of codes have been extensively studied in different graphs over the years.
See also the website \cite{Lobstein} for a comprehensive list of references around the topic.

\subsection*{Our new codes: local identifying and local locating-dominating codes}

Two distinct vertices of $G=(V,E)$ are called \emph{neighbours} or \emph{adjacent}
if there is an edge between them, that is, if their distance is $1$.

\begin{definition}
	A code $C \subseteq V$ in a graph $G = (V,E)$ is a \emph{local $r$-identifying code} if it is an $r$-covering code and $I_{C,r}(u) \neq I_{C,r}(v)$ for any two neighbours $u,v \in V$.
\end{definition}

\begin{definition}
	A code $C \subseteq V$ in a graph $G = (V,E)$ is a \emph{local $r$-locating-dominating code} if it is an $r$-covering code and $I_{C,r}(u) \neq I_{C,r}(v)$ for any two non-codeword neighbours $u,v \in V \setminus C$.
\end{definition}

\noindent
Again, by local identifying and local locating-dominating codes we mean local $1$-identifying and local $1$-locating-dominating codes, respectively.

 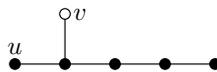
\begin{figure}[!b]
    \centering
    \scalebox{1.1}{
    \begin{tikzpicture}[scale=0.6]
        \draw[fill=black] (1,0) circle(3pt);
        \node[scale=0.8] at (1,0.3) {$u$};
        \draw[fill=black] (2,0) circle(3pt);
        \draw[fill=black] (3,0) circle(3pt);
        \draw[fill=black] (4,0) circle(3pt);
        \draw[fill=black] (5,0) circle(3pt);
        \draw[] (2,1) circle(3pt);
        \node[scale=0.8] at (2.3,1) {$v$};
        \draw (1.1,0) -- (1.9,0);
        \draw (2.1,0) -- (2.9,0);
        \draw (3.1,0) -- (3.9,0);
        \draw (4.1,0) -- (4.9,0);
        \draw (2,0.1) -- (2,0.9);
    \end{tikzpicture} }
    \caption{A graph that admits a local 2-identifying code but does not admit any 2-identifying codes. The  darkened vertices form a local 2-identifying code.}
    \label{Graafi, joka on lokaalisti 2-identifioituva muttei 2-identifioituva.}
\end{figure}

\medskip
Since any graph admits an $r$-locating-dominating code for all $r$ (it is possible to take the whole vertex set as the code), any graph admits also a local $r$-locating-dominating code for all $r$.
However, this is not the case for $r$-identifying and local $r$-identifying codes.
Indeed, any graph that has two distinct vertices with equal $r$-neighbourhoods admits no $r$-identifying codes, and any graph containing two neighbours with equal $r$-neighbourhoods admits no local $r$-identifying codes.
In fact, it is easily seen that a graph $G=(V,E)$ admits an $r$-identifying code if and only if $N_r[u] \neq N_r[v]$ for all $u,v \in V , u \neq v$, and that $G$ admits a local $r$-identifying code if and only if $N_r[u] \neq N_r[v]$ for all $u,v \in V$ such that $u$ and $v$ are neighbours.
For $r=1$ these conditions are the same, that is, a graph admits an identifying code if and only if it admits a local identifying code.
For $r>1$ this is not the case.
See Figure \ref{Graafi, joka on lokaalisti 2-identifioituva muttei 2-identifioituva.} for a graph that admits a local 2-identifying code but does not admit any 2-identifying codes.
Indeed, the vertices $u$ and $v$ in the graph have identical 2-neighborhoods. Hence, no code can 2-separate them.
However, it is easily verified that any two neighbors in the graph have different 2-neighborhoods.
Darkened vertices denote a local $2$-identifying code in the graph. 

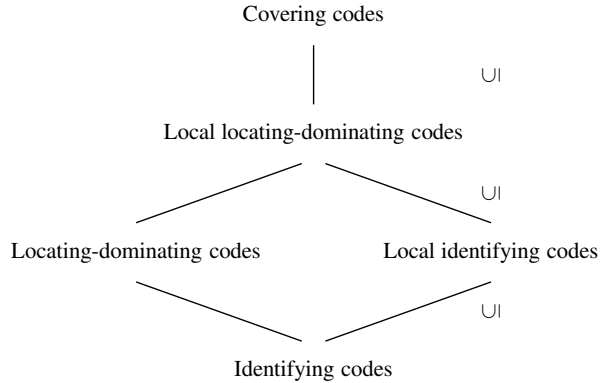
\begin{figure}[!h]
	\centering
\scalebox{1.3}{    \begin{tikzpicture}[scale=0.6]
        \node[scale=0.6] at (0,0) {Covering codes};
        \node[scale=0.6] at (0,-2) {Local locating-dominating codes};
        \node[scale=0.6] at (-3,-4) {Locating-dominating codes};
        \node[scale=0.6] at (3,-4) {Local identifying codes};
        \node[scale=0.6] at (0,-6) {Identifying codes};

        \draw (0,-0.5) -- (0, -1.5);

        \draw (-0.2,-2.5) -- (-3,-3.5);

        \draw (0.2,-2.5) -- (3,-3.5);

        \draw (-3,-4.5) -- (-0.2,-5.5);

        \draw(3,-4.5) -- (0.2,-5.5);


        \node[scale=0.6][rotate=90] at (3,-1) {$\subseteq$};
        \node[scale=0.6][rotate=90] at (3,-3) {$\subseteq$};
        \node[scale=0.6][rotate=90] at (3,-5) {$\subseteq$};
    \end{tikzpicture} }
    \caption{Illustration of the hierarchy between different classes of covering codes.}
    \label{hierarchy}
\end{figure}

\noindent
There is an obvious hierarchy between introduced classes of codes: they are all covering codes, locating-dominating and local identifying codes are both local locating-dominating codes, and identifying codes are locating-dominating codes and also local identifying codes for any fixed covering radius and any fixed graph.
See Figure \ref{hierarchy} for a pictorial illustration.
Depending on the graph these inclusions may or may not be strict. For example, it is quite easy to see that in paths (finite and infinite) and in sufficiently large cycles the classes of identifying and local identifying codes are the same \cite{Herva}.

\medskip
From now on we concentrate on the case $r=1$.
We denote by $\gamma^{ID}(G)$, $\gamma^{LD}(G)$, $\gamma^{L-ID}(G)$ and $\gamma^{L-LD}(G)$ the cardinalities of optimal identifying, locating-dominating, local identifying and local locating-dominating codes, respectively, in a graph $G$. We call these values \textit{identification}, \textit{location-domination}, \textit{local identification} and \textit{local location-domination numbers}, respectively. In particular, we have $\ID(G)\geq\LID(G)\geq \LLD(G)$ and $\ID(G)\geq\LD(G)\geq\LLD(G)$ for any graph $G$ admitting an identifying code.

In \cite{Herva}, local $r$-identifying codes were studied in paths and in cycles.
It was proved that in both finite and infinite paths and in sufficiently large cycles the classes of local $r$-identifying and $r$-identifying codes are equal for all $r$.

The following lemma is useful in our forthcoming considerations.
A graph $G$ is \emph{triangle-free} if it does not contain any triangles by which we mean that the graph does not have any 3-cycles as induced subgraphs.

\begin{lemma}\label{triangle-free lemma}
	A code in a triangle-free graph is a local locating-dominating code if and only if it is a covering code.
\end{lemma}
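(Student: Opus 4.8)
The plan is to prove both directions of the equivalence, with the forward direction being essentially trivial from the definitions and the reverse direction being where the triangle-free hypothesis does the real work.

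First I would establish the easy direction: a local locating-dominating code is by definition a covering code, so there is nothing to prove there. For the converse, suppose $C \subseteq V$ is a covering code in a triangle-free graph $G = (V,E)$; I want to show $C$ is a local locating-dominating code, i.e. that $I_C(u) \neq I_C(v)$ for any two adjacent non-codewords $u, v \in V \setminus C$. So fix such a pair $u,v$ with $\{u,v\} \in E$. The key observation is that since $C$ is a covering code, $I_C(u) \neq \emptyset$, so there is a codeword $c \in C$ with $c \in N[u]$. Because $u \notin C$ we cannot have $c = u$, so $c$ is a genuine neighbour of $u$. Now I would split into cases according to whether $c = v$ or $c \neq v$.

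The main step is the case analysis. If $c = v$: but $v \notin C$ while $c \in C$, contradiction, so this case does not occur — hence $c$ is a neighbour of $u$ distinct from $v$. Now I claim $c \notin N[v]$. Indeed, if $c \in N[v]$ then, since $c \neq v$, we would have $\{c,v\} \in E$; together with $\{u,v\} \in E$ and $\{u,c\} \in E$ this gives a triangle on $\{u,v,c\}$ (these three vertices are distinct: $u \neq v$ by assumption, $c \neq u$ and $c \neq v$ as shown), contradicting triangle-freeness. Therefore $c \in I_C(u)$ but $c \notin I_C(v)$, so $I_C(u) \neq I_C(v)$, which is exactly what we needed. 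Since the pair of adjacent non-codewords was arbitrary, $C$ separates every such pair, and being also a covering code, $C$ is a local locating-dominating code.

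I expect the only subtlety to be bookkeeping the distinctness of the three vertices $u$, $v$, $c$ so that "$\{u,v\},\{u,c\},\{c,v\}$ are edges" genuinely yields an induced (or at least a) $3$-cycle; this is immediate here since $u\notin C$, $v\notin C$, $c\in C$ force $c\notin\{u,v\}$, and $u\neq v$ is part of the hypothesis. Note also that the same argument shows the stronger statement that in a triangle-free graph any covering codeword of a non-codeword $u$ automatically separates $u$ from all of its non-codeword neighbours, so essentially no work beyond covering is required. There is no real obstacle; the proof is short and self-contained.
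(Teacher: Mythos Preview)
Your proof is correct and follows essentially the same idea as the paper's: any codeword covering a non-codeword $u$ must be a genuine neighbour of $u$, and in a triangle-free graph such a neighbour cannot also be adjacent to a neighbour $v$ of $u$, hence it separates $u$ from $v$. The only cosmetic difference is that the paper phrases the converse by contradiction (assuming $I_C(u)=I_C(v)$ and deducing a triangle), while you argue directly; the content is the same.
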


\begin{proof}
	First, local locating-dominating codes are covering codes by definition.
	
	For the converse claim let $C$ be any covering code in a triangle-free graph $G$
	and
	assume the contrary that $C$ is not local locating-dominating.
	Thus, there exist non-codeword neighbours $u$ and $v$ with equal $I$-sets.
	Since $C$ is a covering code, we have $I(u) = I(v) \neq \emptyset$.
	Hence, there is a triangle in $G$.
	A contradiction.
\end{proof}

In \cite{muller1987np}, it has been shown that finding an optimal covering code -- that is, an optimal dominating set -- in a triangle-free graph (or more specifically in chordal bipartite graphs, a subclass of triangle-free graphs) is an NP-hard problem. Thus, the previous lemma implies the following corollary.

\begin{corollary}
    Finding the cardinality of an optimal local locating-dominating code in a graph $G$ is an NP-hard problem. Even when restricted to chordal bipartite graphs.
\end{corollary}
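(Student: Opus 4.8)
The plan is to leverage Lemma~\ref{triangle-free lemma} to transfer a known NP-hardness result from the domination problem to the local location-domination problem. The key observation is that the lemma gives, on the subclass of triangle-free graphs, an \emph{exact equality} of the two relevant code classes: a code in a triangle-free graph is a local locating-dominating code if and only if it is a dominating set. Consequently, for any triangle-free graph $G$ we have $\LLD(G) = \gamma(G)$, where $\gamma(G)$ denotes the domination number, i.e.\ the size of an optimal covering code. So the decision problem ``given a triangle-free graph $G$ and an integer $k$, is $\LLD(G) \le k$?'' is literally the same problem as ``given a triangle-free graph $G$ and an integer $k$, is $\gamma(G) \le k$?''.

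First I would recall the cited result of \cite{muller1987np}: deciding whether a chordal bipartite graph has a dominating set of size at most $k$ is NP-hard (chordal bipartite graphs are bipartite, hence triangle-free). Then I would note that Lemma~\ref{triangle-free lemma} applies verbatim to chordal bipartite graphs, since they are triangle-free. Hence on this class $\LLD(G) = \gamma(G)$, and the identity map on instances $(G,k)$ is a (trivial, polynomial-time) reduction from \textsc{Dominating Set} restricted to chordal bipartite graphs to \textsc{Local Locating-Dominating Code} restricted to chordal bipartite graphs. This establishes NP-hardness of the latter, and since chordal bipartite graphs form a subclass of all graphs, NP-hardness for general graphs follows a fortiori. (If one wants the problem stated as a true decision problem rather than an optimization problem, the reduction is from the decision version of \textsc{Dominating Set} and needs no modification.)

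There is essentially no obstacle here; the corollary is immediate once Lemma~\ref{triangle-free lemma} is in hand, and the only thing to be careful about is the standard caveat that ``finding the cardinality of an optimal code'' is NP-hard in the sense that the associated decision problem is NP-complete (membership in NP being clear, since a candidate code of size $\le k$ can be verified in polynomial time by checking that it dominates and that no two non-codeword neighbours share an $I$-set). One should also make explicit the trivial but necessary point that the reduction preserves the instance class: the graph produced by \cite{muller1987np}'s construction is already chordal bipartite, so no gadget or modification is needed, and the equivalence $\LLD(G)\le k \iff \gamma(G)\le k$ holds on the nose. I would therefore present the proof in two short sentences: invoke the lemma to get $\LLD(G)=\gamma(G)$ on (chordal) bipartite graphs, then cite \cite{muller1987np}.
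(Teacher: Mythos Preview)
Your proposal is correct and follows essentially the same approach as the paper: invoke Lemma~\ref{triangle-free lemma} to identify local locating-dominating codes with dominating sets on triangle-free (in particular chordal bipartite) graphs, and then cite \cite{muller1987np} for the NP-hardness of domination on that class. The paper's own justification is even terser than yours, but the argument is identical.
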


In \cite{cohen1999identifying}, the authors have shown that finding an optimal identifying code in a graph $G$ is an NP-hard problem. Their proof is based on a reduction from the well-known $3$-SAT problem. Furthermore, the exactly same reduction works also for local identifying codes. Hence, we obtain the following corollary.
\begin{corollary}
    Finding the cardinality of an optimal local identifying code in a graph $G$ is an NP-hard problem.
\end{corollary}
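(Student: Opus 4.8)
The plan is to reopen the polynomial-time reduction from $3$-SAT given in \cite{cohen1999identifying} and to check that it can be reused word for word for local identifying codes. Recall that this reduction attaches to every $3$-CNF formula $\phi$ a graph $G_\phi$ and an integer $k$, both computable in polynomial time, so that $\phi$ is satisfiable if and only if $\ID(G_\phi)\le k$. I would show that the very same pair $(G_\phi,k)$ is a correct reduction from $3$-SAT to the decision problem ``is $\LID(G)\le k$?''.

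One inclusion comes for free. By the hierarchy recalled in Figure~\ref{hierarchy}, every identifying code is a local identifying code, hence $\LID(G_\phi)\le\ID(G_\phi)$ for every $G_\phi$; thus if $\phi$ is satisfiable then $\LID(G_\phi)\le\ID(G_\phi)\le k$. For the other direction I need that a local identifying code of $G_\phi$ of size at most $k$ forces $\phi$ to be satisfiable. It suffices to prove that on the family of graphs produced by the reduction one has $\ID(G_\phi)=\LID(G_\phi)$; in fact I would prove the stronger statement that in every such $G_\phi$ each local identifying code is already an identifying code. Then a local identifying code of size $\le k$ is an identifying code of size $\le k$, so $\ID(G_\phi)\le k$, and the soundness argument of \cite{cohen1999identifying} yields a satisfying assignment of $\phi$.

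Hence the heart of the proof --- and the only place where one has to look inside the construction rather than quote it --- is the following claim about the reduction graphs: any covering code of $G_\phi$ that separates every pair of adjacent vertices also separates every pair of non-adjacent vertices. I would establish this by a case analysis over the gadgets of \cite{cohen1999identifying} (the variable gadgets, the clause gadgets and the edges linking them): two vertices lying in different gadgets are far apart and are readily separated by any dominating set, so only the finitely many types of non-adjacent pairs inside a single gadget need to be examined. The main obstacle is precisely this bookkeeping --- one must be certain that the construction contains no pair of non-adjacent vertices whose $I$-sets could coincide under some dominating code, because such a pair would escape the local separation requirement. Granting the claim, computing $\LID(G)$ is NP-hard, which proves the corollary. (Note that, since for $r=1$ a graph admits a local identifying code if and only if it admits an identifying code, the graphs $G_\phi$ do admit local identifying codes, so the decision problem is meaningful on the reduction instances.)
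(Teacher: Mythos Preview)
Your proposal is correct and takes essentially the same approach as the paper: both argue that the $3$-SAT reduction of \cite{cohen1999identifying} carries over to local identifying codes. The paper's proof is in fact a single sentence asserting that ``the exactly same reduction works also for local identifying codes''; your proposal supplies one concrete way to verify that assertion, namely by checking on the gadget graphs $G_\phi$ that every local identifying code is already an identifying code.
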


In the following theorem, we show that it is not possible to give any useful lower bound for local identifying (locating-dominating) codes with identification (location-domination) number.

\begin{theorem}
    Let $K_{2,n}$ be a complete bipartite graph on $n+2\geq5$ vertices. We have $\ID(K_{2,n})=\LD(K_{2,n})=n$ and $\LID(K_{2,n})=\LLD(K_{2,n})=2$.
\end{theorem}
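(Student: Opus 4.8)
The plan is to set up coordinates for $K_{2,n}$, compute the $I$-sets explicitly, and then argue separately about the four code classes. Write the bipartition as $A = \{a_1, a_2\}$ (the side of size $2$) and $B = \{b_1, \dots, b_n\}$ (the side of size $n$), with every $a_i$ adjacent to every $b_j$ and no edges inside $A$ or inside $B$. Then $N[a_i] = \{a_i\} \cup B$ and $N[b_j] = \{b_j\} \cup A$. The key structural observation is that the vertices $b_1, \dots, b_n$ are pairwise twins in the closed-neighbourhood sense restricted to $B$: for a code $C$, $I_C(b_j) = (C \cap A) \cup (\{b_j\} \cap C)$, so two non-codewords $b_j, b_k \in B \setminus C$ have $I_C(b_j) = C \cap A = I_C(b_k)$, i.e. $C$ never separates two non-codeword vertices of $B$. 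Since $K_{2,n}$ is not triangle-free (any $a_i, a_j', b_k$ — wait, there is no edge $a_1a_2$, so actually $K_{2,n}$ \emph{is} triangle-free for $n \geq 1$), Lemma~\ref{triangle-free lemma} already gives that a code is local locating-dominating iff it is a covering code.

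\medskip\noindent
\textbf{Identifying and locating-dominating numbers.} For the lower bound $\gamma^{ID}(K_{2,n}) \geq \gamma^{LD}(K_{2,n}) \geq n$: by the twin observation, an $r=1$ locating-dominating code can leave at most one vertex of $B$ outside $C$, hence $|C \cap B| \geq n-1$; and if $|C\cap B| = n-1$ one checks $C$ must contain at least one vertex of $A$ to cover the remaining $b_j$ and to keep things consistent, giving $|C| \geq n$. (One should double-check the small-$n$ edge case, which is why the hypothesis $n+2 \geq 5$, i.e. $n \geq 3$, appears.) For the matching upper bound, exhibit an explicit identifying code of size $n$ — for instance $C = B = \{b_1,\dots,b_n\}$: then $I_C(a_1) = B$, $I_C(a_2) = B$... that fails to separate $a_1$ from $a_2$, so instead take $C = \{a_1\} \cup \{b_2, \dots, b_n\}$ (size $n$) and verify it covers everything and separates all pairs; the pairs to check are $a_1$ vs $a_2$, $a_1$ vs $b_j$, $a_2$ vs $b_j$, $b_1$ vs $b_j$, and $b_j$ vs $b_k$ for $j,k \geq 2$, all routine. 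This shows $\gamma^{ID}(K_{2,n}) \le n$, and combined with the lower bound, $\gamma^{ID} = \gamma^{LD} = n$.

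\medskip\noindent
\textbf{Local numbers.} Since $\gamma^{L-ID} \geq \gamma^{L-LD}$ always, it suffices to prove $\gamma^{L-LD}(K_{2,n}) \geq 2$ and $\gamma^{L-ID}(K_{2,n}) \leq 2$. The lower bound $\geq 2$ holds because a single codeword cannot dominate $K_{2,n}$ for $n \geq 2$ (no vertex is adjacent to all others, as $a_1 \not\sim a_2$ and the $b_j$'s are mutually non-adjacent), so $\gamma^{L-LD} \geq 2$. For the upper bound, take $C = \{a_1, a_2\}$: this is a covering code (every $b_j$ is adjacent to both, and $a_1, a_2 \in C$), and by Lemma~\ref{triangle-free lemma} every covering code in the triangle-free graph $K_{2,n}$ is a local locating-dominating code — but I need \emph{local identifying}, which is stronger. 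So instead check $C = \{a_1, a_2\}$ directly for the local identifying property: the only edges are $a_ib_j$, and $I_C(a_i) = \{a_1,a_2\}$ while $I_C(b_j) = \{a_1,a_2\}$ — these are equal, so $C = \{a_1,a_2\}$ is \emph{not} local identifying. Try $C = \{a_1, b_1\}$: edges $a_1b_1$ gives $I_C(a_1) = \{a_1,b_1\}$, $I_C(b_1) = \{a_1,b_1\}$ — equal again, fails. Try $C = \{b_1, b_2\}$: edge $a_ib_j$; $I_C(a_1) = \{b_1,b_2\}$, $I_C(b_1) = \{b_1\}$, $I_C(b_2) = \{b_2\}$, and $I_C(b_j) = \emptyset$ for $j \geq 3$ — not covering. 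The right choice is $C = \{a_1, b_1\}$ reconsidered, or one may need size-$2$ codes of a mixed type; the genuine content of the theorem is that \emph{some} size-$2$ code works, and I expect $C=\{a_1,a_2\}$ actually does give a local \emph{locating-dominating} code (it does, by Lemma~\ref{triangle-free lemma}), so $\gamma^{L-LD} = 2$; for local identifying I would search among the few isomorphism types of $2$-subsets and verify that $C = \{a_1, a_2\}$ works once we recheck: neighbours of $a_i$ are exactly the $b_j$'s, and we need $I_C(a_i) \neq I_C(b_j)$ — but both equal $\{a_1,a_2\}$, contradiction, so in fact $\gamma^{L-ID}(K_{2,n})$ may be $3$, not $2$, unless I am misreading; the resolution is that the paper's claim $\gamma^{L-ID} = 2$ forces a code like $\{a_1,b_1\}$ and one must recheck whether $a_1 \sim b_1$ really forces $I_C(a_1)=I_C(b_1)$ — it does since $N[a_1]\cap C \supseteq\{a_1,b_1\}$ and $N[b_1]\cap C\supseteq\{a_1,b_1\}$ and $C$ has only two elements. \textbf{The main obstacle}, then, is pinning down which $2$-element code is local identifying; I would handle it by a short exhaustive case analysis over the (at most four) isomorphism classes of $2$-subsets of $V(K_{2,n})$ — $\{a_1,a_2\}$, $\{a_1,b_1\}$, $\{b_1,b_2\}$ — computing all $I$-sets and checking the covering and separation conditions, trusting that exactly one class (and I believe it is $\{a_1,a_2\}$ after correctly accounting for the fact that $a_1$ and $b_1$ being in $C$ makes their $I$-sets coincide only if they are adjacent, whereas $a_1,a_2$ are \emph{not} adjacent so the pair $(a_1,a_2)$ imposes no constraint) satisfies them.
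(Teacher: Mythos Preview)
Your treatment of $\gamma^{ID}$, $\gamma^{LD}$, and $\gamma^{L-LD}$ is essentially the paper's argument and is fine. The gap is in the $\gamma^{L-ID}$ part, and it stems from a miscomputation of $I$-sets, not from a missing idea.

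You wrote that for $C=\{a_1,a_2\}$ one has $I_C(a_i)=\{a_1,a_2\}$. This is wrong: in $K_{2,n}$ the vertices $a_1$ and $a_2$ lie on the same side of the bipartition and are therefore \emph{not} adjacent, so $a_2\notin N[a_1]$. Hence
\[
I_C(a_1)=N[a_1]\cap\{a_1,a_2\}=\{a_1\},\qquad I_C(a_2)=\{a_2\},\qquad I_C(b_j)=\{a_1,a_2\}.
\]
Every edge of $K_{2,n}$ is of the form $a_ib_j$, and for such an edge $I_C(a_i)=\{a_i\}\neq\{a_1,a_2\}=I_C(b_j)$. So $C=\{a_1,a_2\}$ \emph{is} a local identifying code, exactly as the paper asserts in one line. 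Your closing parenthetical (``$a_1,a_2$ are not adjacent so the pair $(a_1,a_2)$ imposes no constraint'') is looking at the wrong thing: the non-adjacency of $a_1,a_2$ matters not because it removes a separation constraint between $a_1$ and $a_2$, but because it makes $I_C(a_1)$ and $I_C(a_2)$ singletons rather than the full set $\{a_1,a_2\}$, which is what distinguishes them from the $I$-sets of the $b_j$'s. Once you fix this single computation, the exhaustive search you propose collapses immediately and your proof coincides with the paper's.
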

\begin{proof}
    Let $K_{2,n}$ have bipartitioning to sets $A$ and $B$ and let $|A|=2$. Observe that $A$ is a local identifying code and thus, also a local locating-dominating code. Moreover, we need at least two vertices to dominate $K_{2,n}$. Hence, $\LID(K_{2,n})=\LLD(K_{2,n})=2$.

\medskip
    Let us then consider usual identification and location-domination. A set containing one vertex from $A$ and all but one from $B$ is an identifying code (and thus, also a locating-dominating code). Moreover, to separate vertices in $A$, we require at least one vertex from $A$ to locating-dominating code. Furthemore, only vertices in $B$ can separate them from other vertices in $B$. Thus, we require all but one vertex from $B$ to any locating-dominating code. The claim follows from the fact $\LD(G)\leq \ID(G)$.
\end{proof}\vspace*{-3mm}

An important concept in finding lower bounds for sizes of optimal codes is the concept of \emph{share} introduced by Slater in \cite{Slater3}.
In the following we define this concept only for $r=1$ although it can be defined for general $r$.

\begin{definition}
Let $C$ be a covering code in graph $G$.
The share $s(c)$ of a codeword $c \in C$ is defined as
$$
s(c) = \sum_{u\in N[c]} \frac{1}{|I_{C}(u)|}.
$$
\end{definition}

For example, the codeword $c$ in Figure~\ref{Graafi, joka on lokaalisti 2-identifioituva muttei 2-identifioituva.2} has share $s(c)=1/2+1+1/3+1/3=13/6$. \\ 

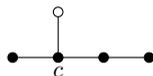
\begin{figure}[!h]
\vspace*{-2mm}
    \centering
    \begin{tikzpicture}[scale=0.6]
        \draw[fill=black] (1,0) circle(3pt);
        \draw[fill=black] (2,0) circle(3pt);
        \node[scale=0.8] at (2,-0.3) {$c$};
        \draw[fill=black] (3,0) circle(3pt);
        \draw[fill=black] (4,0) circle(3pt);
        \draw[] (2,1) circle(3pt);
        \draw (1.1,0) -- (1.9,0);
        \draw (2.1,0) -- (2.9,0);
        \draw (3.1,0) -- (3.9,0);
        \draw (2,0.1) -- (2,0.9);

    \end{tikzpicture}
    \caption{The darkened vertices form  a (non-optimal) covering code.}
    \label{Graafi, joka on lokaalisti 2-identifioituva muttei 2-identifioituva.2}
\end{figure}

The following lemma is well-known and easy to prove. It provides a useful way to find lower bounds for sizes of optimal codes.

\begin{lemma}\label{share finite}
    Let $C$ be a covering code in a finite graph $G=(V,E)$. If $s(c) \leq \alpha$ for every $c\in C$, then
    $$
    |C| \geq \frac{|V|}{\alpha}.
    $$
\end{lemma}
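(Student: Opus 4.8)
The plan is to use the definition of share together with a simple double-counting argument. First I would observe that every vertex $u \in V$ is $1$-covered by the code $C$, so $I_C(u) \neq \emptyset$ and the quantity $1/|I_C(u)|$ is well-defined and positive for each $u$. The key identity is that summing the shares over all codewords redistributes, for each vertex $u$, the total weight $1$ among the codewords in $I_C(u)$; formally,
$$
\sum_{c \in C} s(c) = \sum_{c \in C} \sum_{u \in N[c]} \frac{1}{|I_C(u)|} = \sum_{u \in V} \sum_{c \in I_C(u)} \frac{1}{|I_C(u)|} = \sum_{u \in V} \frac{|I_C(u)|}{|I_C(u)|} = |V|.
$$
Here the swap of summation order is justified by noting that the pair $(c,u)$ appears in the double sum exactly when $c \in C$ and $u \in N[c]$, which is equivalent to $u \in V$ and $c \in N[u] \cap C = I_C(u)$.

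Next, using the hypothesis that $s(c) \le \alpha$ for every $c \in C$, I would bound the left-hand side:
$$
|V| = \sum_{c \in C} s(c) \le \sum_{c \in C} \alpha = \alpha \, |C|.
$$
Dividing by $\alpha$ (which is positive, since shares are sums of positive terms over nonempty neighbourhoods) gives $|C| \ge |V|/\alpha$, as claimed.

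The argument is entirely elementary; there is essentially no obstacle beyond being careful with the interchange of the order of summation and noting that finiteness of $G$ guarantees all sums are finite so the rearrangement is valid. I would also remark that strictly speaking one only needs $s(c)\le\alpha$ to hold, and $\alpha>0$ is automatic from $C$ being nonempty and a covering code; if one wanted, the same proof shows more precisely that $|C| \ge |V| / \max_{c\in C} s(c)$, with equality exactly when all shares coincide.
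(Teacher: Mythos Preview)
Your proof is correct and is exactly the standard double-counting argument one expects here. The paper in fact does not write out a proof of this lemma at all, calling it ``well-known and easy to prove''; your argument is the canonical one that justifies it.
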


\noindent
Thus, an upper bound for the share of an arbitrary codeword provides a lower bound for the size of the code.
We have a similar lemma for shares and densities in some infinite graphs which we will discuss in Section \ref{section: infinite grids}.

\subsection*{Related concepts}\label{section: Related concepts}
Besides identifying and locating-dominating codes, local identifying and local locating-dominating codes resemble also local resolving sets as we have mentioned. Recently, a new variant of local resolving sets, \textit{nonlocal resolving sets}, was introduced in \cite{klavvzar2023nonlocal}. While local resolving sets distinguish adjacent vertices, nonlocal resolving sets are their dual concept and can distinguish any non-adjacent pair of vertices. 

\medskip
It is possible to define local identifying codes using list-colouring. Let colours be some positive integers. Let $c$ be a function giving a list of colours for each vertex in $V(G)$ for some graph $G$. We do not restrict the maximum length of a list connected to a vertex. However, we give the following two restrictions for the list-colouring. If $d(u,v)=2$ for $u,v\in V(G)$, then $c(u)\cap c(v)\neq \emptyset$, that is, vertices at distance exactly two must share a colour in their list of colours. Secondly, if $d(u,v)=1$ for $u,v\in V(G)$, then $c(u)\cap c(v)= \emptyset$. We call this \textit{locality colouring}. Since there are $(n^2-n)/2$ distinct pairs of vertices in an $n$-vertex graph, there always exists a locality colouring with $(n^2-n)/2$ colours.

Let us consider a dominating set $S$ in $G$ together with the following property: For any pair of vertices $u,v\in V(G)$ with $c(u)\cap  c(v)=\emptyset$ we have $I(u)\neq I(v)$. Observe that set $S$ is a local identifying code. Indeed, it is dominating and it separates any adjacent vertices. Let us then consider a local identifying code $C$ in $G$. Observe that if $c(u)\cap c(v)=\emptyset$, then $u$ and $v$ are either adjacent and $I(u)\neq I(v)$ or $d(u,v)\geq3$ and again $I(u)\neq I(v)$ since $C$ is dominating. Thus, we could have defined local identifying codes also using list-colourings. Moreover, colouring related separation/location problems have been considered in the literature, for example, in \cite{chakraborty2023new, chartrand2002locating, esperet2012locally}.

In \cite{dev2022red, ranjan2022red}, a concept called  \textit{red-blue  separation} was introduced. We next show a connection between this concept and local identification. In red-blue separation, each vertex of graph $G$ is assigned either red or blue colour. After that a set of vertices $S$ is a red-blue separating set if for any two vertices $u,v\in V(G)$ we have $I(v)=I(u)$ only when $v$ and $u$ have been assigned the same colour. Notice that domination was not required here.

Consider a bipartite graph $G$ with bipartition of vertices to sets $A$ and $B$. Let colouring $c$ be such that we assign colour $1$ (red) to each vertex in $A$ and colour $2$ (blue) to each vertex in $B$. Notice that any adjacent vertices share no colours while any vertices at distance two share a colour. Thus, this colouring is a locality colouring. Therefore, a dominating set of graph $G$ is also a red-blue separating set together with colouring $c$ if and only if it is a local identifying code. Hence, these concepts are closely related.
Moreover, perhaps it would be interesting to consider local separating sets in the future, that is, local identifying codes without the domination property.

\subsection*{Structure of the paper}\label{section: Structure}
First in Section \ref{section: binary hypercubes}, we study local identifying codes in binary hypercubes. In Subsection \ref{subsec:small n}, we give some exact solutions for local identifying codes in small hypercubes. Then, in Subsection \ref{subsec:hammingLowBound}, we give a general and asymptotically tight lower bound for local identifying codes in hypercubes. After that, in Subsection \ref{subsec:upper bounds}, we give general methods for constructing local identifying codes in hypercubes. In particular, these constructions show that our lower bound is essentially tight and the size of optimal local identifying codes is significantly smaller than that of usual identifying codes in hypercubes.

In Section \ref{section: infinite grids}, we consider local locating-dominating and local identifying codes in infinite square, hexagonal (Subsection \ref{subsec:square and hex}), triangular (Subsection \ref{subsec:triangle}) and king grids (Subsection \ref{subsec:king}). In particular, we give an optimal construction for seven out of eight of these cases. Finally, we conclude with Section \ref{sec:conclusions}.

\section{Local identifying codes in binary hypercubes} \label{section: binary hypercubes}

Let us denote by $\F = \{0,1\}$ the binary field and let $n \geq 1$ be an integer.
The set of length $n$ binary words is denoted by $\F^n$ as usual.
The \emph{Hamming distance} $d_H(\bx,\by)$ of two binary words $\bx,\by \in \F^n$ is the number of coordinates in which they differ.
The \emph{binary $n$-dimensional hypercube} 
is the graph $G = (V,E)$ where $V = \F^n$ and $E = \{ \{\bx,\by \} \mid \bx,\by \in \F^n, d_H(\bx,\by) = 1 \}$, {\it i.e.}, two binary words are neighbours in the binary hypercube if and only if their Hamming distance is 1.
In fact, it is easy to see that the Hamming distance between two binary words is the same as their graphic distance in the binary hypercube.
So, from now on by $\F^n$ we mean the above graph.

\medskip
We study local identifying codes in binary hypercubes.
Let us denote by $M^L(n)$ the size of an optimal local identifying code and by $M(n)$ the size of an optimal identifying code in the binary $n$-dimensional hypercube.
Moreover, we denote by $M^{LD}(n)$ and $K(n)$ the sizes of optimal locating-dominating codes and optimal covering codes, respectively, in the binary $n$-dimensional hypercube.
Even though there has been much research concerning identifying codes in  binary hypercubes, the exact value of $M(n)$ is known only for $2 \leq n \leq 7$.
In Table \ref{optimal1} we have listed the known values of $M(n)$, $M^{LD}(n)$ and $K(n)$ and our contributions to the values $M^L(n)$ for $n \in [2,10]$.
Note that the numbers $M(1)$ and $M^L(1)$ are not defined since there are no identifying or local identifying codes in the binary 1-dimensional hypercube $\F$.
Note also that since binary hypercubes are triangle-free, the local locating-dominating codes are exactly the covering codes in binary hypercubes by Lemma \ref{triangle-free lemma}.

We start by determining the exact values of $M^L(n)$ for small $n$.
Then we prove a general lower bound for $M^L(n)$ and an upper bound by a linear code construction.
It turns out that this construction shows that our lower bound cannot be significantly improved since for infinitely many $n$, it yields a code whose size is very close to the lower bound (and actually to the lower bound of covering codes).
Consequently, this implies that for infinitely many $n$ the size of an optimal local identifying code is significantly smaller than the size of an optimal identifying code in the binary $n$-dimensional hypercube.
However, this is not the case in every graph as we will see in Table \ref{known bounds} for triangular grid.

\begin{table}[ht]
    \caption{Known values of $M(n), M^{LD}(n)$ and $K(n)$  and our contributions concerning the values of $M^L(n)$ for $n \in [2,10]$. Keys to the table:  (A) \cite{Charon2}, (B) \cite{honkala2004locating}, (C) \cite{Karpovsky}, (D) \cite[Appendix]{ranto2007identifying}, (E) \cite{junnila2022improved}, (F) \cite{Blass}, (G) \cite{exoo2008new}, (H) \cite{coveringcodes}, (I) \cite{stanton1968covering}, (J) \cite{wille1996new}, (K) \cite{ostergard2005unidirectional}. Left key is for the lower bound and right key for the upper bound. When lower and upper bounds are from the same source, the key is placed only on the right side.}
    \label{optimal1}
    \small
    \centering
    \begin{tabular}{|c||r|r|r|r|r|}
        \hline
        $n$ & $M(n)$ & $M^L(n)$ & $M^{LD}(n)$ & $K(n)$ \\ \hline
         2  & 3 (C)  & 2 & 2 (B) & 2 (H)
         \\ \hline
         3  &  4 (C) &  4 & 4 (B) & 2 (H) \\ \hline
         4  &  7 (F) &  6 & 6 (B) & 4 (H) \\ \hline
         5  &  10 (C) &  8 & 10 (B) & 7 (H)  \\ \hline
         6  & (G) 19 (F)  & 12 -- 16 & 16 -- 18 (B) & (I) 12 (H) \\ \hline
         7  &  32 (F)   & 21 -- 28 & 28 -- 32 (B) & 16 (H) \\ \hline
         8  & (C) 56 -- 61 (A) & 35 -- 48 & (B) 50 -- 61 (D) & 32 (H) \\ \hline
         9  & (C) 101 -- 112 (A) & 62 -- 64 & (B) 91 -- 112 (A) & (K) 62 (J)  \\ \hline
         10  & (C) 183 -- 208 (A)  & 110 -- 128 & (E) 171 -- 208 (A) & 107 -- 120 (K)\\ \hline
    \end{tabular}
\end{table}

\subsection{Small $n$}\label{subsec:small n}

The following example shows that $M^L(2) = 2$ which is strictly smaller than $M(2) = 3$.

\begin{example}
    Let us show that $M^L(2) = 2$.
    First, $M^L(2) \geq 2$ since any local identifying code is a covering code and one cannot cover all the vertices of $\F^2$ with only one codeword.
    However, the code $C = \{00,11\}$ is a local identifying code and thus $M^L(2) \leq 2$.
\end{example}

\noindent
In $\F^3$ the classes of local identifying and identifying codes are the same:

\begin{theorem} \label{optimal3D}
    A code $C \subseteq \F^3$ is an identifying code if and only if it is a local identifying code.
    Thus, $M^L(3) = M(3) = 4$.
\end{theorem}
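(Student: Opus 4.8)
The plan is to establish the nontrivial direction: every local identifying code $C \subseteq \F^3$ is in fact an identifying code. The converse is immediate since identifying codes are local identifying codes in any graph (and in particular the value $M(3)=4$ is known, so once equivalence is shown we get $M^L(3)=4$). So suppose $C$ is a local identifying code in $\F^3$; by definition $C$ is a covering code and separates every pair of adjacent vertices. We must show $C$ separates every pair of distinct vertices. The only pairs left to check are those at Hamming distance $2$ and at distance $3$ (the two antipodal words).

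First I would dispose of the distance-$3$ case. If $\bx,\by \in \F^3$ with $d_H(\bx,\by)=3$, then $N[\bx] \cap N[\by] = \emptyset$ (a word within distance $1$ of $\bx$ has weight-distance to $\by$ at least $2$), so $I_C(\bx) \subseteq N[\bx]$ and $I_C(\by) \subseteq N[\by]$ are disjoint; since $C$ covers both, both $I$-sets are nonempty, hence distinct. The bulk of the argument is the distance-$2$ case. Let $\bx,\by$ be at distance $2$; then $|N[\bx] \cap N[\by]| = 2$, say $N[\bx]\cap N[\by] = \{\bz_1,\bz_2\}$, where $\bz_1,\bz_2$ are the two common neighbours. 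Note $\bz_1$ and $\bz_2$ are themselves at distance $2$ from each other, and together with $\bx,\by$ they form a $4$-cycle. Assume for contradiction that $I_C(\bx)=I_C(\by)$. Then no codeword lies in $N[\bx]\setminus N[\by]$ or in $N[\by]\setminus N[\bx]$; in particular $\bx,\by \notin C$ (each would be in its own neighbourhood only), and the unique vertices at distance $1$ from exactly one of $\bx,\by$ are not codewords. Hence all codewords covering $\bx$ (equivalently $\by$) lie in $\{\bz_1,\bz_2\}$, and since $C$ is covering, at least one $\bz_i \in C$.

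Now I would use the adjacency-separation hypothesis to derive a contradiction. The vertex $\bx$ is adjacent to $\bz_1$ and to $\bz_2$. Consider $I_C(\bx)$ versus $I_C(\bz_1)$: since $C$ separates the neighbours $\bx$ and $\bz_1$, these differ. Working out $N[\bz_1]$ explicitly — $\bz_1$ is adjacent to $\bx$, to $\by$, and to one further vertex $\bw_1$ which is antipodal to $\bz_2$ — and likewise for $\bz_2$ with its extra neighbour $\bw_2$, one gets concrete constraints. The key point is that $\bx,\by$ having equal $I$-sets forces $I_C(\bx)=I_C(\by) \subseteq \{\bz_1,\bz_2\}$, so in particular $\bx \notin C$ and $\by \notin C$, while $\bz_1$ or $\bz_2$ (say $\bz_1$) is a codeword; then examining the pair $\bx,\bz_1$ and the pair $\by,\bz_1$, and noting that the only place their $I$-sets can differ is at $\bz_1$ itself and at the private neighbour $\bw_1$ of $\bz_1$, I can pin down whether $\bw_1 \in C$, and similarly iterate through $\bz_2,\bw_2$. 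A short finite case analysis on which of $\bz_1,\bz_2,\bw_1,\bw_2$ lie in $C$ (there are only a handful of possibilities, exploiting the symmetry of $\F^3$ under coordinate permutations and complementations, which acts transitively on such configurations) will show every case either violates covering or violates separation of some adjacent pair.

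The main obstacle I anticipate is the distance-$2$ case bookkeeping: one must correctly identify, inside the small graph $\F^3$, exactly which vertices are the "private" neighbours and track how the constraint $I_C(\bx)=I_C(\by)$ propagates. The cleanest route is probably to fix $\bx=000$, $\by=011$ (using the vertex-transitivity and symmetry of the hypercube to reduce to this representative), write down $N[000]$ and $N[011]$ and their two common neighbours $\bz_1=001,\bz_2=010$, enumerate the $\le 16$ sub-cases for $C\cap\{000,001,010,011,100,111\}$ consistent with $I_C(000)=I_C(011)$, and check each against the two requirements. Because $\F^3$ is so small, this is genuinely a finite check rather than an asymptotic argument, so the difficulty is purely one of organizing it without gaps rather than any conceptual hurdle.
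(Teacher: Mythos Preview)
Your plan is correct and follows the same skeleton as the paper: the distance-$3$ case is disposed of by disjointness of neighbourhoods, and the distance-$2$ case is handled by fixing a representative pair via symmetry, observing that $I_C(\bx)=I_C(\by)$ forces $C\cap(N[\bx]\,\Delta\,N[\by])=\emptyset$, hence $C$ is contained in the four-element set $C'=\F^3\setminus(N[\bx]\,\Delta\,N[\by])=\{\bz_1,\bz_2,\bw_1,\bw_2\}$, and then deriving a contradiction. (One small slip: where you write that $I_C(\bx)$ and $I_C(\bz_1)$ can differ ``at $\bz_1$ itself and at $\bw_1$'', the relevant vertices in the symmetric difference, after discarding those already known not to be in $C$, are $\bz_2$ and $\bw_1$, not $\bz_1$.)

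The difference is in how the contradiction is extracted. You propose a finite case analysis over subsets of $\{\bz_1,\bz_2,\bw_1,\bw_2\}$. The paper instead uses a one-line monotonicity observation: any superset of a local identifying code is again a local identifying code, so $C'$ itself would have to be local identifying; it then simply exhibits a single adjacent pair (namely $\bz_2$ and $\bw_1$, in the paper's coordinates $010$ and $011$) with $I_{C'}(\bz_2)=I_{C'}(\bw_1)$, an immediate contradiction. This collapses your planned enumeration into one check and avoids all bookkeeping. Your route works too, but the superset trick is the cleaner way to finish.
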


\begin{proof}
    By definition, any identifying code is also a local identifying code.

    For the converse direction assume on the contrary that there exists a local identifying code $C \subseteq \F^3$ which is not an identifying code.
    Notice that $C$ is a covering code.
    There exist $\bx,\by \in \F^3$ such that $I_{C}(\bx) = I_{C}(\by)$.
    Because $C$ is a local identifying code, $\bx$ and $\by$ cannot be neighbours and hence $d(\bx,\by) \geq 2$.
    However, we cannot have $d(\bx,\by) = 3$ since then $\bx$ and $\by$ could not cover a common codeword and hence they could not have equal non-empty $I$-sets.
    Thus, $d(\bx,\by)=2$.
    Without loss of generality we may assume that $\bx = 000$ and $\by = 110$.
    By the assumption that the $I$-sets of $\bx$ and $\by$ are equal, we conclude that
    the symmetric difference $N[\bx] \Delta N[\by] = \{000,001,110,111\}$ of their neighbourhoods is a subset of $\F^3 \setminus C$
    and hence
    $C \subseteq \F^3 \setminus N[\bx] \Delta N[\by] = \{100,010,011,101\} = C'$.
    As a superset of a local identifying code $C$, also the code $C'$ is a local identifying code.
    However, $I_{C'}(010) = \{010,011 \} = I_{C'}(011)$.
    Since $010$ and $011$ are neighbours, this means that $C'$ is not a local identifying code which is a contradiction.
    Thus, $C$ has to be also an identifying code.
\end{proof}

\subsection{A lower bound}\label{subsec:hammingLowBound}

By proving an upper bound for the share of an arbitrary codeword of an arbitrary local identifying code, we prove the following theorem which provides a lower bound for $M^L(n)$ for $n\geq3$.

\begin{theorem} \label{alaraja}
    For every $n\geq3$
    $$
    M^L(n) \geq \frac{3 \cdot 2^n}{3n-2}.
    $$
\end{theorem}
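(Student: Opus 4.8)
The plan is to use Lemma~\ref{share finite} together with an upper bound on the share of an arbitrary codeword $\bc$ of a local identifying code $C \subseteq \F^n$. Since the hypercube is vertex-transitive, it suffices to bound $s(\bc) = \sum_{\bu \in N[\bc]} 1/|I_C(\bu)|$ in terms of the data visible in $N[\bc]$. The vertex $\bc$ itself contributes $1/|I_C(\bc)| \le 1$; the point is to show that the $n$ neighbours of $\bc$ collectively contribute much less than $n$, because the local separation condition forces many of them to have $I$-sets of size at least $2$.

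First I would set up notation: write $N(\bc) = \{\bc^{(1)}, \dots, \bc^{(n)}\}$ for the neighbours of $\bc$ (flipping coordinate $i$), and let $a = |\{i : |I_C(\bc^{(i)})| = 1\}|$ be the number of neighbours whose only codeword in their closed neighbourhood is $\bc$ itself. Each such neighbour $\bc^{(i)}$ has $I_C(\bc^{(i)}) = \{\bc\}$. The key observation is that the local identifying (separation) condition must distinguish $\bc$ from each neighbour $\bc^{(i)}$, and it must distinguish neighbours $\bc^{(i)}$ and $\bc^{(j)}$ whenever they are themselves adjacent in $\F^n$ — but $\bc^{(i)}$ and $\bc^{(j)}$ are at Hamming distance $2$, so they are not adjacent. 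The real leverage comes from separating each $\bc^{(i)}$ from $\bc$: if $I_C(\bc^{(i)}) = \{\bc\}$, then we need $I_C(\bc) \neq \{\bc\}$, so $\bc$ has at least one codeword neighbour; more importantly, I would count how many of the $\bc^{(i)}$ can simultaneously have singleton $I$-set $\{\bc\}$. If $\bc^{(i)}$ and $\bc^{(j)}$ both have $I$-set $\{\bc\}$ with $i \ne j$, consider the vertex $\bw$ obtained from $\bc$ by flipping both coordinates $i$ and $j$; it is adjacent to both $\bc^{(i)}$ and $\bc^{(j)}$, and one checks that among $\bc^{(i)}, \bc^{(j)}, \bw$ the separation constraints (each pair that is adjacent must be separated) plus the covering constraint on $\bw$ force at least one codeword ``near'' this configuration, which caps how many indices $i$ can be ``light''. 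The standard outcome of such a case analysis in this setting is that the $a$ light neighbours contribute $a$ to the share but force enough codewords among the remaining structure that $s(\bc) \le 1 + a/2 + (n-a)/?$ collapses; more cleanly, I expect the bound to come out as $s(\bc) \le \tfrac{3n-2}{3}$, i.e. $\alpha = (3n-2)/3$, after which Lemma~\ref{share finite} with $|V| = 2^n$ gives exactly $M^L(n) \ge \frac{3 \cdot 2^n}{3n-2}$.

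Concretely, the accounting I would aim for: $\bc$ contributes at most $1$; each neighbour with $|I_C| \ge 2$ contributes at most $1/2$; and the neighbours with $|I_C| = 1$ (necessarily equal to $\{\bc\}$) are the troublesome ones contributing $1$ each, but I will show their number is at most roughly $n/3$ by a local argument, and moreover their presence forces $\bc$'s own $I$-set and the $I$-sets of other neighbours to be larger, recouping the deficit. Balancing ``$1$ per light neighbour, but at most $n/3$ of them, and they drag down everything else'' against ``$1/2$ per heavy neighbour'' should yield the target $(3n-2)/3$. The cleanest version is probably: $s(\bc) = \sum_{\bu \in N[\bc]} 1/|I_C(\bu)| \le 1 + \sum_{i=1}^n 1/|I_C(\bc^{(i)})|$ and then show $\sum_{i} 1/|I_C(\bc^{(i)})| \le \tfrac{(3n-2)}{3} - 1 = \tfrac{3n-5}{3} = n - \tfrac{2n-5}{3}$, i.e. the neighbours save at least $(2n-5)/3$ off the trivial bound of $n$ — two savings of $1/2$ being ``worth'' one light neighbour pinned down, which matches the factor $3$.

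The main obstacle will be the combinatorial case analysis establishing the bound on light neighbours and the compensating enlargement of nearby $I$-sets — i.e. turning the informal ``at most $n/3$ of the $\bc^{(i)}$ can have $I$-set $\{\bc\}$, and otherwise things are forced to be bigger'' into a clean inequality valid for all $n \ge 3$, including carefully handling the case $|I_C(\bc)| = 1$ (which would, by the separation of $\bc$ from its light neighbours, be impossible unless no neighbour is light, a useful dichotomy). I would split into the case $\bc$ has a codeword neighbour and the case it does not, handle the base behaviour at $n = 3$ directly (consistent with Theorem~\ref{optimal3D} giving $M^L(3) = 4 = \lceil 3\cdot 8/7\rceil$), and then run the weight/share count uniformly.
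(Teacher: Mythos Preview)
Your overall framework is correct and matches the paper: bound the share $s(\bc)$ of an arbitrary codeword by $\tfrac{3n-2}{3}$ and invoke Lemma~\ref{share finite}, splitting into the cases $I(\bc)=\{\bc\}$ versus $\bc$ having a codeword neighbour, and treating $n=3$ separately via Theorem~\ref{optimal3D}. The gap is in your analysis of the second case.

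You propose to show that at most roughly $n/3$ of the neighbours $\bc^{(i)}$ can be ``light'' (i.e.\ have $I_C(\bc^{(i)})=\{\bc\}$). This is false. Take $\bc\in C$ with exactly one codeword neighbour $\bc+\be_k$, and suppose the third codeword forced by separating $\bc$ from $\bc+\be_k$ lies at $\bc+\be_k+\be_l$ for a single index $l\neq k$. Then every other neighbour $\bc+\be_j$ with $j\notin\{k,l\}$ can have $I_C(\bc+\be_j)=\{\bc\}$; that is $n-2$ light neighbours, not $\approx n/3$. Your digression about the vertex $\bw=\bc+\be_i+\be_j$ does not help here: $\bw$ is at Hamming distance~$2$ from $\bc^{(i)}$ and $\bc^{(j)}$? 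No, $\bw$ is adjacent to both, but $\bc^{(i)}$ and $\bc^{(j)}$ are not adjacent to each other, so local identification imposes no separation constraint between them, and the constraints involving $\bw$ do not cap the number of light neighbours of $\bc$.

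The paper's argument is much simpler and does not bound the number of light neighbours at all. If $\bc$ has a codeword neighbour $\bc+\be_k$, then separating $\bc$ from $\bc+\be_k$ forces $|I(\bc)|\ge 3$ or $|I(\bc+\be_k)|\ge 3$. In either event the extra codeword is of the form $\bc+\be_l$ (first case) or $\bc+\be_k+\be_l$ (second case) for some $l\neq k$, and in both cases this extra codeword lies in $N[\bc+\be_l]$, giving $|I(\bc+\be_l)|\ge 2$. Hence
\[
s(\bc)\ \le\ \tfrac{1}{3}+\tfrac{1}{2}+\tfrac{1}{2}+(n-2)\cdot 1\ =\ \tfrac{3n-2}{3},
\]
with the remaining $n-2$ neighbours allowed to be light. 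In the case $I(\bc)=\{\bc\}$ every neighbour satisfies $|I(\bc^{(i)})|\ge 2$ (else $I(\bc^{(i)})=\{\bc\}=I(\bc)$), so $s(\bc)\le 1+\tfrac{n}{2}\le\tfrac{3n-2}{3}$ for $n\ge 4$. The whole thing is two lines once you see that the third codeword in $I(\bc)\cup I(\bc+\be_k)$ is automatically adjacent to some $\bc+\be_l$.
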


\begin{proof}
For $n=3$ the claim follows from Theorem \ref{optimal3D}.
So, let us assume then that $n \geq 4$.
Let $C \subseteq \F^n$ be a local identifying code and let $\bc \in C$ be an arbitrary codeword. We show that
$$
s(\bc) \leq \frac{3n-2}{3}
$$
which together with Lemma \ref{share finite}
gives the claim.

\medskip
    Since $C$ is a local identifying code, it separates $\bc$ from its neighbours $\bc + \be_1, \ldots , \bc + \be_n$ where $\be_j$ is the binary word of weight one whose $j$th component is 1.
    If none of the points $\bc + \be_j$ is a codeword, {\it i.e.}, if $I(\bc) = \{ \bc \}$, then each of them is covered by at least two codewords and hence
    \begin{equation}\label{EqShareIndependentIDcode}
    s(\bc) = \frac{1}{|I(\bc)|} + \sum_{j=1}^n \frac{1}{|I(\bc + \be_j)|} \leq 1 + \frac{n}{2}.
    \end{equation}

    Let us then assume that $\bc + \be_k \in C$ for some $k \in \{ 1, \ldots , n \}$. In order to separate the neighbours $\bc$ and $\bc + \be_k$, the code $C$ has to cover at least one of them by an extra codeword, {\it i.e.}, $|I(\bc)| \geq 3$ or $|I(\bc + \be_k)| \geq 3$.
    In both cases $|I(\bc + \be_l)| \geq 2$ for some $l \in \{ 1, \ldots , n \} \setminus \{ k \}$. Thus,
    $$
    s(\bc) \leq \frac{1}{3} + 2 \cdot \frac{1}{2} + (n-2) \cdot 1 = \frac{3n-2}{3}.
    $$

    For $n \geq 4$ we have $1 + \frac{n}{2} \leq \frac{3n-2}{3}$.
    The claim follows.
\end{proof}

We can utilize the concept of share also in proofs when the value of $n$ is specified as we will see in the following proof.
In the proof of the following theorem we mean by the weight of $\bx$ the number of its coordinates that have symbol 1.

\begin{theorem}
    $$
    M^L(4) = 6.
    $$
\end{theorem}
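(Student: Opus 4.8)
The plan is to establish the two inequalities $M^L(4) \ge 6$ and $M^L(4) \le 6$ separately. The upper bound is the easy direction: I would simply exhibit an explicit $6$-element code $C \subseteq \F^4$ and verify directly that it is a covering code and that it separates every pair of neighbours. A natural candidate is something like a coset-structured set, or one can take the optimal locating-dominating code of size $6$ listed in Table~\ref{optimal1} (from key (B)) and check it happens to be local identifying, or adapt a small covering code. Since $K(4) = 4$, a $6$-element code has considerable slack, so verifying domination and local separation is a finite and routine check over $16$ vertices. I would present the code explicitly and note the verification is straightforward.

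For the lower bound, the general bound of Theorem~\ref{alaraja} only gives $M^L(4) \ge 3 \cdot 16 / 10 = 4.8$, i.e. $M^L(4) \ge 5$, which is not enough. So the main work is ruling out $M^L(4) = 5$. Here I would sharpen the share argument from the proof of Theorem~\ref{alaraja} using the specific structure of $\F^4$. Suppose $C$ is a local identifying code with $|C| = 5$. By Lemma~\ref{share finite}, if every codeword had share at most $16/5 = 3.2$ we'd get $|C| \ge 5$, so to get a contradiction I need to show that in fact some codeword must have share strictly less than $16/5$, or more precisely that the shares cannot all be large enough to sum to $16$; note $\sum_{\bc \in C} s(\bc) = \sum_{\bu \in V} 1 = 16$ always, so with $5$ codewords the average share is exactly $3.2$. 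The contradiction must come from showing that not all five shares can reach $3.2$ simultaneously, using the combinatorial constraints: the distribution of $|I(\bu)|$ values over the $16$ vertices, the fact that codewords and their neighbourhoods interact, and weight/parity arguments (which is why the theorem's statement phase mentions "the weight of $\bx$").

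Concretely, the key steps for the lower bound would be: (1) a $5$-element covering code of $\F^4$ must be fairly tight — analyze how five closed neighbourhoods (each of size $5$) can cover $16$ vertices, so the total coverage $25$ leaves only $9$ units of overlap; (2) use a weight/parity partition of $\F^4$ (even-weight vs odd-weight vertices, $8$ each, forming the bipartition since $\F^4$ is bipartite and triangle-free) to get a handle on which vertices can be doubly covered; (3) combine with the local-separation requirement — each codeword $\bc$ with a codeword neighbour forces an extra codeword in $N[\bc] \cup N[\bc + \be_k]$, and each codeword $\bc$ with $I(\bc) = \{\bc\}$ forces all $n=4$ neighbours doubly covered, which is expensive given only $9$ overlap units; (4) derive that the share constraints are incompatible with $|C| = 5$, perhaps by a careful case analysis on how many codewords are isolated in $C$ versus have codeword neighbours. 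The hard part will be organizing this case analysis cleanly: bounding the number of overlap units consumed in each configuration and showing every configuration either fails to cover $\F^4$, fails local separation, or forces a sixth codeword. I expect one can streamline it by first arguing $|I(\bc)| \le 3$ for all $\bc$ in a size-$5$ code (else coverage/overlap budget is violated), then showing the resulting rigid structure — five codewords, each covered once or twice, total overlap exactly accounted for — cannot separate all neighbouring pairs.
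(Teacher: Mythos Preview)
Your upper-bound approach matches the paper's exactly: exhibit an explicit $6$-element code and verify. The paper uses $C = \{0000,0100,0010,0111,1111,1101\}$.

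For the lower bound, your plan is viable but less direct than the paper's. The paper also splits into cases, but on whether some codeword $\bc$ has $|I(\bc)| \geq 3$, and the key simplification you are missing is in the complementary case. If \emph{no} codeword has $|I(\bc)| \geq 3$, then in fact every codeword has $|I(\bc)| = 1$: indeed $|I(\bc)| = 2$ would force the unique codeword neighbour $\bc'$ to satisfy $|I(\bc')| \geq 3$ in order to separate $\bc$ from $\bc'$. With every codeword isolated, the share bound already established in Equation~(\ref{EqShareIndependentIDcode}) gives $s(\bc) \leq 1 + 4/2 = 3$ for every $\bc$, hence $|C| \geq 16/3 > 5$ in one line --- no overlap-budget bookkeeping needed. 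The remaining case, where some $|I(\bc)| \geq 3$, is handled by a short direct placement argument: fix $\bc = \mathbf{0}$ with two codeword neighbours, split on whether $\mathbf{1} \in C$, and use the weight stratification (which you correctly anticipated) to see that the remaining one or two codewords cannot simultaneously cover the far side of the cube and separate all adjacent pairs. Your overlap-budget route would eventually rediscover this structure, but the dichotomy ``$|I(\bc)| = 1$ for all $\bc$, or $|I(\bc)| \geq 3$ for some $\bc$'' collapses most of the casework you were bracing for.
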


\begin{proof}
    It is straight-forward to verify that the code
    $$
    C = \{ 0000,0100,0010,0111,1111,1101\}
    $$
    is a local identifying code in the binary hypercube $\F^4$.
    Thus, $M^L(4) \leq 6$.

\medskip
    Let us show that $M^L(4) \geq 6$.
    Assume the contrary that $M^L(4) < 6$.
    Then there exists a local identifying code $C \subseteq \F^4$ with five codewords.
    We split the proof into two cases based on whether there exists a codeword $\bc$ with $|I(\bc)|\geq3$.

\medskip
    \textbf{Case 1.} Assume first that there exists a codeword $\bc\in C$ with $|I(\bc)|\geq3$ and without loss of generality that $\bc=\mathbf{0}$ and $\{1000,0100\}\subseteq I(\bc)$. Consider next the subcase with $\mathbf{1}\in C$. Let us denote the remaining codeword with $\bc_5$.  The only vertex which is not covered by $C$ at this point is $0011$. If weight of $\bc_5$ is three, then $I(\bc_5)=I(\mathbf{1})$. If $\bc_5$ covers $0011$ and its weight is at most two, then $I(\mathbf{1})=I(1110)$, a contradiction. Thus, we may assume that $\mathbf{1}\not\in C$.

    Since $\mathbf{1}\not\in C$, we require at least one weight three codeword $\bc_4\in C$ to cover $\mathbf{1}$. However, two codewords (other than $\mathbf{1}$) can cover all four weight three codewords only if their weight is two, a contradiction.

\medskip
    \textbf{Case 2.} Assume next that there does not exist a codeword $\bc\in C$ with $|I(\bc)|\geq3$.
    If $I(\bc) = \{\bc,\bc'\}$ for some $\bc$, then $|I(\bc')| \geq 3$.
    Hence, we have $|I(\bc)|=1$ for each codeword $\bc\in C$. In this case, we consider share. As we have counted in Equation (\ref{EqShareIndependentIDcode}), we have $s(\bc)\leq n/2+1=3$ for each $\bc\in C$. Thus, $|C|\geq 2^4/3=5\frac{1}{3}>5$. Therefore, we again have a contradiction and $M^L(4)=6$.\end{proof}

\noindent
In the following subsection we will see that there exist infinitely many $n$ for which there exists a local identifying code in $\F^n$ such that its size is arbitrarily close to the obtained lower bound.
%
This means that for infinitely many $n$ the size of an optimal identifying code in the binary $n$-dimensional hypercube is approximately at least two times larger than the size of an optimal local identifying code.

\subsection{Upper bounds}\label{subsec:upper bounds}

The \emph{direct sum} of two codes $C_1 \subseteq \F^n$ and $C_2 \subseteq \F^m$ is the code
$$
C_1 \oplus C_2 = \{ (\bc_1,\bc_2)  \mid \bc_1 \in C_1, \bc_2 \in C_2 \} \subseteq \F^{n+m}.
$$

\begin{lemma}\label{Lem_dimensionKasvatus}
    Let $C \subseteq \F^n$ be a local identifying code. Then the code $C' = \F \oplus C \subseteq \F^{n+1}$ is a local identifying code if and only if $|I(\bc)| \geq 2$ for every $\bc \in C$.
\end{lemma}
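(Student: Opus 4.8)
The statement is an ``if and only if'', so I would prove each direction separately, and the structure is forced by the definition: $C'=\F\oplus C$ is a local identifying code iff it is a covering code and it separates every adjacent pair in $\F^{n+1}$. First I would record the elementary fact that $C'$ is automatically a covering code of $\F^{n+1}$: any $(a,\bx)\in\F^{n+1}$ has $(a,\bc)\in C'$ at distance $d_H(\bx,\bc)$ whenever $\bc\in I_C(\bx)\neq\emptyset$, using that $C$ covers $\F^n$. Then I would note that for a vertex $(a,\bx)\in\F^{n+1}$ the $I$-set splits cleanly as $I_{C'}((a,\bx)) = (\{a\}\times I_C(\bx)) \cup (\{a+1\}\times(I_C(\bx)\cap(\bx+C)))$; more usefully, writing $I_{C'}((a,\bx))$ in terms of $I_C$, its first-coordinate-$a$ part is exactly $\{a\}\times I_C(\bx)$ and its first-coordinate-$(a+1)$ part is $\{a+1\}\times(\{\bc\in C: d_H(\bx,\bc)=0\})$, i.e.\ it contains the ``flipped'' codeword $(a+1,\bx)$ precisely when $\bx\in C$. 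I would set up this bookkeeping once and then treat the two types of adjacent pairs.

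The adjacent pairs in $\F^{n+1}$ are of two kinds: (i) $(a,\bx)$ and $(a+1,\bx)$ (differing in the new coordinate), and (ii) $(a,\bx)$ and $(a,\by)$ with $\bx,\by$ adjacent in $\F^n$. For pairs of type (ii), separation in $\F^{n+1}$ is equivalent to separation of $\bx,\by$ in $\F^n$ by $C$, which holds since $C$ is local identifying; so these pairs are never a problem and impose no condition. The whole content is in pairs of type (i). Here $I_{C'}((a,\bx))$ and $I_{C'}((a+1,\bx))$ are reflections of each other across the new coordinate, intersected appropriately; I would compute that $I_{C'}((a,\bx))\neq I_{C'}((a+1,\bx))$ holds exactly when $|I_C(\bx)|\geq 2$. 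Concretely: if $\bx\notin C$ then both $I$-sets have all their codewords with first coordinate $a$ resp.\ $a+1$ only when... — more carefully, $I_{C'}((a,\bx)) = \{a\}\times I_C(\bx)$ if $\bx\notin C$, and $\{a,a+1\}\times\{\bx\} \cup \{a\}\times(I_C(\bx)\setminus\{\bx\})$ if $\bx\in C$; comparing with the same for $a+1$, the two are equal iff the set is invariant under flipping the first coordinate, which (since $I_C(\bx)\neq\emptyset$) happens iff $I_C(\bx)=\{\bx\}$, i.e.\ iff $|I_C(\bx)|=1$ and $\bx\in C$. So a type-(i) pair fails to be separated precisely when $\bx\in C$ with $I_C(\bx)=\{\bx\}$.

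This gives both directions at once. If $|I(\bc)|\geq 2$ for every $\bc\in C$, then no vertex $\bx$ has $I_C(\bx)=\{\bx\}$, so every type-(i) pair is separated, type-(ii) pairs are separated, and $C'$ covers everything; hence $C'$ is local identifying. Conversely, if some $\bc\in C$ has $I(\bc)=\{\bc\}$, then $(0,\bc)$ and $(1,\bc)$ both have $I$-set $\{(0,\bc),(1,\bc)\}$, so $C'$ fails to separate this adjacent pair and is not local identifying. I expect the only mildly delicate point to be getting the case split on whether $\bx\in C$ right and making sure the ``reflection'' argument for type-(i) pairs is stated cleanly; everything else is routine. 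I would therefore spend the bulk of the write-up on the explicit description of $I_{C'}$ on vertical pairs and the equivalence $I_{C'}((a,\bx))=I_{C'}((a+1,\bx)) \iff I_C(\bx)=\{\bx\}$, and keep the covering-code remark and the type-(ii) remark to one sentence each.
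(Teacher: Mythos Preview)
Your proposal is correct and follows essentially the same approach as the paper: both split adjacent pairs in $\F^{n+1}$ into those differing in the new coordinate versus those differing in an old coordinate, handle the latter using that $C$ is local identifying, and reduce the former to the condition $|I_C(\bc)|\geq 2$. Your packaging of the type-(i) analysis as the single characterization $I_{C'}((a,\bx))=I_{C'}((a+1,\bx))\iff I_C(\bx)=\{\bx\}$ is slightly slicker than the paper's separate treatment of the two implications, but the content is identical; one small caution is that for type-(ii) pairs you write ``equivalent'' when in fact you only need (and only argue) the implication from separation in $\F^n$ to separation in $\F^{n+1}$.
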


\begin{proof}
    Let us first assume that $C' = \F \oplus C$ is a local identifying code. Assume on the contrary that there exists a codeword $\bc \in C$ such that $I_C(\bc) =  \{ \bc \}$. Then $C'$ does not separate the neighbours $(0,\bc)$ and $(1,\bc)$. A contradiction.

\medskip
    For the converse direction, assume then that $|I_C(\bc)| \geq 2$ for every $\bc \in C$.
    Let $\bx' = (a,\bx) \in \F^{n+1}$ where $a \in \F$ and $\bx \in \F^n$.
    If $\bx \in C$, then $I_{C'}(\bx') = \{ \bx', (a+1,\bx) \} \cup \{ (a,\bc) \mid \bc \in I_C(\bx) \}$ and if $\bx \not \in C$, then $I_{C'}(\bx') = \{ (a,\bc) \mid \bc \in I_C(\bx) \}$.
    Thus, $I_{C'}(\bx') \neq \emptyset$ since $I_C(\bx) \neq \emptyset$ and hence $C'$ is a covering code.
    Let us then show that $C'$ separates any two neighbours.
    So, let $\bx' = (a,\bx)$ and $\by' = (b,\by)$ be neighbours. Assume first that $a=b$. Then $\bx$ and $\by$ have to be neighbours.
    Since $C$ is a local identifying code, we have $I_C(\bx) \neq I_C(\by)$ and hence also $I_{C'}(\bx') \neq I_{C'}(\by')$.
    Assume then that $a \neq b$.
    We have $\bx = \by$ since $\bx'$ and $\by'$ are neighbours.
    By the assumption  $|I_C(\bc)| \geq 2$, for every $\bc \in C$, there exists a codeword $\bc \in I_C(\bx) = I_C(\by)$ such that $\bc \neq \bx = \by$.
    Thus, $(a,\bc), (b,\bc) \in I_{C'}(\bx') \triangle I_{C'}(\by')$ and hence $I_{C'}(\bx') \neq I_{C'}(\by')$.
    So, we conclude that $C'$ separates any two neighbours and hence $C'$ is a local identifying code.
\end{proof}
In the following lemma, we use the fact that if the intersection $N[\bx] \cap N[\by] \cap N[\bz]$ of the neighbourhoods of three distinct binary words $\bx,\by,\bz$ is non-empty, then it in fact contains a unique point.
This implies, in particular, that if $|I(\bc)| \geq 3$ for some codeword $\bc$ of a code $C \subseteq \F^n$, then $I(\bc)$ is unique meaning that $I(\bc') \neq I(\bc)$ for all $\bc' \in C \setminus \{\bc\}$.

\begin{lemma} \label{lemma1}
    Let $C \subseteq \F^n$ be a code such that $|I(\bc)| \geq 3$ for every $\bc \in C$ and $|I(\bx)| \geq 1$ for every $\bx \in \F^n \setminus C$. Then $C$ is a local identifying code.
\end{lemma}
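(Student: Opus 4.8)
The plan is to verify the two defining properties of a local identifying code directly: that $C$ is a covering code, and that $C$ separates any two neighbours. The covering property is immediate from the hypothesis $|I(\bx)| \geq 1$ for every $\bx \in \F^n \setminus C$, together with the trivial fact that $\bc \in I(\bc)$ for every $\bc \in C$; so $I_C(u) \neq \emptyset$ for all $u \in \F^n$, which is exactly the characterization of covering codes noted after the definition of the $I$-set.

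For the separation property, let $\bx$ and $\by$ be neighbours in $\F^n$, i.e. $d_H(\bx,\by) = 1$, and suppose for contradiction that $I(\bx) = I(\by)$. First I would handle the case where one of them, say $\bx$, is a codeword. Then $\bx \in I(\bx) = I(\by)$, so $\bx \in N[\by]$, which is fine since $\bx$ and $\by$ are neighbours; but then also, since $|I(\bx)| \geq 3$ by hypothesis, there are at least two further codewords in $I(\bx) = I(\by) \subseteq N[\by]$. The key geometric observation the paper has already set up is that three distinct binary words have at most one common neighbour (one common point in their closed neighbourhoods). I would use this to derive a contradiction: the set $I(\bx) = I(\by)$ has size at least $3$ and is contained in $N[\bx] \cap N[\by]$; but any element of $I(\bx)$ other than $\bx$ itself lies in $N[\bx] \cap N[\by]$ while being distinct from $\bx$, and $N[\bx] \cap N[\by]$ for neighbours $\bx,\by$ consists of exactly $\bx$, $\by$, and at most one more vertex (the unique common neighbour, if any), so $|N[\bx] \cap N[\by]| \leq 3$ with equality only when that common neighbour exists. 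Pushing on this: if $\bx \in C$ then $\bx \in I(\bx)$ but $\bx \notin N[\by]$ is false — wait, $\bx \in N[\by]$ since they are adjacent — so I instead argue that $\by \in I(\by) = I(\bx) \subseteq N[\bx]$ forces $\by \in C$, and then both $\bx, \by \in C$, and the third forced element $\bz$ of $I(\bx)=I(\by)$ is the unique common neighbour; but now consider $|I(\bz)| \geq 3$: $\bz$ is adjacent to both $\bx$ and $\by$, and $\bx,\by \in I(\bz)$, forcing a further codeword adjacent to $\bz,\bx,\by$ simultaneously — impossible, since $\bx$ and $\by$ already have $\bz$ as their only common neighbour and adding one more common neighbour of all three contradicts uniqueness. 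The remaining case is $\bx,\by \notin C$: then $I(\bx) = I(\by) \subseteq N[\bx] \cap N[\by] \setminus \{\bx,\by\}$ is a set of codewords, each of which then has $\bx$ and $\by$ among its neighbours and also (since $|I(\bc)| \geq 3$) at least one more neighbour in common with $\bx$ and $\by$; but $N[\bx] \cap N[\by] \setminus \{\bx,\by\}$ has at most one element, so $|I(\bx)| \leq 1$, and if it equals the singleton $\{\bz\}$ then $|I(\bz)| \geq 3$ forces two codewords besides $\bz$ in $N[\bz] \cap \cdots$, hmm — more simply, $I(\bz) \supseteq \{\bz\}$ and $\bz$'s two neighbours $\bx,\by$ are non-codewords, so $I(\bz)$ needs two more codewords at distance $\leq 1$ from $\bz$, and these together with $\bx$ would need... the cleanest contradiction is just $|I(\bx)|=1 < 3$ contradicting nothing about $\bx$ (it's a non-codeword) — so actually the non-codeword case needs the $|I(\bc)| \geq 3$ condition on the unique codeword $\bz \in I(\bx)$: having $\bx, \by$ as non-codeword neighbours of $\bz$ each of whose $I$-set is $\{\bz\}$ means $\bz$ must separate them, so $|I(\bz)| \geq 3$ is the hypothesis, but it doesn't immediately contradict — I would instead note $I(\bz) \ni \bz$ and at least two other codewords $\bz', \bz''$, each adjacent to $\bz$; but then $I(\bx) \supseteq \{\bz\}$ only, so $\bz', \bz'' \notin N[\bx]$, fine; this forces a more careful count which I will organize around the explicit coordinate representation $\bx = \mathbf{0}$, $\by = \be_1$.

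The main obstacle I anticipate is the bookkeeping in the non-codeword case: making the "at most one common neighbour of three words" principle bite requires choosing coordinates ($\bx = \mathbf{0}$, $\by = \be_1$, common neighbour $\bz = \be_1 + \be_j$ or similar) and carefully tracking which vertices the codewords of $I(\bx)=I(\by)$ can and cannot cover, then invoking $|I(\bz)|\geq 3$ to overload $\bz$'s neighbourhood. I would streamline this by first reducing, via the size argument, to $|I(\bx)| = |I(\by)| = 1$ or $2$, then contradicting each possibility using the $|I(\bc)| \geq 3$ hypothesis applied to a codeword forced into $N[\bx] \cap N[\by]$, which has the very restricted structure described above. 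Once the neighbour-separation is established in all cases, combining it with the covering property yields that $C$ is a local identifying code, completing the proof.
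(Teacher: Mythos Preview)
Your plan is on the right track but it overlooks the single structural fact that makes this lemma nearly trivial: the binary hypercube is bipartite, hence triangle-free, so for any two neighbours $\bx,\by\in\F^n$ one has $N[\bx]\cap N[\by]=\{\bx,\by\}$ exactly. There is never a ``third common neighbour $\bz$''. Once you see this, both cases collapse: if $I(\bx)=I(\by)$ then $I(\bx)\subseteq\{\bx,\by\}$, so if either of $\bx,\by$ lies in $C$ you get $|I(\bx)|\le 2<3$, contradicting the hypothesis, and if neither lies in $C$ you get $I(\bx)=\emptyset$, contradicting covering. All of your bookkeeping about a hypothetical $\bz$ with $|I(\bz)|\ge 3$, and the deferred coordinate calculation, concerns a vacuous subcase.

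The paper's proof is organized around exactly this observation. For non-codeword neighbours it simply invokes Lemma~\ref{triangle-free lemma}: a covering code in a triangle-free graph is automatically local locating-dominating, so $I(\bx)\neq I(\by)$ for non-codeword neighbours is immediate. For the case where one of the pair is a codeword $\bc$, the paper uses the fact recorded just before the lemma---three distinct binary words have at most one common point in their closed neighbourhoods---so $|I(\bc)|\ge 3$ forces $I(\bc)$ to determine $\bc$ uniquely among \emph{all} vertices, not just neighbours. Your codeword case does eventually arrive at a contradiction via this same uniqueness principle, but circuitously; your non-codeword case is left genuinely incomplete because you are trying to rule out a configuration that cannot occur in $\F^n$.
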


\begin{proof}
    By the assumptions $C$ is a covering code and thus also a local locating-dominating code by Lemma \ref{triangle-free lemma}. It remains to show that $C$ separates any two neighbours. So, let $\bx \in \F^n$ and $\by \in \F^n$ be neighbours.
    If $\bc \in C$, then $I(\bc)$ is unique since $|I(\bc)| \geq 3$ by the assumption. Thus, if one of $\bx$ or $\by$ is a codeword of $C$, then $C$ separates them.
    Hence, let us assume that both $\bx$ and $\by$ are non-codewords. The code $C$ separates them since it is a local locating-dominating code.
\end{proof}

\noindent
The above lemma yields the following corollary.

\begin{corollary}\label{CorCoverBound}
    $$
    M^L(n+2)\leq 4K(n).
    $$
\end{corollary}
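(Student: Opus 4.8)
The plan is to take an optimal covering code $D \subseteq \F^n$ with $|D| = K(n)$ and build a code in $\F^{n+2}$ by prefixing every codeword of $D$ with all four binary strings of length $2$, that is, to set $C = \F^2 \oplus D \subseteq \F^{n+2}$. Then $|C| = 4K(n)$, so it suffices to show that $C$ is a local identifying code, and by Lemma \ref{lemma1} it is enough to verify that $|I_C(\bx)| \geq 1$ for every $\bx \in \F^{n+2}$ and $|I_C(\bc)| \geq 3$ for every $\bc \in C$.

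First I would check the covering property: for any $\bx = (\ba, \bx')$ with $\ba \in \F^2$ and $\bx' \in \F^n$, since $D$ covers $\bx'$ there is $\bd \in D$ with $d_H(\bx', \bd) \leq 1$, and then $(\ba, \bd) \in C$ is at Hamming distance at most $1$ from $\bx$, so $I_C(\bx) \neq \emptyset$. Next I would bound $|I_C(\bc)|$ from below for a codeword $\bc = (\ba, \bd)$ with $\ba \in \F^2$, $\bd \in D$. The codeword $\bc$ itself lies in $I_C(\bc)$; moreover the two words obtained from $\bc$ by flipping one of the first two coordinates, namely $(\ba + \be_1', \bd)$ and $(\ba + \be_2', \bd)$ (where the $\be_i'$ are the weight-one words of length two), are again of the form ``length-two prefix followed by an element of $D$'', hence belong to $C$, and they are neighbours of $\bc$. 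These are three distinct codewords all in $N[\bc]$, so $|I_C(\bc)| \geq 3$.

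With both hypotheses of Lemma \ref{lemma1} verified, that lemma gives that $C$ is a local identifying code in $\F^{n+2}$, and therefore $M^L(n+2) \leq |C| = 4 K(n)$, which is the claim.

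I do not expect any serious obstacle here; the only point that needs a little care is making sure the three codewords exhibited in $N[\bc]$ are genuinely distinct and genuinely lie in $C$ — this is immediate because flipping a coordinate among the first two leaves the $\F^n$-part unchanged and only alters the prefix, and every prefix in $\F^2$ is allowed in $C = \F^2 \oplus D$. Everything else is a direct application of the covering property of $D$ and of Lemma \ref{lemma1}.
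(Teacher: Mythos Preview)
Your proof is correct and follows essentially the same approach as the paper: take $C = \F^2 \oplus D$ for an optimal covering code $D$ in $\F^n$, verify that $C$ covers $\F^{n+2}$ and that every codeword has $|I_C(\bc)| \geq 3$, and conclude via Lemma~\ref{lemma1}. You give a bit more detail than the paper (explicitly exhibiting the three codewords in $N[\bc]$), but the argument is the same.
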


\begin{proof}
    Let $C'\subseteq \F^n$ be a covering code in $\F^n$ and $C=\F^2\oplus C'$. Observe that for each $\bc\in C$ we have $|I(\bc)|\geq3$. Moreover, since $C'$ is a covering  code in $\F^n$, $C$ is a covering code in $\F^{n+2}$. Thus, by Lemma \ref{lemma1}, $C$ is a local identifying code.
\end{proof}

\vspace*{-2mm}
\begin{remark}
 An observant reader may notice that Lemma \ref{lemma1} gives also another upper bound for $M^L(n+1)$. Consider a covering code $C$ such that $N(\bw)\cap C\neq \emptyset$ for each $\bw\in \F^n$. In this case, $C$ is said to be a \textit{total dominating set} and the minimum cardinality of a total dominating set in $\F^n$ is
 denoted by $\gamma^{TD}(n)$. By Lemma \ref{lemma1}, we have $M^L(n+1) \leq 2 \gamma^{TD}(n).$ However, by \cite{Verstraten, azarija2017total, goddard2018note}, we have $2K(n)=\gamma^{TD}(n+1)$. Hence, this approach yields the same upper bound as Corollary \ref{CorCoverBound}.
\end{remark}

Similar ideas as in the proof of Lemma \ref{lemma1} give also the code in the following proposition.
\begin{proposition}
We have $M^L(6)\leq 15$.
\end{proposition}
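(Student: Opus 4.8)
The statement $M^L(6) \le 15$ calls for an explicit local identifying code in $\F^6$ with $15$ codewords. The natural route, given the phrasing ``similar ideas as in the proof of Lemma~\ref{lemma1},'' is to find a code $C \subseteq \F^6$ which satisfies the hypotheses of Lemma~\ref{lemma1} as closely as possible: we want $|I(\bc)| \ge 3$ for every $\bc \in C$ (so that codeword $I$-sets are automatically unique, hence separated from everything) and $|I(\bx)| \ge 1$ for every non-codeword $\bx$ (covering). If we can achieve $|I(\bc)| \ge 3$ for all codewords and covering, Lemma~\ref{lemma1} applies verbatim. Since $K(4) = 4$ and Corollary~\ref{CorCoverBound} gives $M^L(6) \le 4K(4) = 16$, the point here is to shave off one codeword by a slightly more clever construction than the product $\F^2 \oplus C'$.

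**Key steps.** First I would look for a covering code structure in $\F^6$ of size $15$. One promising source is the punctured/shortened Hamming-type codes: the $[6,?]$ linear codes, or a near-perfect covering code. A clean candidate is to take a good covering code of $\F^6$ with covering radius $1$ — but $K(6) = 12$, so a size-$15$ covering code exists with slack, and the extra $3$ codewords can be used to push every $|I(\bc)|$ up to at least $3$. Concretely I would start from a size-$12$ optimal covering code of $\F^6$ (for instance a coset structure), identify the codewords $\bc$ with $|I(\bc)| < 3$, and add codewords to repair them, aiming to land at exactly $15$; alternatively, start from a linear or near-linear code and verify the two conditions directly. The verification that $|I(\bc)| \ge 3$ for all $15$ codewords and $I(\bx) \ne \emptyset$ for all $49$ non-codewords is a finite check (it is the sort of thing the authors would present as ``it is straightforward to verify''), and then Lemma~\ref{lemma1} immediately yields that $C$ is a local identifying code, giving $M^L(6) \le 15$.

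**Main obstacle.** The real difficulty is not the logical structure — Lemma~\ref{lemma1} does all the conceptual work — but finding the right $15$-element set: we need simultaneously that the code covers all of $\F^6$ (a $15$-word covering code, which is possible but not every $15$-word set covers) and that it is ``thick'' everywhere in the sense $|I(\bc)| \ge 3$, i.e. every codeword has at least two codeword neighbours. These two demands pull in opposite directions — covering efficiently wants codewords spread out, while the thickness condition wants codewords clustered — so the construction must balance them. I expect the cleanest presentation is to exhibit the code explicitly, perhaps as a union of a few cosets of a small linear code or built from $\F^2 \oplus C'$ with one codeword deleted and the covering repaired, and then simply assert the two finite verifications and invoke Lemma~\ref{lemma1}. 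If an elegant algebraic description is not available, listing the $15$ words and noting that the conditions can be checked directly is an acceptable fallback.
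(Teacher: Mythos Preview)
Your high-level plan—exhibit an explicit $15$-element code in $\F^6$, verify covering, and argue separation of adjacent pairs—is exactly what the paper does. However, you commit to meeting the hypotheses of Lemma~\ref{lemma1} verbatim (namely $|I(\bc)|\ge 3$ for every $\bc\in C$), and this is where you diverge from the paper. The paper's code
\[
C=\{100000,010000,110000\}\cup\{001100,001110,001101\}\cup\{000011,100011,010011\}\cup\{111110,111010,110110\}\cup\{111101,011101,101101\}
\]
is organised so that the subgraph induced on $C$ is a disjoint union of five paths on three vertices. The two \emph{endpoints} of each path have only one codeword neighbour, so $|I(\bc)|=2$ there; thus Lemma~\ref{lemma1} does \emph{not} apply directly to this code.

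The ``similar ideas'' the paper invokes are a mild relaxation of Lemma~\ref{lemma1}: once $C$ is covering, triangle-freeness already separates any two adjacent non-codewords (this is the local locating-dominating part). For a codeword $\bc$ with a codeword neighbour $\bc'$, triangle-freeness gives $\bc'\in I(\bc)\setminus I(v)$ for every non-codeword neighbour $v$ of $\bc$, so $\bc$ is separated from all its non-codeword neighbours even when $|I(\bc)|=2$. Finally, within each $3$-path $a$--$b$--$c$ the middle vertex $b$ has $|I(b)|\ge 3$ and hence a unique $I$-set, which separates $b$ from both $a$ and $c$; and $a,c$ are not adjacent, so nothing more is needed. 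In short, the structural target you should aim for is not ``every codeword has two codeword neighbours'' but the weaker ``every codeword has at least one codeword neighbour, and adjacent codewords are distinguishable''—the disjoint $3$-path structure achieves this neatly. Your stricter condition may or may not be achievable with $15$ codewords, but it is not what the paper does, and searching under it would miss the paper's construction.
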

\begin{proof}
    Proof follows from the code $C=\{100000, 010000, 110000\}\cup\{001100, 001110, 001101\}\cup\{000011,$ $ 100011, 010011\}\cup\{111110, 111010, 110110\}\cup\{111101, 011101, 101101\}$. First, code $C$ is covering and hence local locating-dominating. Thus, $C$ separates adjacent non-codewords. Secondly, the subgraph induced by codewords consists of five separate paths of length three. Hence, $C$ separates all codewords from their neighbours.
\end{proof}

A code $C$ is a \textit{linear code} if for any codewords $\bc_1,\bc_2\in C$, we have $\bc_1+\bc_2\in C$.
Next, we present a general construction of a linear local identifying code which then gives an upper bound for $M^L(n)$. Notice that if $C$ is a linear code in $\F^n$, then $C\oplus\F$ is a linear code in $\F^{n+1}$. For a more thorough survey on the topic see the book \cite{coveringcodes}.

\begin{definition}
    Let $s \geq 2$.
    A \emph{binary Hamming code} of length $n=2^s-1$ is a linear covering code $\mathcal{H}_s \subseteq \F^{2^s-1}$ which contains exactly $2^{n-s}$ codewords.
\end{definition}

\begin{theorem}\label{the_hammmingCode}
    For any binary Hamming code $\mathcal{H}_s$ the closed neighbourhoods of its codewords partition the whole space $\F^{2^s-1}$.
\end{theorem}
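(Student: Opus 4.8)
The plan is to prove the standard fact that a binary Hamming code is a \emph{perfect} $1$-error-correcting code, restated in graph-theoretic language: the closed balls of radius $1$ around the codewords of $\mathcal{H}_s$ tile $\F^{2^s-1}$. First I would recall the two properties that come from the definition: $\mathcal{H}_s$ is a linear covering code in $\F^n$ with $n=2^s-1$, and $|\mathcal{H}_s|=2^{n-s}$. I also need one structural input not spelled out in the definition but implicit in the name ``Hamming code'' (and in the assertion that it is an $r=1$ covering code): namely that $\mathcal{H}_s$ has minimum distance at least $3$; this is the property that actually forces disjointness of the balls. So I would begin the proof by noting this, perhaps deriving it from linearity by observing that a codeword of weight $1$ or $2$ would, together with $0$, already produce two distinct codewords whose closed $1$-neighbourhoods overlap in a way contradicting either the covering count or (if one prefers the classical route) the standard parity-check-matrix description of $\mathcal{H}_s$.

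The core argument is a counting/packing double inequality. Step one: \emph{disjointness}. Suppose $\bc_1,\bc_2\in\mathcal{H}_s$ are distinct and $N[\bc_1]\cap N[\bc_2]\neq\emptyset$, say $\bx$ lies in both. Then $d_H(\bx,\bc_1)\le 1$ and $d_H(\bx,\bc_2)\le 1$, so by the triangle inequality $d_H(\bc_1,\bc_2)\le 2$. Since $\mathcal{H}_s$ is linear, $\bc_1+\bc_2$ is a nonzero codeword of weight at most $2$, contradicting minimum distance $3$. Hence the closed neighbourhoods $\{N[\bc]\mid \bc\in\mathcal{H}_s\}$ are pairwise disjoint. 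Step two: \emph{covering}. Since $\mathcal{H}_s$ is a covering code, every $\bx\in\F^n$ satisfies $d_H(\bx,\bc)\le 1$ for some $\bc\in\mathcal{H}_s$, i.e. $\bx\in N[\bc]$; thus $\bigcup_{\bc\in\mathcal{H}_s}N[\bc]=\F^n$. Together, steps one and two say the family $\{N[\bc]\mid\bc\in\mathcal{H}_s\}$ is a partition of $\F^{2^s-1}$, which is exactly the claim.

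It is worth closing the loop with the arithmetic that makes this consistent (and which one could instead use \emph{in place of} invoking minimum distance directly): each closed $1$-neighbourhood in $\F^n$ has size $1+n=2^s$, there are $|\mathcal{H}_s|=2^{n-s}$ of them, and $2^{n-s}\cdot 2^s=2^n=|\F^n|$. So if the neighbourhoods are pairwise disjoint their union has exactly $2^n$ elements and must be all of $\F^n$ (this gives the covering property for free); conversely, if they cover $\F^n$, the sizes add up only if they are disjoint. Either way one obtains the partition, and in fact this shows the covering and packing properties of $\mathcal{H}_s$ are equivalent given the cardinality $2^{n-s}$.

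The only real subtlety, and the step I would be most careful about, is justifying minimum distance $\ge 3$ (equivalently, that no two distinct codewords are at Hamming distance $\le 2$). This does not follow from ``linear'' plus ``$2^{n-s}$ codewords'' plus ``covering'' alone without a little work: one should either take it as part of what ``binary Hamming code'' means (the classical construction via a parity-check matrix whose columns are all nonzero vectors of $\F_2^s$, which are pairwise distinct and nonzero, forcing every nonzero codeword to have weight $\ge 3$), or derive it from the sphere-packing equality above. I would state explicitly which convention I am adopting. Everything else is the short triangle-inequality-plus-linearity computation in step one together with the trivial union computation in step two, so no lengthy calculation is needed.
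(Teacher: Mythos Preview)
Your proof is correct and is the standard argument for perfectness of the binary Hamming code. Note, however, that the paper does not actually prove this theorem: it is stated as a classical fact (with a reference to the book \cite{coveringcodes} nearby) and used without proof, so there is no ``paper's own proof'' to compare against. Your write-up is exactly the kind of self-contained justification one would give if asked to supply a proof; the only point to tighten is the one you already flag, namely that minimum distance $\ge 3$ is part of the definition/construction of $\mathcal{H}_s$ rather than a consequence of the three properties listed in the paper's Definition (linear, covering, cardinality $2^{n-s}$), and you should simply say so rather than hedge.
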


Notice that by Corollary \ref{CorCoverBound}, code $\mathcal{H}_s\oplus\F^2$ is a linear local identifying code in $\F^{n+2}$ for $n=2^s-1\geq3$. Hence, by Lemma \ref{Lem_dimensionKasvatus}, $\mathcal{H}_s\oplus\F^k$ is a linear local identifying code in $\F^{n+k}$ for $k\geq2$.

\begin{corollary}\label{yläraja}
    Let $s,k\geq2$ and $n = 2^s+k-1$. Then
    $$
    M^L(n) \leq 2^{2^s+k-s-1}.
    $$
\end{corollary}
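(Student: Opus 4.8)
The plan is to read off this corollary as a direct consequence of the construction discussed immediately before the statement, so the ``proof'' is really just the bookkeeping of sizes. First I would recall the building block: by Corollary \ref{CorCoverBound} (or equivalently by Lemma \ref{lemma1} applied to $\mathcal{H}_s \oplus \F^2$, using that $\mathcal{H}_s$ is a covering code in $\F^{2^s-1}$ so every codeword of $\mathcal{H}_s \oplus \F^2$ has $|I(\cdot)| \geq 3$), the code $\mathcal{H}_s \oplus \F^2$ is a local identifying code in $\F^{2^s+1}$. Its size is $|\mathcal{H}_s| \cdot |\F^2| = 2^{(2^s-1)-s} \cdot 4 = 2^{2^s-s+1}$.

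Next I would lift this to higher dimensions using Lemma \ref{Lem_dimensionKasvatus}. The key observation is that every codeword $\bc$ of $\mathcal{H}_s \oplus \F^2$ satisfies $|I(\bc)| \geq 2$ — indeed $|I(\bc)| \geq 3$ as just noted — so Lemma \ref{Lem_dimensionKasvatus} applies and $\F \oplus (\mathcal{H}_s \oplus \F^2)$ is again a local identifying code; moreover in $\mathcal{H}_s \oplus \F^{k}$ with $k \geq 2$ every codeword still has $|I(\cdot)| \geq 3$ (the $\F^2$-block alone already contributes two neighbours inside the code plus the codeword itself), so we may iterate. Hence by induction on $k \geq 2$ the code $\mathcal{H}_s \oplus \F^{k}$ is a local identifying code in $\F^{(2^s-1)+k}$.

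Finally I would just count: with $n = 2^s + k - 1 = (2^s - 1) + k$, the code $\mathcal{H}_s \oplus \F^{k}$ has size
$$
|\mathcal{H}_s| \cdot |\F^{k}| = 2^{(2^s-1)-s} \cdot 2^{k} = 2^{2^s + k - s - 1},
$$
which gives $M^L(n) \leq 2^{2^s+k-s-1}$ as claimed. Since $\mathcal{H}_s \oplus \F^{k}$ is linear (a direct sum of linear codes), this also records that the bound is achieved by a linear construction.

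Honestly there is no real obstacle here: everything needed has been proved in the preceding lemmas and corollaries, and the only thing to be careful about is making sure the hypothesis ``$|I(\bc)| \geq 2$ for every $\bc$'' of Lemma \ref{Lem_dimensionKasvatus} is verified at each step of the dimension-raising, which follows because the $\F^2$ (or larger $\F^k$) factor forces each closed neighbourhood of a codeword to meet the code in at least three points. The main thing I would double-check in writing it up is the exponent arithmetic, since $|\mathcal{H}_s| = 2^{n-s}$ with $n = 2^s-1$ is the one place an off-by-one slip could creep in.
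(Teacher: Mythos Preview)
Your proposal is correct and follows exactly the approach the paper intends: the paragraph immediately preceding the corollary already states that $\mathcal{H}_s\oplus\F^2$ is a (linear) local identifying code by Corollary~\ref{CorCoverBound}, and that Lemma~\ref{Lem_dimensionKasvatus} then yields $\mathcal{H}_s\oplus\F^k$ for all $k\geq2$; the corollary is just the size count, which you carry out correctly.
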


In particular when $k=2$, we have $n=2^s+1$, for $s\geq2$, and $$\frac{2^n}{n-2/3}\leq M^L(n)\leq 2^{n-\log_2(n-1)}=\frac{2^n}{n-1}.$$
The lower bound is the one from Theorem \ref{alaraja}.
Next, we compare the lower and upper bounds of local identifying codes to covering codes and identifying codes. We will see that both our lower and upper bounds are quite good and essentially tight for infinite number of values of $n$. We see that for arbitrarily large $n = 2^s+k-1$ where $k$ is ``small'' the upper bound of Corollary \ref{yläraja} is close to the lower bound of Theorem \ref{alaraja}. Actually, it is close to the following lower bound for covering codes: $K(n)\geq \frac{2^n}{n+1}$.
This means that the lower bound is very close to optimal for infinitely many $n$ and the cost (the number of extra codewords) of turning a covering code into a local identifying code is small.

\medskip
Furthermore, in \cite{Karpovsky}, the authors have given the following lower bound for identifying codes $M(n)\geq 2 \cdot \frac{2^{n}}{n+1+2/n}.$ When we compare this to the upper bound in Corollary \ref{yläraja}, we notice that the cardinality of local identifying codes is roughly half of the lower bound for identifying codes (actually this holds even for locating-dominating codes). %
For example, by Theorem \ref{alaraja} and Corollary \ref{yläraja} we are able to conclude that $M^L(9) \in \{62,63,64\}$ while $M(9) \in \{ 101, \ldots, 112 \}$
and $M^{LD}(9) \in \{ 91, \ldots , 112 \}$ as we can see from Table \ref{optimal1}.

In particular, combining the above upper bound and our lower bound of Theorem \ref{alaraja}, we have the following result which gives the precise value of $M^L(5)$.

\begin{theorem}
    $$
    M^L(5) = 8.
    $$
\end{theorem}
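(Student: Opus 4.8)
The plan is to sandwich $M^L(5)$ between $8$ and $8$ using the two tools just developed. For the upper bound, I would apply Corollary \ref{yläraja} with $s=2$ and $k=2$, so that $n = 2^s + k - 1 = 5$ and the corollary gives $M^L(5) \leq 2^{2^s+k-s-1} = 2^{4+2-2-1} = 2^3 = 8$. Concretely this code is $\mathcal{H}_2 \oplus \F^2$, where $\mathcal{H}_2 \subseteq \F^3$ is the (trivial) Hamming code of length $3$ with $2^{3-2}=2$ codewords, namely $\mathcal{H}_2 = \{000, 111\}$; thus the explicit local identifying code in $\F^5$ is $\{000,111\} \oplus \F^2$, and its being local identifying follows from the remark preceding Corollary \ref{yläraja} (via Corollary \ref{CorCoverBound} and Lemma \ref{lemma1}, since every codeword $\bc$ of $\F^2 \oplus C'$ satisfies $|I(\bc)| \geq 3$).

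For the lower bound, I would invoke Theorem \ref{alaraja} with $n=5$, which yields
$$
M^L(5) \geq \frac{3 \cdot 2^5}{3\cdot 5 - 2} = \frac{96}{13} = 7\tfrac{5}{13}.
$$
Since $M^L(5)$ is an integer, this forces $M^L(5) \geq 8$. Combining the two bounds gives $M^L(5) = 8$, completing the proof.

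I do not expect any genuine obstacle here: both ingredients are already in place in the excerpt, and the only thing to check is the arithmetic ($96/13 > 7$, so the ceiling is $8$, matching the upper bound exactly). The one point deserving a sentence of care is confirming that the hypotheses of Corollary \ref{yläraja} are met for $s=k=2$ (both are $\geq 2$) and that $\mathcal{H}_2$ is indeed a legitimate Hamming code under the paper's definition with $s \geq 2$ — here $s=2$, $n = 2^2-1 = 3$, and $|\mathcal{H}_2| = 2^{3-2} = 2$, which is consistent. After that, the equality is immediate.
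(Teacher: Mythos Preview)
Your proposal is correct and follows exactly the paper's own proof: the lower bound comes from Theorem~\ref{alaraja} (giving $96/13 > 7$, hence $\geq 8$) and the upper bound from Corollary~\ref{yläraja} with $s=k=2$. The explicit identification of the code as $\{000,111\}\oplus\F^2$ and the verification that $s=2$ is admissible are nice touches but not required, since both cited results already cover the case $n=5$ directly.
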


\begin{proof}
    By Theorem \ref{alaraja} we have
    $
    M^L(5) \geq \frac{3 \cdot 2^5}{3 \cdot 5 -2} = \frac{3 \cdot 32}{13} \approx 7.38
    $
    and hence $M^L(5) \geq 8$.
    On the other hand, by Corollary \ref{yläraja} we have the upper bound
    $$
    M^L(5) \leq 2^{2^2 +2-2-1}= 8.
    $$

    \vspace*{-9mm}
\end{proof}

\section{Local identifying and local locating-dominating codes in infinite grids}
\label{section: infinite grids}

Let us begin by defining the graphs we consider next.
For a pictorial illustration of these graphs see Figure \ref{gridejä}.\\

\begin{figure}[!h]
    \centering
 \scalebox{1.1}{
    \begin{tikzpicture}[scale=0.6]
           \draw[] (0,0) circle(3pt);
        \draw[] (0,1) circle(3pt);
        \draw[] (0,2) circle(3pt);
        \draw[] (0,3) circle(3pt);
        \draw[] (0,4) circle(3pt);

        \draw[] (1,0) circle(3pt);
        \draw[] (1,1) circle(3pt);
        \draw[] (1,2) circle(3pt);
        \draw[] (1,3) circle(3pt);
        \draw[] (1,4) circle(3pt);

        \draw[] (2,0) circle(3pt);
        \draw[] (2,1) circle(3pt);
        \draw[] (2,2) circle(3pt);
        \draw[] (2,3) circle(3pt);
        \draw[] (2,4) circle(3pt);

        \draw[] (3,0) circle(3pt);
        \draw[] (3,1) circle(3pt);
        \draw[] (3,2) circle(3pt);
        \draw[] (3,3) circle(3pt);
        \draw[] (3,4) circle(3pt);

        \draw[] (4,0) circle(3pt);
        \draw[] (4,1) circle(3pt);
        \draw[] (4,2) circle(3pt);
        \draw[] (4,3) circle(3pt);
        \draw[] (4,4) circle(3pt);

        \draw (0.1,0) -- (0.9,0);
        \draw (1.1,0) -- (1.9,0);
        \draw (2.1,0) -- (2.9,0);
        \draw (3.1,0) -- (3.9,0);

        \draw (0.1,1) -- (0.9,1);
        \draw (1.1,1) -- (1.9,1);
        \draw (2.1,1) -- (2.9,1);
        \draw (3.1,1) -- (3.9,1);

        \draw (0.1,2) -- (0.9,2);
        \draw (1.1,2) -- (1.9,2);
        \draw (2.1,2) -- (2.9,2);
        \draw (3.1,2) -- (3.9,2);

        \draw (0.1,3) -- (0.9,3);
        \draw (1.1,3) -- (1.9,3);
        \draw (2.1,3) -- (2.9,3);
        \draw (3.1,3) -- (3.9,3);

        \draw (0.1,4) -- (0.9,4);
        \draw (1.1,4) -- (1.9,4);
        \draw (2.1,4) -- (2.9,4);
        \draw (3.1,4) -- (3.9,4);

        \draw (0,0.1) -- (0,0.9);
        \draw (0,1.1) -- (0,1.9);
        \draw (0,2.1) -- (0,2.9);
        \draw (0,3.1) -- (0,3.9);

        \draw (1,0.1) -- (1,0.9);
        \draw (1,1.1) -- (1,1.9);
        \draw (1,2.1) -- (1,2.9);
        \draw (1,3.1) -- (1,3.9);

        \draw (2,0.1) -- (2,0.9);
        \draw (2,1.1) -- (2,1.9);
        \draw (2,2.1) -- (2,2.9);
        \draw (2,3.1) -- (2,3.9);

        \draw (3,0.1) -- (3,0.9);
        \draw (3,1.1) -- (3,1.9);
        \draw (3,2.1) -- (3,2.9);
        \draw (3,3.1) -- (3,3.9);

        \draw (4,0.1) -- (4,0.9);
        \draw (4,1.1) -- (4,1.9);
        \draw (4,2.1) -- (4,2.9);
        \draw (4,3.1) -- (4,3.9);

        \node[scale=0.6] at (2.3, 1.75) {\tiny $(0,0)$};

        \node[scale=0.6] at (2,-1) {$(a)$ The square grid $\mathcal{S}$};

        \draw[] (6,0) circle(3pt);
        \draw[] (6,1) circle(3pt);
        \draw[] (6,2) circle(3pt);
        \draw[] (6,3) circle(3pt);
        \draw[] (6,4) circle(3pt);

        \draw[] (7,0) circle(3pt);
        \draw[] (7,1) circle(3pt);
        \draw[] (7,2) circle(3pt);
        \draw[] (7,3) circle(3pt);
        \draw[] (7,4) circle(3pt);

        \draw[] (8,0) circle(3pt);
        \draw[] (8,1) circle(3pt);
        \draw[] (8,2) circle(3pt);
        \draw[] (8,3) circle(3pt);
        \draw[] (8,4) circle(3pt);

        \draw[] (9,0) circle(3pt);
        \draw[] (9,1) circle(3pt);
        \draw[] (9,2) circle(3pt);
        \draw[] (9,3) circle(3pt);
        \draw[] (9,4) circle(3pt);

        \draw[] (10,0) circle(3pt);
        \draw[] (10,1) circle(3pt);
        \draw[] (10,2) circle(3pt);
        \draw[] (10,3) circle(3pt);
        \draw[] (10,4) circle(3pt);

        \draw (6.1,0) -- (6.9,0);
        \draw (7.1,0) -- (7.9,0);
        \draw (8.1,0) -- (8.9,0);
        \draw (9.1,0) -- (9.9,0);

        \draw (6.1,1) -- (6.9,1);
        \draw (7.1,1) -- (7.9,1);
        \draw (8.1,1) -- (8.9,1);
        \draw (9.1,1) -- (9.9,1);

        \draw (6.1,2) -- (6.9,2);
        \draw (7.1,2) -- (7.9,2);
        \draw (8.1,2) -- (8.9,2);
        \draw (9.1,2) -- (9.9,2);

        \draw (6.1,3) -- (6.9,3);
        \draw (7.1,3) -- (7.9,3);
        \draw (8.1,3) -- (8.9,3);
        \draw (9.1,3) -- (9.9,3);

        \draw (6.1,4) -- (6.9,4);
        \draw (7.1,4) -- (7.9,4);
        \draw (8.1,4) -- (8.9,4);
        \draw (9.1,4) -- (9.9,4);

        \draw (6,0.1) -- (6,0.9);
        \draw (8,0.1) -- (8,0.9);
        \draw (10,0.1) -- (10,0.9);

        \draw (7,1.1) -- (7,1.9);
        \draw (9,1.1) -- (9,1.9);

        \draw (6,2.1) -- (6,2.9);
        \draw (8,2.1) -- (8,2.9);
        \draw (10,2.1) -- (10,2.9);

        \draw (7,3.1) -- (7,3.9);
        \draw (9,3.1) -- (9,3.9);

        \node[scale=0.6] at (8.3,1.75) {\tiny $(0,0)$};

        \node[scale=0.6] at (8,-1) {$(b)$ The hexagonal grid $\mathcal{H}$};

                    \begin{scope}[xshift=12cm,yshift=7cm]
        \draw[] (0,-3) circle(3pt);
        \draw[] (0,-4) circle(3pt);
        \draw[] (0,-5) circle(3pt);
        \draw[] (0,-6) circle(3pt);
        \draw[] (0,-7) circle(3pt);

        \draw[] (1,-3) circle(3pt);
        \draw[] (1,-4) circle(3pt);
        \draw[] (1,-5) circle(3pt);
        \draw[] (1,-6) circle(3pt);
        \draw[] (1,-7) circle(3pt);

        \draw[] (2,-3) circle(3pt);
        \draw[] (2,-4) circle(3pt);
        \draw[] (2,-5) circle(3pt);
        \draw[] (2,-6) circle(3pt);
        \draw[] (2,-7) circle(3pt);

        \draw[] (3,-3) circle(3pt);
        \draw[] (3,-4) circle(3pt);
        \draw[] (3,-5) circle(3pt);
        \draw[] (3,-6) circle(3pt);
        \draw[] (3,-7) circle(3pt);

        \draw[] (4,-3) circle(3pt);
        \draw[] (4,-4) circle(3pt);
        \draw[] (4,-5) circle(3pt);
        \draw[] (4,-6) circle(3pt);
        \draw[] (4,-7) circle(3pt);

        \draw (0.1,-3) -- (0.9,-3);
        \draw (1.1,-3) -- (1.9,-3);
        \draw (2.1,-3) -- (2.9,-3);
        \draw (3.1,-3) -- (3.9,-3);

        \draw (0.1,-4) -- (0.9,-4);
        \draw (1.1,-4) -- (1.9,-4);
        \draw (2.1,-4) -- (2.9,-4);
        \draw (3.1,-4) -- (3.9,-4);

        \draw (0.1,-5) -- (0.9,-5);
        \draw (1.1,-5) -- (1.9,-5);
        \draw (2.1,-5) -- (2.9,-5);
        \draw (3.1,-5) -- (3.9,-5);

        \draw (0.1,-6) -- (0.9,-6);
        \draw (1.1,-6) -- (1.9,-6);
        \draw (2.1,-6) -- (2.9,-6);
        \draw (3.1,-6) -- (3.9,-6);

        \draw (0.1,-7) -- (0.9,-7);
        \draw (1.1,-7) -- (1.9,-7);
        \draw (2.1,-7) -- (2.9,-7);
        \draw (3.1,-7) -- (3.9,-7);

        \draw (0,-3.1) -- (0,-3.9);
        \draw (0,-4.1) -- (0,-4.9);
        \draw (0,-5.1) -- (0,-5.9);
        \draw (0,-6.1) -- (0,-6.9);

        \draw (1,-3.1) -- (1,-3.9);
        \draw (1,-4.1) -- (1,-4.9);
        \draw (1,-5.1) -- (1,-5.9);
        \draw (1,-6.1) -- (1,-6.9);

        \draw (2,-3.1) -- (2,-3.9);
        \draw (2,-4.1) -- (2,-4.9);
        \draw (2,-5.1) -- (2,-5.9);
        \draw (2,-6.1) -- (2,-6.9);

        \draw (3,-3.1) -- (3,-3.9);
        \draw (3,-4.1) -- (3,-4.9);
        \draw (3,-5.1) -- (3,-5.9);
        \draw (3,-6.1) -- (3,-6.9);

        \draw (4,-3.1) -- (4,-3.9);
        \draw (4,-4.1) -- (4,-4.9);
        \draw (4,-5.1) -- (4,-5.9);
        \draw (4,-6.1) -- (4,-6.9);

        \draw (0.07,-3.93) -- (0.93,-3.07);
        \draw (0.07,-4.93) -- (0.93,-4.07);
        \draw (0.07,-5.93) -- (0.93,-5.07);
        \draw (0.07,-6.93) -- (0.93,-6.07);

        \draw (1.07,-3.93) -- (1.93,-3.07);
        \draw (1.07,-4.93) -- (1.93,-4.07);
        \draw (1.07,-5.93) -- (1.93,-5.07);
        \draw (1.07,-6.93) -- (1.93,-6.07);

        \draw (2.07,-3.93) -- (2.93,-3.07);
        \draw (2.07,-4.93) -- (2.93,-4.07);
        \draw (2.07,-5.93) -- (2.93,-5.07);
        \draw (2.07,-6.93) -- (2.93,-6.07);

        \draw (3.07,-3.93) -- (3.93,-3.07);
        \draw (3.07,-4.93) -- (3.93,-4.07);
        \draw (3.07,-5.93) -- (3.93,-5.07);
        \draw (3.07,-6.93) -- (3.93,-6.07);

        \node[scale=0.6] at (2.3, -5.15) {\tiny $(0,0)$};

        \node[scale=0.6] at (2,-8) {$(c)$ The triangular grid $\mathcal{T}$};

        \draw[] (6,-3) circle(3pt);
        \draw[] (6,-4) circle(3pt);
        \draw[] (6,-5) circle(3pt);
        \draw[] (6,-6) circle(3pt);
        \draw[] (6,-7) circle(3pt);

        \draw[] (7,-3) circle(3pt);
        \draw[] (7,-4) circle(3pt);
        \draw[] (7,-5) circle(3pt);
        \draw[] (7,-6) circle(3pt);
        \draw[] (7,-7) circle(3pt);

        \draw[] (8,-3) circle(3pt);
        \draw[] (8,-4) circle(3pt);
        \draw[] (8,-5) circle(3pt);
        \draw[] (8,-6) circle(3pt);
        \draw[] (8,-7) circle(3pt);

        \draw[] (9,-3) circle(3pt);
        \draw[] (9,-4) circle(3pt);
        \draw[] (9,-5) circle(3pt);
        \draw[] (9,-6) circle(3pt);
        \draw[] (9,-7) circle(3pt);

        \draw[] (10,-3) circle(3pt);
        \draw[] (10,-4) circle(3pt);
        \draw[] (10,-5) circle(3pt);
        \draw[] (10,-6) circle(3pt);
        \draw[] (10,-7) circle(3pt);

        \draw (6.1,-3) -- (6.9,-3);
        \draw (7.1,-3) -- (7.9,-3);
        \draw (8.1,-3) -- (8.9,-3);
        \draw (9.1,-3) -- (9.9,-3);

        \draw (6.1,-4) -- (6.9,-4);
        \draw (7.1,-4) -- (7.9,-4);
        \draw (8.1,-4) -- (8.9,-4);
        \draw (9.1,-4) -- (9.9,-4);

        \draw (6.1,-5) -- (6.9,-5);
        \draw (7.1,-5) -- (7.9,-5);
        \draw (8.1,-5) -- (8.9,-5);
        \draw (9.1,-5) -- (9.9,-5);

        \draw (6.1,-6) -- (6.9,-6);
        \draw (7.1,-6) -- (7.9,-6);
        \draw (8.1,-6) -- (8.9,-6);
        \draw (9.1,-6) -- (9.9,-6);

        \draw (6.1,-7) -- (6.9,-7);
        \draw (7.1,-7) -- (7.9,-7);
        \draw (8.1,-7) -- (8.9,-7);
        \draw (9.1,-7) -- (9.9,-7);

        \draw (6,-3.1) -- (6,-3.9);
        \draw (6,-4.1) -- (6,-4.9);
        \draw (6,-5.1) -- (6,-5.9);
        \draw (6,-6.1) -- (6,-6.9);

        \draw (7,-3.1) -- (7,-3.9);
        \draw (7,-4.1) -- (7,-4.9);
        \draw (7,-5.1) -- (7,-5.9);
        \draw (7,-6.1) -- (7,-6.9);

        \draw (8,-3.1) -- (8,-3.9);
        \draw (8,-4.1) -- (8,-4.9);
        \draw (8,-5.1) -- (8,-5.9);
        \draw (8,-6.1) -- (8,-6.9);

        \draw (9,-3.1) -- (9,-3.9);
        \draw (9,-4.1) -- (9,-4.9);
        \draw (9,-5.1) -- (9,-5.9);
        \draw (9,-6.1) -- (9,-6.9);

        \draw (10,-3.1) -- (10,-3.9);
        \draw (10,-4.1) -- (10,-4.9);
        \draw (10,-5.1) -- (10,-5.9);
        \draw (10,-6.1) -- (10,-6.9);

        \draw (6.07,-3.93) -- (6.93,-3.07);
        \draw (6.07,-4.93) -- (6.93,-4.07);
        \draw (6.07,-5.93) -- (6.93,-5.07);
        \draw (6.07,-6.93) -- (6.93,-6.07);

        \draw (7.07,-3.93) -- (7.93,-3.07);
        \draw (7.07,-4.93) -- (7.93,-4.07);
        \draw (7.07,-5.93) -- (7.93,-5.07);
        \draw (7.07,-6.93) -- (7.93,-6.07);

        \draw (8.07,-3.93) -- (8.93,-3.07);
        \draw (8.07,-4.93) -- (8.93,-4.07);
        \draw (8.07,-5.93) -- (8.93,-5.07);
        \draw (8.07,-6.93) -- (8.93,-6.07);

        \draw (9.07,-3.93) -- (9.93,-3.07);
        \draw (9.07,-4.93) -- (9.93,-4.07);
        \draw (9.07,-5.93) -- (9.93,-5.07);
        \draw (9.07,-6.93) -- (9.93,-6.07);

        \draw (6.07,-3.07) -- (6.93,-3.93);
        \draw (6.07,-4.07) -- (6.93,-4.93);
        \draw (6.07,-5.07) -- (6.93,-5.93);
        \draw (6.07,-6.07) -- (6.93,-6.93);

        \draw (7.07,-3.07) -- (7.93,-3.93);
        \draw (7.07,-4.07) -- (7.93,-4.93);
        \draw (7.07,-5.07) -- (7.93,-5.93);
        \draw (7.07,-6.07) -- (7.93,-6.93);

        \draw (8.07,-3.07) -- (8.93,-3.93);
        \draw (8.07,-4.07) -- (8.93,-4.93);
        \draw (8.07,-5.07) -- (8.93,-5.93);
        \draw (8.07,-6.07) -- (8.93,-6.93);

        \draw (9.07,-3.07) -- (9.93,-3.93);
        \draw (9.07,-4.07) -- (9.93,-4.93);
        \draw (9.07,-5.07) -- (9.93,-5.93);
        \draw (9.07,-6.07) -- (9.93,-6.93);

        \node[scale=0.6] at (8.3, -5.15) {\tiny $(0,0)$};

        \node[scale=0.6] at (8,-8) {$(d)$ The king grid $\mathcal{K}$};
        \end{scope}
    \end{tikzpicture} }
    \caption{Infinite grids.} \label{gridejä}\vspace*{-5mm}
\end{figure}
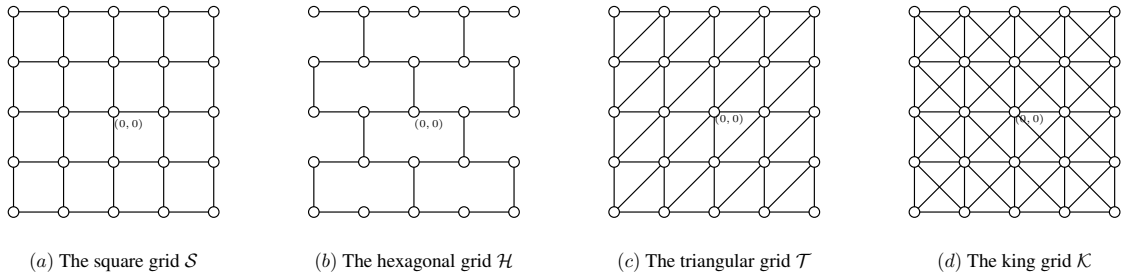

\begin{definition}
An \emph{infinite grid} is one of the following four graphs.
\begin{itemize}
\itemsep=0.9pt
    \item The \emph{square grid} is the graph $\mathcal{S} = (\Z^2, E_{\mathcal{S}})$ where

    \centerline{$ E_{\mathcal{S}} = \{ \{ \bu , \bv \} \mid \bu - \bv \in \{  (\pm 1,0), (0,\pm 1) \} \}.$}

    \item The \emph{hexagonal grid} is the graph  $\mathcal{H} = (\Z^2, E_{\mathcal{H}})$ where

     \centerline{$    E_{\mathcal{H}} = \{ \{ \bu = (i,j), \bv \} \mid \bu - \bv \in  \{(\pm 1,0), (0, (-1)^{i+j+1}) \} \}.$}

    \item The \emph{triangular grid} is the graph $\mathcal{T} =(\Z^2, E_{\mathcal{T}})$ where

     \centerline{$ E_{\mathcal{T}} = \{ \{ \bu, \bv \} \mid \bu - \bv \in \{ (\pm 1,0),(0,\pm 1),(1,1),(-1,-1) \}  \}.$}

    \item The \emph{king grid} is the graph $\mathcal{K} = (\Z^2, E_{\mathcal{K}})$ where

     \centerline{$E_{\mathcal{K}} = \{ \{ \bu, \bv \} \mid \bu - \bv \in \{ (\pm 1,0),(0,\pm 1),(\pm 1,\pm 1) \}  \}.$}
\end{itemize}
\end{definition}

\noindent
Next, we define the concept of \emph{density of a code} in infinite grids.

\begin{definition}
Let $G$ be an infinite grid and let $C \subseteq \Z^2$ be a code in $G$.
The \emph{density} $D(C)$ of $C$ is defined as
$$
D(C) = \limsup_{n \to \infty} \frac{|C \cap Q_n|}{|Q_n|}
$$
where $Q_n = \{ (i,j) \in \Z^2 \mid |i| \leq n, \ |j| \leq n  \}$.
\end{definition}

\noindent
We say that a code in some class of codes is \emph{optimal} if it has the smallest density among the codes in the same class.
In this section, we denote by $\gamma^{ID}(G)$, $\gamma^{LD}(G)$, $\gamma^{L-ID}(G)$ and $\gamma^{L-LD}(G)$ the \textit{densities} of optimal identifying, locating-dominating, local identifying and local locating-dominating codes, respectively, in an infinite grid $G$.
The densities $\gamma^{ID}(G)$ and $\gamma^{LD}(G)$ are all known when $G$ is the square, the triangular or the king grid.
The number $\gamma^{LD}(\mathcal{H})$ is also known while the number $\gamma^{ID}(\mathcal{H})$ is currently still unknown.
However, note that interestingly  the exact value of the density of optimal 2-identifying codes in the hexagonal grid is known~\cite{Junnila}.
In Table \ref{known bounds} we have listed the known values for the densities of optimal identifying and locating-dominating codes in each infinite grid (and an interval for the density $\gamma^{ID}(\mathcal{H})$ where we know it belongs to) and our contributions to the optimal densities of local identifying and local locating-dominating codes.

\begin{table}[ht]
    \small
    \centering
    \caption{Known values for the densities of optimal identifying and locating-dominating codes and contributions of this paper for the densities of optimal local identifying and local locating-dominating codes in infinite grids.
    For the densities of optimal identifying codes in the hexagonal grid and local locating-dominating codes in the triangular grid we give lower and upper bounds.}
    \label{known bounds}
    \begin{tabular}{|r||c|c|c|c|}
        \hline
        $G$ & $\mathcal{S}$ & $\mathcal{H}$ & $\mathcal{T}$ & $\mathcal{K}$   \\\hline 
        $\gamma^{ID}(G)$ & $\frac{7}{20}$ (\cite{Ben-Haim}) & $\frac{5}{12}$ -- $\frac{3}{7}$ (\cite{Cukierman, Cohen1}) & $\frac{1}{4}$ (\cite{Karpovsky}) & $\frac{2}{9}$ (\cite{Charon, Cohen2}) \\
        \hline
        $\gamma^{LD}(G)$ & $\frac{3}{10}$ (\cite{Slater3}) & $\frac{1}{3}$ (\cite{Honkala2}) & $\frac{13}{57}$ (\cite{Honkala1}) & $\frac{1}{5}$ (\cite{Honkala2}) \\  \hline
        $\gamma^{L-ID}(G)$ & $\frac{3}{11}$ & $\frac{3}{8}$ & $\frac{1}{4}$ & $\frac{2}{9}$  \\ \hline
        $\gamma^{L-LD}(G)$ & $\frac{1}{5}$ & $\frac{1}{4}$ & $\frac{2}{11}$ -- $\frac{2}{9}$ & $\frac{3}{16}$ \\
        \hline
    \end{tabular}
\end{table}

We study the densities of optimal local identifying and local locating-dominating codes in these four grids.
Again, we use shares and, in particular, the following well-known lemma analogous to Lemma \ref{share finite}.
For completeness, we provide a proof for this result.

\begin{lemma} \label{share infinite}
    Let $G$ be an infinite grid and let $C \subseteq \Z^2$ be a covering code in $G$.
    If for some real $\alpha>0$ we have $s(\bc) \leq \alpha$ for every $\bc \in C$, then $D(C) \geq \frac{1}{\alpha}$.
\end{lemma}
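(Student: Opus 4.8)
The plan is to mimic the finite-graph argument of Lemma~\ref{share finite}, but to handle the ``boundary effects'' that arise because $G$ is infinite. First I would set up the counting region: for each $n$ consider $Q_n$, and instead of summing shares only over codewords in $Q_n$, note that every vertex of $Q_n$ contributes its full weight $1/|I_C(u)|$ to the share of \emph{some} codeword $\bc$ within distance $1$, i.e. in $Q_{n+1}$. Thus
$$
|Q_n| = \sum_{u \in Q_n} \frac{|I_C(u)|}{|I_C(u)|} \leq \sum_{u \in Q_n}\ \sum_{\bc \in I_C(u)} \frac{1}{|I_C(u)|} \leq \sum_{\bc \in C \cap Q_{n+1}} s(\bc) \leq \alpha \cdot |C \cap Q_{n+1}|.
$$
Here the first inequality just says each $u$ is covered, the middle step regroups the double sum over incident pairs $(u,\bc)$, and the key containment is that any codeword $\bc$ covering a vertex $u \in Q_n$ lies in $Q_{n+1}$ since $d(\bc,u)\le 1$ and all the grids only connect vertices differing by $\pm1$ in each coordinate.

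Next I would pass to the limit. Rearranging gives $\dfrac{|C \cap Q_{n+1}|}{|Q_n|} \geq \dfrac{1}{\alpha}$, and since $|Q_n| = (2n+1)^2$ we have $|Q_{n+1}|/|Q_n| \to 1$, so
$$
D(C) = \limsup_{n\to\infty} \frac{|C \cap Q_n|}{|Q_n|} = \limsup_{n\to\infty} \frac{|C \cap Q_{n+1}|}{|Q_{n+1}|} = \limsup_{n\to\infty} \frac{|C \cap Q_{n+1}|}{|Q_n|}\cdot\frac{|Q_n|}{|Q_{n+1}|} \geq \frac{1}{\alpha}.
$$
(One should be slightly careful: it is the $\limsup$ that appears in the definition of density, and the inequality $|C\cap Q_{n+1}|/|Q_n| \ge 1/\alpha$ holds for \emph{every} $n$, so it certainly survives taking $\limsup$; the factor $|Q_n|/|Q_{n+1}|$ tends to $1$ and can be absorbed.)

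The main obstacle — really the only subtlety — is making the ``$Q_n$ versus $Q_{n+1}$'' bookkeeping airtight: one must be sure that the extra codewords needed to cover the frontier of $Q_n$ are confined to the thin shell $Q_{n+1}\setminus Q_n$, which is exactly where the hypothesis that $G$ is one of the four grids (bounded degree, edges only between coordinate-adjacent points) is used. Everything else is the routine regrouping of a double sum and the harmless limit $(2n+3)^2/(2n+1)^2 \to 1$. I would therefore present the displayed chain of inequalities above, remark that all codewords incident to $Q_n$ lie in $Q_{n+1}$, and conclude by dividing and taking the $\limsup$.
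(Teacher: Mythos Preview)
Your proof is correct and follows essentially the same approach as the paper's: both arguments use that every vertex of a box is covered by a codeword in the slightly enlarged box, regroup the double sum to bound $|Q_{\text{small}}|$ by $\alpha\cdot|C\cap Q_{\text{large}}|$, and then take the $\limsup$ using $|Q_{n}|/|Q_{n\pm1}|\to1$. The only cosmetic difference is the direction of the index shift---the paper compares $Q_{n-1}$ with $C\cap Q_n$, while you compare $Q_n$ with $C\cap Q_{n+1}$---and your first displayed ``$\leq$'' is in fact an equality.
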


\begin{proof}
    By the assumption that $s(\bc) \leq \alpha$ for every $\bc \in C$ and by the fact that $C$ is a covering code we have
    $$
    |Q_{n-1}| \leq \sum_{\bc \in C \cap Q_n} s(\bc) \leq |C \cap Q_n| \cdot \alpha\vspace*{-1mm}
    $$
    for any $n \geq 1$.
\eject
     Thus,
    $$
    |C \cap Q_n| \geq \frac{|Q_{n-1}|}{\alpha}
    $$
    and hence
    $$
    D(C) = \limsup_{n \to \infty} \frac{|C \cap Q_n|}{|Q_n|}
    \geq \limsup_{n \to \infty} \frac{|Q_{n-1}|}{\alpha \cdot |Q_n|}
    = \frac{1}{\alpha} \cdot \limsup_{n \to \infty} \frac{|Q_{n-1}|}{|Q_n|} = \frac{1}{\alpha}
    $$
    since $|Q_n| = (2n+1)^2$ which implies that $\limsup_{n \to \infty} \frac{|Q_{n-1}|}{|Q_n|} = 1$.
\end{proof}

So, by finding an upper bound for the share of an arbitrary codeword of a code, we obtain a lower bound for the density of the code.
By analyzing the possible shares of codewords of local identifying and local locating-dominating codes in $G$ we get lower bounds for the numbers $\gamma^{L-ID}(G)$ and $\gamma^{L-LD}(G)$ for different grids $G$.

\medskip
To improve the lower bounds obtained by analyzing the maximal shares of the codewords of a code, we sometimes use a \textit{share shifting scheme} where we modify the share function by shifting shares among codewords according to some local rules such that the total share remains the same.
We denote by $s'(\bc)$ the modified share of $\bc$ after applying a share shifting scheme.
For a code $C$ in a finite graph this means that $\sum_{\bc \in C} s(\bc) = \sum_{\bc \in C} s'(\bc)$, and for a code $C$ in an infinite grid this means that
$\sum_{\bc \in C \cap Q_n} s(\bc) \leq \sum_{\bc \in C \cap Q_{n+r}} s'(\bc)$
where $r$ is the maximum distance from a codeword to another codeword it shifts share to.
The following lemma states that an upper bound for the modified share function yields a lower bound for the density of a code.
Our share shifting scheme can be seen as a discharging method, see \cite{cranston2017introduction} for more discussion on this topic.

\begin{lemma}\label{Lemma shareshift}
    Let $G$ be an infinite grid and let $C \subseteq \Z^2$ be a covering code in $G$.
    Let $s'$ be a modified share function of $C$ obtained by a share shifting scheme.
    If $s'(\bc)\! \leq \alpha$ for every $\bc\! \in C$, then $D(C)\! \geq \frac{1}{\alpha}$.
\end{lemma}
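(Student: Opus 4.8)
The plan is to mimic the proof of Lemma \ref{share infinite}, adjusting only the finitely many boundary terms that the share shifting scheme can move. First I would fix $n \geq 1$ and let $r$ denote the maximum distance over which the scheme shifts share, as in the definition preceding the lemma. Since $C$ is a covering code, every vertex of $Q_{n-1}$ contributes a total weight of at least $1$ when we sum $\sum_{\bu \in Q_{n-1}} \sum_{\bc \in N[\bu] \cap C} \frac{1}{|I_C(\bu)|}$; reorganizing this double sum by codewords gives $|Q_{n-1}| \leq \sum_{\bc \in C \cap Q_n} s(\bc)$, exactly as in the unmodified case, because each codeword counted lies within distance $1$ of some vertex of $Q_{n-1}$.

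Next I would invoke the defining property of the modified share function, namely $\sum_{\bc \in C \cap Q_n} s(\bc) \leq \sum_{\bc \in C \cap Q_{n+r}} s'(\bc)$, which holds because any share shifted out of $C \cap Q_n$ lands on a codeword at distance at most $r$, hence inside $Q_{n+r}$, while share shifted in only helps. Combining with the hypothesis $s'(\bc) \leq \alpha$ for all $\bc \in C$ yields
$$
|Q_{n-1}| \leq \sum_{\bc \in C \cap Q_{n+r}} s'(\bc) \leq \alpha \cdot |C \cap Q_{n+r}|,
$$
so $|C \cap Q_{n+r}| \geq |Q_{n-1}| / \alpha$.

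Finally I would take the $\limsup$ as $n \to \infty$. Writing $m = n + r$, we get $|C \cap Q_m| \geq |Q_{m - r - 1}| / \alpha$, and since $|Q_m| = (2m+1)^2$ we have $\limsup_{m \to \infty} |Q_{m-r-1}| / |Q_m| = 1$ (the constant shift $r+1$ is negligible in the ratio of quadratics). Hence
$$
D(C) = \limsup_{m \to \infty} \frac{|C \cap Q_m|}{|Q_m|} \geq \frac{1}{\alpha} \cdot \limsup_{m \to \infty} \frac{|Q_{m-r-1}|}{|Q_m|} = \frac{1}{\alpha},
$$
which is the claim. I do not expect a real obstacle here: the only subtlety is bookkeeping the index shift by $r+1$ instead of by $1$, and checking that it still washes out in the limit because $|Q_n|$ grows quadratically; everything else is a verbatim adaptation of the proof of Lemma \ref{share infinite}.
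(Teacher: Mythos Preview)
Your proposal is correct and follows essentially the same approach as the paper's proof. The only cosmetic difference is bookkeeping: you bound $|C \cap Q_{n+r}|$ from below and then reindex, whereas the paper keeps $|C \cap Q_n|$ on the left and absorbs the boundary effect into an additive error term $|Q_{n+r}\setminus Q_n|$ that vanishes in the limit; both routes amount to the same quadratic-growth observation.
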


\begin{proof}
    Assume that in the share shifting scheme that defines $s'$ codewords obtain shifted share from codewords within distance $r$ from them.
    Since $C$ is a covering code, we have
    $$
    |Q_{n-1}| \leq \sum_{\bc \in C \cap Q_n} s(\bc)
    $$
    and since the total share in $Q_n$ stays in $Q_{n+r}$, we have
    $$
    \sum_{\bc \in C \cap Q_n} s(\bc) \leq \sum_{\bc \in C \cap Q_n} s'(\bc) + \alpha \cdot |Q_{n+r} \setminus Q_n| \leq \alpha \cdot |C \cap Q_n| + \alpha \cdot |Q_{n+r} \setminus Q_n|.
    $$
    By combining these, we get
    $$
    |C \cap Q_n| \geq \frac{1}{\alpha} |Q_{n-1}| - |Q_{n+r} \setminus Q_n|.
    $$
    Thus,
    $$
    D(C) = \limsup_{n \to \infty} \frac{|C \cap Q_n|}{|Q_n|} \geq \frac{1}{\alpha} \limsup_{n \to \infty} \frac{|Q_{n-1}|}{|Q_n|} - \liminf_{n \to \infty} \frac{|Q_{n+r} \setminus Q_n|}{|Q_n|} = \frac{1}{\alpha} - 0 = \frac{1}{\alpha}.
    $$

   \vspace*{-6mm}
\end{proof}

\subsection{The square and the hexagonal grids}\label{subsec:square and hex}

Since the square and the hexagonal grids are triangle-free, Lemma \ref{triangle-free lemma} gives the following theorem.
See Figures \ref{Optimal square grid} and \ref{Optimal hexagonal grid} for constructions.\\

\begin{figure}[!ht]
    \centering
    \begin{subfigure}[]{0.4\textwidth}
      \begin{tikzpicture}[scale=0.6]
        \draw[fill=black] (0,0) circle(3pt);
        \draw[] (0,1) circle(3pt);
        \draw[] (0,2) circle(3pt);
        \draw[] (0,3) circle(3pt);
        \draw[] (0,4) circle(3pt);
        \draw[fill=black] (0,5) circle(3pt);
        \draw[] (0,6) circle(3pt);
        \draw[] (0,7) circle(3pt);
        \draw[] (0,8) circle(3pt);
        \draw[] (0,9) circle(3pt);

        \draw[] (1,0) circle(3pt);
        \draw[] (1,1) circle(3pt);
        \draw[fill=black] (1,2) circle(3pt);
        \draw[] (1,3) circle(3pt);
        \draw[] (1,4) circle(3pt);
        \draw[] (1,5) circle(3pt);
        \draw[] (1,6) circle(3pt);
        \draw[fill=black] (1,7) circle(3pt);
        \draw[] (1,8) circle(3pt);
        \draw[] (1,9) circle(3pt);

        \draw[] (2,0) circle(3pt);
        \draw[] (2,1) circle(3pt);
        \draw[] (2,2) circle(3pt);
        \draw[] (2,3) circle(3pt);
        \draw[fill=black] (2,4) circle(3pt);
        \draw[] (2,5) circle(3pt);
        \draw[] (2,6) circle(3pt);
        \draw[] (2,7) circle(3pt);
        \draw[] (2,8) circle(3pt);
        \draw[fill=black] (2,9) circle(3pt);

        \draw[] (3,0) circle(3pt);
        \draw[fill=black] (3,1) circle(3pt);
        \draw[] (3,2) circle(3pt);
        \draw[] (3,3) circle(3pt);
        \draw[] (3,4) circle(3pt);
        \draw[] (3,5) circle(3pt);
        \draw[fill=black] (3,6) circle(3pt);
        \draw[] (3,7) circle(3pt);
        \draw[] (3,8) circle(3pt);
        \draw[] (3,9) circle(3pt);

        \draw[] (4,0) circle(3pt);
        \draw[] (4,1) circle(3pt);
        \draw[] (4,2) circle(3pt);
        \draw[fill=black] (4,3) circle(3pt);
        \draw[] (4,4) circle(3pt);
        \draw[] (4,5) circle(3pt);
        \draw[] (4,6) circle(3pt);
        \draw[] (4,7) circle(3pt);
        \draw[fill=black] (4,8) circle(3pt);
        \draw[] (4,9) circle(3pt);

        \draw[fill=black] (5,0) circle(3pt);
        \draw[] (5,1) circle(3pt);
        \draw[] (5,2) circle(3pt);
        \draw[] (5,3) circle(3pt);
        \draw[] (5,4) circle(3pt);
        \draw[fill=black] (5,5) circle(3pt);
        \draw[] (5,6) circle(3pt);
        \draw[] (5,7) circle(3pt);
        \draw[] (5,8) circle(3pt);
        \draw[] (5,9) circle(3pt);

        \draw[] (6,0) circle(3pt);
        \draw[] (6,1) circle(3pt);
        \draw[fill=black] (6,2) circle(3pt);
        \draw[] (6,3) circle(3pt);
        \draw[] (6,4) circle(3pt);
        \draw[] (6,5) circle(3pt);
        \draw[] (6,6) circle(3pt);
        \draw[fill=black] (6,7) circle(3pt);
        \draw[] (6,8) circle(3pt);
        \draw[] (6,9) circle(3pt);

        \draw[] (7,0) circle(3pt);
        \draw[] (7,1) circle(3pt);
        \draw[] (7,2) circle(3pt);
        \draw[] (7,3) circle(3pt);
        \draw[fill=black] (7,4) circle(3pt);
        \draw[] (7,5) circle(3pt);
        \draw[] (7,6) circle(3pt);
        \draw[] (7,7) circle(3pt);
        \draw[] (7,8) circle(3pt);
        \draw[fill=black] (7,9) circle(3pt);

        \draw[] (8,0) circle(3pt);
        \draw[fill=black] (8,1) circle(3pt);
        \draw[] (8,2) circle(3pt);
        \draw[] (8,3) circle(3pt);
        \draw[] (8,4) circle(3pt);
        \draw[] (8,5) circle(3pt);
        \draw[fill=black] (8,6) circle(3pt);
        \draw[] (8,7) circle(3pt);
        \draw[] (8,8) circle(3pt);
        \draw[] (8,9) circle(3pt);

        \draw[] (9,0) circle(3pt);
        \draw[] (9,1) circle(3pt);
        \draw[] (9,2) circle(3pt);
        \draw[fill=black] (9,3) circle(3pt);
        \draw[] (9,4) circle(3pt);
        \draw[] (9,5) circle(3pt);
        \draw[] (9,6) circle(3pt);
        \draw[] (9,7) circle(3pt);
        \draw[fill=black] (9,8) circle(3pt);
        \draw[] (9,9) circle(3pt);

        \draw (0.1,0) -- (0.9,0);
        \draw (1.1,0) -- (1.9,0);
        \draw (2.1,0) -- (2.9,0);
        \draw (3.1,0) -- (3.9,0);
        \draw (4.1,0) -- (4.9,0);
        \draw (5.1,0) -- (5.9,0);
        \draw (6.1,0) -- (6.9,0);
        \draw (7.1,0) -- (7.9,0);
        \draw (8.1,0) -- (8.9,0);

        \draw (0.1,1) -- (0.9,1);
        \draw (1.1,1) -- (1.9,1);
        \draw (2.1,1) -- (2.9,1);
        \draw (3.1,1) -- (3.9,1);
        \draw (4.1,1) -- (4.9,1);
        \draw (5.1,1) -- (5.9,1);
        \draw (6.1,1) -- (6.9,1);
        \draw (7.1,1) -- (7.9,1);
        \draw (8.1,1) -- (8.9,1);

        \draw (0.1,2) -- (0.9,2);
        \draw (1.1,2) -- (1.9,2);
        \draw (2.1,2) -- (2.9,2);
        \draw (3.1,2) -- (3.9,2);
        \draw (4.1,2) -- (4.9,2);
        \draw (5.1,2) -- (5.9,2);
        \draw (6.1,2) -- (6.9,2);
        \draw (7.1,2) -- (7.9,2);
        \draw (8.1,2) -- (8.9,2);

        \draw (0.1,3) -- (0.9,3);
        \draw (1.1,3) -- (1.9,3);
        \draw (2.1,3) -- (2.9,3);
        \draw (3.1,3) -- (3.9,3);
        \draw (4.1,3) -- (4.9,3);
        \draw (5.1,3) -- (5.9,3);
        \draw (6.1,3) -- (6.9,3);
        \draw (7.1,3) -- (7.9,3);
        \draw (8.1,3) -- (8.9,3);

        \draw (0.1,4) -- (0.9,4);
        \draw (1.1,4) -- (1.9,4);
        \draw (2.1,4) -- (2.9,4);
        \draw (3.1,4) -- (3.9,4);
        \draw (4.1,4) -- (4.9,4);
        \draw (5.1,4) -- (5.9,4);
        \draw (6.1,4) -- (6.9,4);
        \draw (7.1,4) -- (7.9,4);
        \draw (8.1,4) -- (8.9,4);

        \draw (0.1,5) -- (0.9,5);
        \draw (1.1,5) -- (1.9,5);
        \draw (2.1,5) -- (2.9,5);
        \draw (3.1,5) -- (3.9,5);
        \draw (4.1,5) -- (4.9,5);
        \draw (5.1,5) -- (5.9,5);
        \draw (6.1,5) -- (6.9,5);
        \draw (7.1,5) -- (7.9,5);
        \draw (8.1,5) -- (8.9,5);

        \draw (0.1,6) -- (0.9,6);
        \draw (1.1,6) -- (1.9,6);
        \draw (2.1,6) -- (2.9,6);
        \draw (3.1,6) -- (3.9,6);
        \draw (4.1,6) -- (4.9,6);
        \draw (5.1,6) -- (5.9,6);
        \draw (6.1,6) -- (6.9,6);
        \draw (7.1,6) -- (7.9,6);
        \draw (8.1,6) -- (8.9,6);

        \draw (0.1,7) -- (0.9,7);
        \draw (1.1,7) -- (1.9,7);
        \draw (2.1,7) -- (2.9,7);
        \draw (3.1,7) -- (3.9,7);
        \draw (4.1,7) -- (4.9,7);
        \draw (5.1,7) -- (5.9,7);
        \draw (6.1,7) -- (6.9,7);
        \draw (7.1,7) -- (7.9,7);
        \draw (8.1,7) -- (8.9,7);

        \draw (0.1,8) -- (0.9,8);
        \draw (1.1,8) -- (1.9,8);
        \draw (2.1,8) -- (2.9,8);
        \draw (3.1,8) -- (3.9,8);
        \draw (4.1,8) -- (4.9,8);
        \draw (5.1,8) -- (5.9,8);
        \draw (6.1,8) -- (6.9,8);
        \draw (7.1,8) -- (7.9,8);
        \draw (8.1,8) -- (8.9,8);

        \draw (0.1,9) -- (0.9,9);
        \draw (1.1,9) -- (1.9,9);
        \draw (2.1,9) -- (2.9,9);
        \draw (3.1,9) -- (3.9,9);
        \draw (4.1,9) -- (4.9,9);
        \draw (5.1,9) -- (5.9,9);
        \draw (6.1,9) -- (6.9,9);
        \draw (7.1,9) -- (7.9,9);
        \draw (8.1,9) -- (8.9,9);

        \draw (0,0.1) -- (0,0.9);
        \draw (0,1.1) -- (0,1.9);
        \draw (0,2.1) -- (0,2.9);
        \draw (0,3.1) -- (0,3.9);
        \draw (0,4.1) -- (0,4.9);
        \draw (0,5.1) -- (0,5.9);
        \draw (0,6.1) -- (0,6.9);
        \draw (0,7.1) -- (0,7.9);
        \draw (0,8.1) -- (0,8.9);

        \draw (1,0.1) -- (1,0.9);
        \draw (1,1.1) -- (1,1.9);
        \draw (1,2.1) -- (1,2.9);
        \draw (1,3.1) -- (1,3.9);
        \draw (1,4.1) -- (1,4.9);
        \draw (1,5.1) -- (1,5.9);
        \draw (1,6.1) -- (1,6.9);
        \draw (1,7.1) -- (1,7.9);
        \draw (1,8.1) -- (1,8.9);

        \draw (2,0.1) -- (2,0.9);
        \draw (2,1.1) -- (2,1.9);
        \draw (2,2.1) -- (2,2.9);
        \draw (2,3.1) -- (2,3.9);
        \draw (2,4.1) -- (2,4.9);
        \draw (2,5.1) -- (2,5.9);
        \draw (2,6.1) -- (2,6.9);
        \draw (2,7.1) -- (2,7.9);
        \draw (2,8.1) -- (2,8.9);

        \draw (3,0.1) -- (3,0.9);
        \draw (3,1.1) -- (3,1.9);
        \draw (3,2.1) -- (3,2.9);
        \draw (3,3.1) -- (3,3.9);
        \draw (3,4.1) -- (3,4.9);
        \draw (3,5.1) -- (3,5.9);
        \draw (3,6.1) -- (3,6.9);
        \draw (3,7.1) -- (3,7.9);
        \draw (3,8.1) -- (3,8.9);

        \draw (4,0.1) -- (4,0.9);
        \draw (4,1.1) -- (4,1.9);
        \draw (4,2.1) -- (4,2.9);
        \draw (4,3.1) -- (4,3.9);
        \draw (4,4.1) -- (4,4.9);
        \draw (4,5.1) -- (4,5.9);
        \draw (4,6.1) -- (4,6.9);
        \draw (4,7.1) -- (4,7.9);
        \draw (4,8.1) -- (4,8.9);

        \draw (5,0.1) -- (5,0.9);
        \draw (5,1.1) -- (5,1.9);
        \draw (5,2.1) -- (5,2.9);
        \draw (5,3.1) -- (5,3.9);
        \draw (5,4.1) -- (5,4.9);
        \draw (5,5.1) -- (5,5.9);
        \draw (5,6.1) -- (5,6.9);
        \draw (5,7.1) -- (5,7.9);
        \draw (5,8.1) -- (5,8.9);

        \draw (6,0.1) -- (6,0.9);
        \draw (6,1.1) -- (6,1.9);
        \draw (6,2.1) -- (6,2.9);
        \draw (6,3.1) -- (6,3.9);
        \draw (6,4.1) -- (6,4.9);
        \draw (6,5.1) -- (6,5.9);
        \draw (6,6.1) -- (6,6.9);
        \draw (6,7.1) -- (6,7.9);
        \draw (6,8.1) -- (6,8.9);

        \draw (7,0.1) -- (7,0.9);
        \draw (7,1.1) -- (7,1.9);
        \draw (7,2.1) -- (7,2.9);
        \draw (7,3.1) -- (7,3.9);
        \draw (7,4.1) -- (7,4.9);
        \draw (7,5.1) -- (7,5.9);
        \draw (7,6.1) -- (7,6.9);
        \draw (7,7.1) -- (7,7.9);
        \draw (7,8.1) -- (7,8.9);

        \draw (8,0.1) -- (8,0.9);
        \draw (8,1.1) -- (8,1.9);
        \draw (8,2.1) -- (8,2.9);
        \draw (8,3.1) -- (8,3.9);
        \draw (8,4.1) -- (8,4.9);
        \draw (8,5.1) -- (8,5.9);
        \draw (8,6.1) -- (8,6.9);
        \draw (8,7.1) -- (8,7.9);
        \draw (8,8.1) -- (8,8.9);

        \draw (9,0.1) -- (9,0.9);
        \draw (9,1.1) -- (9,1.9);
        \draw (9,2.1) -- (9,2.9);
        \draw (9,3.1) -- (9,3.9);
        \draw (9,4.1) -- (9,4.9);
        \draw (9,5.1) -- (9,5.9);
        \draw (9,6.1) -- (9,6.9);
        \draw (9,7.1) -- (9,7.9);
        \draw (9,8.1) -- (9,8.9);
    \end{tikzpicture}
       \qquad\caption{An optimal local locating-dominating code which is also an optimal covering code in the square grid.}
       \label{Optimal square grid}
    \end{subfigure}\qquad\qquad
    \begin{subfigure}[]{0.4\textwidth}
\begin{tikzpicture}[scale=0.6]
        \draw[fill=black] (0,0) circle(3pt);
        \draw[] (0,1) circle(3pt);
        \draw[fill=black] (0,2) circle(3pt);
        \draw[] (0,3) circle(3pt);
        \draw[fill=black] (0,4) circle(3pt);
        \draw[] (0,5) circle(3pt);
        \draw[fill=black] (0,6) circle(3pt);
        \draw[] (0,7) circle(3pt);
        \draw[fill=black] (0,8) circle(3pt);
        \draw[] (0,9) circle(3pt);

        \draw[] (1,0) circle(3pt);
        \draw[] (1,1) circle(3pt);
        \draw[] (1,2) circle(3pt);
        \draw[] (1,3) circle(3pt);
        \draw[] (1,4) circle(3pt);
        \draw[] (1,5) circle(3pt);
        \draw[] (1,6) circle(3pt);
        \draw[] (1,7) circle(3pt);
        \draw[] (1,8) circle(3pt);
        \draw[] (1,9) circle(3pt);

        \draw[] (2,0) circle(3pt);
        \draw[fill=black] (2,1) circle(3pt);
        \draw[] (2,2) circle(3pt);
        \draw[fill=black] (2,3) circle(3pt);
        \draw[] (2,4) circle(3pt);
        \draw[fill=black] (2,5) circle(3pt);
        \draw[] (2,6) circle(3pt);
        \draw[fill=black] (2,7) circle(3pt);
        \draw[] (2,8) circle(3pt);
        \draw[fill=black] (2,9) circle(3pt);

        \draw[] (3,0) circle(3pt);
        \draw[] (3,1) circle(3pt);
        \draw[] (3,2) circle(3pt);
        \draw[] (3,3) circle(3pt);
        \draw[] (3,4) circle(3pt);
        \draw[] (3,5) circle(3pt);
        \draw[] (3,6) circle(3pt);
        \draw[] (3,7) circle(3pt);
        \draw[] (3,8) circle(3pt);
        \draw[] (3,9) circle(3pt);

        \draw[fill=black] (4,0) circle(3pt);
        \draw[] (4,1) circle(3pt);
        \draw[fill=black] (4,2) circle(3pt);
        \draw[] (4,3) circle(3pt);
        \draw[fill=black] (4,4) circle(3pt);
        \draw[] (4,5) circle(3pt);
        \draw[fill=black] (4,6) circle(3pt);
        \draw[] (4,7) circle(3pt);
        \draw[fill=black] (4,8) circle(3pt);
        \draw[] (4,9) circle(3pt);

        \draw[] (5,0) circle(3pt);
        \draw[] (5,1) circle(3pt);
        \draw[] (5,2) circle(3pt);
        \draw[] (5,3) circle(3pt);
        \draw[] (5,4) circle(3pt);
        \draw[] (5,5) circle(3pt);
        \draw[] (5,6) circle(3pt);
        \draw[] (5,7) circle(3pt);
        \draw[] (5,8) circle(3pt);
        \draw[] (5,9) circle(3pt);

        \draw[] (6,0) circle(3pt);
        \draw[fill=black] (6,1) circle(3pt);
        \draw[] (6,2) circle(3pt);
        \draw[fill=black] (6,3) circle(3pt);
        \draw[] (6,4) circle(3pt);
        \draw[fill=black] (6,5) circle(3pt);
        \draw[] (6,6) circle(3pt);
        \draw[fill=black] (6,7) circle(3pt);
        \draw[] (6,8) circle(3pt);
        \draw[fill=black] (6,9) circle(3pt);

        \draw[] (7,0) circle(3pt);
        \draw[] (7,1) circle(3pt);
        \draw[] (7,2) circle(3pt);
        \draw[] (7,3) circle(3pt);
        \draw[] (7,4) circle(3pt);
        \draw[] (7,5) circle(3pt);
        \draw[] (7,6) circle(3pt);
        \draw[] (7,7) circle(3pt);
        \draw[] (7,8) circle(3pt);
        \draw[] (7,9) circle(3pt);

        \draw[fill=black] (8,0) circle(3pt);
        \draw[] (8,1) circle(3pt);
        \draw[fill=black] (8,2) circle(3pt);
        \draw[] (8,3) circle(3pt);
        \draw[fill=black] (8,4) circle(3pt);
        \draw[] (8,5) circle(3pt);
        \draw[fill=black] (8,6) circle(3pt);
        \draw[] (8,7) circle(3pt);
        \draw[fill=black] (8,8) circle(3pt);
        \draw[] (8,9) circle(3pt);

        \draw[] (9,0) circle(3pt);
        \draw[] (9,1) circle(3pt);
        \draw[] (9,2) circle(3pt);
        \draw[] (9,3) circle(3pt);
        \draw[] (9,4) circle(3pt);
        \draw[] (9,5) circle(3pt);
        \draw[] (9,6) circle(3pt);
        \draw[] (9,7) circle(3pt);
        \draw[] (9,8) circle(3pt);
        \draw[] (9,9) circle(3pt);

        \draw (0.1,0) -- (0.9,0);
        \draw (1.1,0) -- (1.9,0);
        \draw (2.1,0) -- (2.9,0);
        \draw (3.1,0) -- (3.9,0);
        \draw (4.1,0) -- (4.9,0);
        \draw (5.1,0) -- (5.9,0);
        \draw (6.1,0) -- (6.9,0);
        \draw (7.1,0) -- (7.9,0);
        \draw (8.1,0) -- (8.9,0);

        \draw (0.1,1) -- (0.9,1);
        \draw (1.1,1) -- (1.9,1);
        \draw (2.1,1) -- (2.9,1);
        \draw (3.1,1) -- (3.9,1);
        \draw (4.1,1) -- (4.9,1);
        \draw (5.1,1) -- (5.9,1);
        \draw (6.1,1) -- (6.9,1);
        \draw (7.1,1) -- (7.9,1);
        \draw (8.1,1) -- (8.9,1);

        \draw (0.1,2) -- (0.9,2);
        \draw (1.1,2) -- (1.9,2);
        \draw (2.1,2) -- (2.9,2);
        \draw (3.1,2) -- (3.9,2);
        \draw (4.1,2) -- (4.9,2);
        \draw (5.1,2) -- (5.9,2);
        \draw (6.1,2) -- (6.9,2);
        \draw (7.1,2) -- (7.9,2);
        \draw (8.1,2) -- (8.9,2);

        \draw (0.1,3) -- (0.9,3);
        \draw (1.1,3) -- (1.9,3);
        \draw (2.1,3) -- (2.9,3);
        \draw (3.1,3) -- (3.9,3);
        \draw (4.1,3) -- (4.9,3);
        \draw (5.1,3) -- (5.9,3);
        \draw (6.1,3) -- (6.9,3);
        \draw (7.1,3) -- (7.9,3);
        \draw (8.1,3) -- (8.9,3);

        \draw (0.1,4) -- (0.9,4);
        \draw (1.1,4) -- (1.9,4);
        \draw (2.1,4) -- (2.9,4);
        \draw (3.1,4) -- (3.9,4);
        \draw (4.1,4) -- (4.9,4);
        \draw (5.1,4) -- (5.9,4);
        \draw (6.1,4) -- (6.9,4);
        \draw (7.1,4) -- (7.9,4);
        \draw (8.1,4) -- (8.9,4);

        \draw (0.1,5) -- (0.9,5);
        \draw (1.1,5) -- (1.9,5);
        \draw (2.1,5) -- (2.9,5);
        \draw (3.1,5) -- (3.9,5);
        \draw (4.1,5) -- (4.9,5);
        \draw (5.1,5) -- (5.9,5);
        \draw (6.1,5) -- (6.9,5);
        \draw (7.1,5) -- (7.9,5);
        \draw (8.1,5) -- (8.9,5);

        \draw (0.1,6) -- (0.9,6);
        \draw (1.1,6) -- (1.9,6);
        \draw (2.1,6) -- (2.9,6);
        \draw (3.1,6) -- (3.9,6);
        \draw (4.1,6) -- (4.9,6);
        \draw (5.1,6) -- (5.9,6);
        \draw (6.1,6) -- (6.9,6);
        \draw (7.1,6) -- (7.9,6);
        \draw (8.1,6) -- (8.9,6);

        \draw (0.1,7) -- (0.9,7);
        \draw (1.1,7) -- (1.9,7);
        \draw (2.1,7) -- (2.9,7);
        \draw (3.1,7) -- (3.9,7);
        \draw (4.1,7) -- (4.9,7);
        \draw (5.1,7) -- (5.9,7);
        \draw (6.1,7) -- (6.9,7);
        \draw (7.1,7) -- (7.9,7);
        \draw (8.1,7) -- (8.9,7);

        \draw (0.1,8) -- (0.9,8);
        \draw (1.1,8) -- (1.9,8);
        \draw (2.1,8) -- (2.9,8);
        \draw (3.1,8) -- (3.9,8);
        \draw (4.1,8) -- (4.9,8);
        \draw (5.1,8) -- (5.9,8);
        \draw (6.1,8) -- (6.9,8);
        \draw (7.1,8) -- (7.9,8);
        \draw (8.1,8) -- (8.9,8);

        \draw (0.1,9) -- (0.9,9);
        \draw (1.1,9) -- (1.9,9);
        \draw (2.1,9) -- (2.9,9);
        \draw (3.1,9) -- (3.9,9);
        \draw (4.1,9) -- (4.9,9);
        \draw (5.1,9) -- (5.9,9);
        \draw (6.1,9) -- (6.9,9);
        \draw (7.1,9) -- (7.9,9);
        \draw (8.1,9) -- (8.9,9);

        \draw (0,0.1) -- (0,0.9);
        \draw (0,2.1) -- (0,2.9);
        \draw (0,4.1) -- (0,4.9);
        \draw (0,6.1) -- (0,6.9);
        \draw (0,8.1) -- (0,8.9);

        \draw (1,1.1) -- (1,1.9);
        \draw (1,3.1) -- (1,3.9);
        \draw (1,5.1) -- (1,5.9);
        \draw (1,7.1) -- (1,7.9);

        \draw (2,0.1) -- (2,0.9);
        \draw (2,2.1) -- (2,2.9);
        \draw (2,4.1) -- (2,4.9);
        \draw (2,6.1) -- (2,6.9);
        \draw (2,8.1) -- (2,8.9);

        \draw (3,1.1) -- (3,1.9);
        \draw (3,3.1) -- (3,3.9);
        \draw (3,5.1) -- (3,5.9);
        \draw (3,7.1) -- (3,7.9);

        \draw (4,0.1) -- (4,0.9);
        \draw (4,2.1) -- (4,2.9);
        \draw (4,4.1) -- (4,4.9);
        \draw (4,6.1) -- (4,6.9);
        \draw (4,8.1) -- (4,8.9);

        \draw (5,1.1) -- (5,1.9);
        \draw (5,3.1) -- (5,3.9);
        \draw (5,5.1) -- (5,5.9);
        \draw (5,7.1) -- (5,7.9);

        \draw (6,0.1) -- (6,0.9);
        \draw (6,2.1) -- (6,2.9);
        \draw (6,4.1) -- (6,4.9);
        \draw (6,6.1) -- (6,6.9);
        \draw (6,8.1) -- (6,8.9);

        \draw (7,1.1) -- (7,1.9);
        \draw (7,3.1) -- (7,3.9);
        \draw (7,5.1) -- (7,5.9);
        \draw (7,7.1) -- (7,7.9);

        \draw (8,0.1) -- (8,0.9);
        \draw (8,2.1) -- (8,2.9);
        \draw (8,4.1) -- (8,4.9);
        \draw (8,6.1) -- (8,6.9);
        \draw (8,8.1) -- (8,8.9);

        \draw (9,1.1) -- (9,1.9);
        \draw (9,3.1) -- (9,3.9);
        \draw (9,5.1) -- (9,5.9);
        \draw (9,7.1) -- (9,7.9);
    \end{tikzpicture}
    \caption{An optimal local locating-dominating code which is also an optimal covering code in the hexagonal grid.}
    \label{Optimal hexagonal grid}
    \end{subfigure}\vspace*{-2mm}
    \caption{Local location-domination in the square and hexagonal grids.}\vspace*{-5mm}
\end{figure}
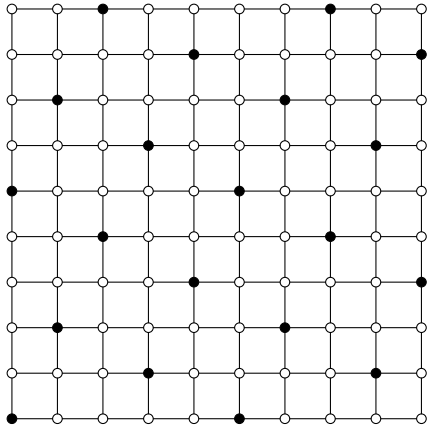
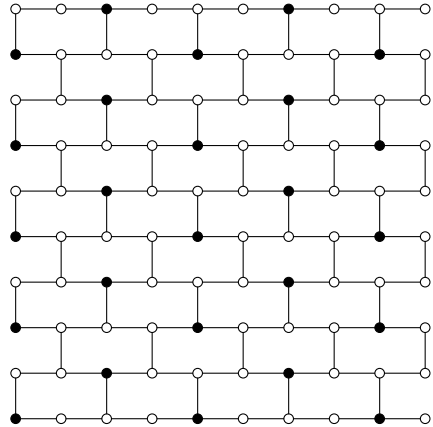

\begin{theorem}
$$
\gamma^{L-LD}(\mathcal{S}) = \frac{1}{5}
$$

and
$$
\gamma^{L-LD}(\mathcal{H}) = \frac{1}{4}.
$$
\end{theorem}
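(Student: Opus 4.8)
The plan is to establish each equality as a matching pair of a lower bound from the share technique (Lemma~\ref{share infinite}) and an upper bound from an explicit periodic construction, using the crucial reduction provided by Lemma~\ref{triangle-free lemma}: since $\mathcal{S}$ and $\mathcal{H}$ are triangle-free, a code is a local locating-dominating code if and only if it is a covering code (a dominating set). Hence $\gamma^{L-LD}(\mathcal{S})$ and $\gamma^{L-LD}(\mathcal{H})$ are nothing but the minimum densities of dominating sets in these grids, which are classical and well-documented, but I would give self-contained arguments.

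For the square grid, the lower bound $\gamma^{L-LD}(\mathcal{S}) \geq \frac{1}{5}$ follows immediately from Lemma~\ref{share infinite}: if $C$ is a covering code and $\bc \in C$, then $N[\bc]$ has five vertices, each with $|I_C(u)| \geq 1$, so $s(\bc) = \sum_{u \in N[\bc]} 1/|I_C(u)| \leq 5$; taking $\alpha = 5$ gives $D(C) \geq \frac15$. For the matching upper bound I would exhibit the well-known perfect dominating set $C = \{ (i,j) \in \Z^2 \mid 2i + j \equiv 0 \pmod 5 \}$, whose closed neighbourhoods tile $\Z^2$ (a perfect code), so $D(C) = \frac15$; I would point to Figure~\ref{Optimal square grid} for the picture. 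For the hexagonal grid, the lower bound $\gamma^{L-LD}(\mathcal{H}) \geq \frac14$ is the same share argument but now $|N[\bc]| = 4$ (every vertex has degree $3$), giving $s(\bc) \leq 4$ and hence $D(C) \geq \frac14$. For the upper bound I would give an explicit periodic covering code of density $\frac14$ — again a perfect code whose closed $1$-neighbourhoods partition $\Z^2$ — described by a congruence condition on the coordinates in the chosen embedding of $\mathcal{H}$, referring to Figure~\ref{Optimal hexagonal grid}.

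The two main points that need care rather than routine checking are: first, verifying that the proposed construction really is a covering code in the specific coordinate model of $\mathcal{H}$ used in the paper (the edge set depends on the parity $i+j$, so one must check that each non-codeword is adjacent to exactly one codeword under that adjacency rule), and second, confirming that a covering code of density $\frac14$ is optimal, i.e.\ that the share bound is attained — this holds precisely because the construction is a perfect $1$-code, so every vertex is covered exactly once and every codeword has $s(\bc) = 4$ exactly, making Lemma~\ref{share infinite} tight. Everything else is bookkeeping: the density of a periodic code equals the fraction of codewords in one fundamental domain, and the $\limsup$ in the definition of $D(C)$ is a genuine limit here.

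In short, the proof is: (1) invoke Lemma~\ref{triangle-free lemma} to replace ``local locating-dominating'' by ``covering''; (2) apply the share bound with $\alpha = 5$ (resp.\ $\alpha = 4$) via Lemma~\ref{share infinite} to get $\gamma^{L-LD}(\mathcal{S}) \geq \frac15$ (resp.\ $\gamma^{L-LD}(\mathcal{H}) \geq \frac14$); (3) present the explicit perfect dominating sets of densities $\frac15$ and $\frac14$ (with the figures) as the matching upper bounds. I expect no genuine obstacle — the only thing to be vigilant about is getting the hexagonal-grid construction and its verification consistent with the paper's somewhat unusual coordinatisation of $\mathcal{H}$.
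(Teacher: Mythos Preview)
Your proposal is correct and follows essentially the same approach as the paper: invoke Lemma~\ref{triangle-free lemma} to reduce local locating-dominating codes to covering codes, use the share bound (Lemma~\ref{share infinite}) with $\alpha=|N[\bc]|$ for the lower bounds, and exhibit the perfect dominating sets of Figures~\ref{Optimal square grid} and~\ref{Optimal hexagonal grid} for the matching upper bounds. The paper's own proof is terser---it simply points to the figures and remarks that the constructions are optimal because each vertex is covered exactly once---but your more explicit write-up (including the congruence description of the square-grid code and the spelled-out share computation) is a faithful fleshing-out of the same argument.
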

The following two theorems give the exact values for the densities of optimal local identifying codes in the square and the hexagonal grids. Observe that those constructions are clearly optimal, since each vertex has exactly one codeword in its closed neighbourhood.

\begin{theorem}
$$
\gamma^{L-ID}(\mathcal{S}) = \frac{3}{11}.
$$
\end{theorem}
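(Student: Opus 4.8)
The plan is to prove $\gamma^{L-ID}(\mathcal{S})\ge\tfrac{3}{11}$ by a share argument reinforced with one discharging rule (Lemma~\ref{Lemma shareshift} with $\alpha=\tfrac{11}{3}$), and $\gamma^{L-ID}(\mathcal{S})\le\tfrac{3}{11}$ by exhibiting an explicit periodic code of density $\tfrac{3}{11}$ whose codewords split into disjoint horizontal triples of consecutive cells.

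For the lower bound, let $C$ be a local identifying code in $\mathcal{S}$ and let $\bc\in C$; I would bound $s(\bc)$ according to $|I_C(\bc)|$. If $|I_C(\bc)|=1$, then $\bc$ has no codeword neighbour and each neighbour of $\bc$ must be covered at least twice (to be separated from $\bc$), so $s(\bc)\le 1+4\cdot\tfrac12=3$. If $|I_C(\bc)|\in\{3,4,5\}$, then $\bc$ has $2$, $3$ or $4$ codeword neighbours, each covered at least twice, giving $s(\bc)\le\tfrac13+2\cdot\tfrac12+2=\tfrac{10}{3}$, or $\tfrac{11}{4}$, or $\tfrac{11}{5}$, all below $\tfrac{11}{3}$. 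The only delicate case is $|I_C(\bc)|=2$: here $\bc$ has a unique codeword neighbour $\bc'$, and since $C$ separates the neighbours $\bc,\bc'$ we get $|I_C(\bc')|\ge 3$. If some non-codeword neighbour of $\bc$ is covered twice, then $s(\bc)\le\tfrac12+\tfrac13+\tfrac12+1+1=\tfrac{10}{3}$; otherwise $s(\bc)=\tfrac72+\tfrac1{|I_C(\bc')|}\le\tfrac{23}{6}=\tfrac{11}{3}+\tfrac16$, and I would let $\bc$ transfer the excess $s(\bc)-\tfrac{11}{3}$ to $\bc'$. It then remains to check the recipient side: a codeword $\bc'$ with $|I_C(\bc')|=m\in\{3,4,5\}$ has $m-1$ codeword neighbours, so $s(\bc')\le\tfrac1m+(m-1)\tfrac12+(5-m)$, and every donor is a codeword neighbour of $\bc'$ whose $I$-set has size $2$ (such a codeword has $\bc'$ as its sole codeword neighbour, hence it neither receives share nor donates elsewhere), so $\bc'$ absorbs at most $m-1$ transfers of size at most $\tfrac1m-\tfrac16$; a short computation gives $s'(\bc')\le\tfrac{11}{3}$, with equality only at $m=3$. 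Since donors and recipients are disjoint, $s'$ is a legitimate modified share with $s'(\bc)\le\tfrac{11}{3}$ for every $\bc$, and Lemma~\ref{Lemma shareshift} yields $D(C)\ge\tfrac{3}{11}$.

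For the upper bound I would take $C=\{(x,y)\in\Z^2:(x-4y)\bmod 11\in\{0,1,2\}\}$, which has density $\tfrac{3}{11}$ and is invariant under $(x,y)\mapsto(x+11,y)$ and $(x,y)\mapsto(x+4,y+1)$, so all verifications reduce to finitely many cases. Its codewords decompose into horizontal triples $T=\{(x,y),(x+1,y),(x+2,y)\}$ with $x-4y\equiv 0\pmod{11}$, and the closed neighbourhoods of these triples partition $\Z^2$: each row of $\Z^2$ receives five columns from the triple in its own row and three from each triple in the two adjacent rows, and for shift $4$ these $5+3+3=11$ columns are pairwise incongruent modulo $11$. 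Hence $C$ is a covering code, every non-codeword has exactly one codeword in its closed neighbourhood (the partition, together with triangle-freeness, rules out a non-codeword adjacent to two codewords), the middle codeword of a triple $T$ has $I$-set $T$, and each end codeword of $T$ has a $2$-element $I$-set. Then two adjacent non-codewords have distinct singleton $I$-sets, since a common codeword would create a triangle, impossible in $\mathcal{S}$; a codeword and an adjacent non-codeword are separated because their $I$-sets have different sizes ($2$ or $3$ versus $1$); and the only adjacent codeword pairs lie inside one triple, where the $I$-set has size $3$ at the centre and $2$ at the ends. So $C$ is a local identifying code and $\gamma^{L-ID}(\mathcal{S})\le\tfrac{3}{11}$; one checks moreover that every codeword of $C$ has modified share exactly $\tfrac{11}{3}$, so the construction attains the discharged bound precisely.

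The main obstacle is the case $|I_C(\bc)|=2$ of the lower bound: the plain share estimate there is $\tfrac{23}{6}>\tfrac{11}{3}$, so discharging is genuinely required, and making it rigorous forces one to confirm that donors and recipients never coincide and that a recipient with $m-1$ codeword neighbours can still absorb $m-1$ transfers. Everything else — the short case analysis for $s(\bc)$ when $|I_C(\bc)|\neq 2$ and the periodic verification of the construction — is routine.
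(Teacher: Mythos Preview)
Your proposal is correct and follows essentially the same approach as the paper: a share-shifting argument for the lower bound (transferring surplus share from codewords with $|I(\bc)|=2$ to their unique codeword neighbour, which necessarily has $|I|\ge 3$) and a periodic ``horizontal triples'' construction of density $\tfrac{3}{11}$ for the upper bound. The only cosmetic differences are that the paper shifts a fixed $\tfrac16$ whenever $|I(\bc)|=2$ rather than the exact excess $\tfrac1m-\tfrac16$, and the paper presents the construction via a figure with vertical triples instead of your explicit formula $C=\{(x,y):(x-4y)\bmod 11\in\{0,1,2\}\}$; both variants work for the same reasons.
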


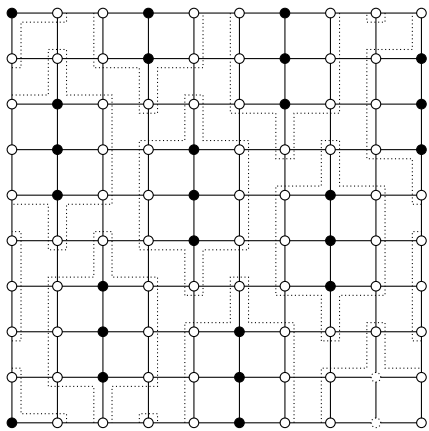
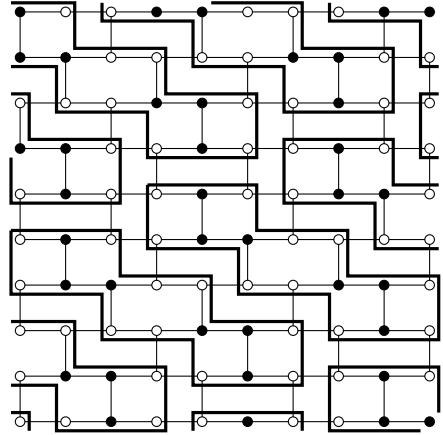
\begin{figure}[ht]
    \centering
    \begin{subfigure}[b]{0.4\textwidth}
      \begin{tikzpicture}[scale=0.6]
        \draw[fill=black] (0,0) circle(3pt);
        \draw[] (0,1) circle(3pt);
        \draw[] (0,2) circle(3pt);
        \draw[] (0,3) circle(3pt);
        \draw[] (0,4) circle(3pt);
        \draw[] (0,5) circle(3pt);
        \draw[] (0,6) circle(3pt);
        \draw[] (0,7) circle(3pt);
        \draw[] (0,8) circle(3pt);
        \draw[fill=black] (0,9) circle(3pt);

        \draw[densely dotted] (1.2,0) -- (1.2,0.2) -- (0.2,0.2) -- (0.2,1.2) -- (0,1.2);

        \draw[densely dotted] (0,1.8) -- (0.2,1.8) -- (0.2,4.2) -- (0,4.2);

        \draw[densely dotted] (0,7.8) -- (0.2,7.8) -- (0.2,8.8) -- (1.2,8.8) -- (1.2,9);

        \draw[] (1,0) circle(3pt);
        \draw[] (1,1) circle(3pt);
        \draw[] (1,2) circle(3pt);
        \draw[] (1,3) circle(3pt);
        \draw[] (1,4) circle(3pt);
        \draw[fill=black] (1,5) circle(3pt);
        \draw[fill=black] (1,6) circle(3pt);
        \draw[fill=black] (1,7) circle(3pt);
        \draw[] (1,8) circle(3pt);
        \draw[] (1,9) circle(3pt);

        \draw[densely dotted] (0,7.2) -- (0.8,7.2) -- (0.8,8.2) -- (1.2,8.2) -- (1.2,7.2) -- (2.2,7.2) -- (2.2,4.8) -- (1.2,4.8) -- (1.2,3.8) -- (0.8,3.8) -- (0.8,4.8) -- (0,4.8);

        \draw[] (2,0) circle(3pt);
        \draw[fill=black] (2,1) circle(3pt);
        \draw[fill=black] (2,2) circle(3pt);
        \draw[fill=black] (2,3) circle(3pt);
        \draw[] (2,4) circle(3pt);
        \draw[] (2,5) circle(3pt);
        \draw[] (2,6) circle(3pt);
        \draw[] (2,7) circle(3pt);
        \draw[] (2,8) circle(3pt);
        \draw[] (2,9) circle(3pt);

        \draw[densely dotted] (1.8,0) -- (1.8,0.8) -- (0.8,0.8) -- (0.8,3.2) -- (1.8,3.2) -- (1.8,4.2) -- (2.2,4.2) -- (2.2,3.2) -- (3.2,3.2) -- (3.2,0.8) -- (2.2,0.8) -- (2.2,0);

        \draw[] (3,0) circle(3pt);
        \draw[] (3,1) circle(3pt);
        \draw[] (3,2) circle(3pt);
        \draw[] (3,3) circle(3pt);
        \draw[] (3,4) circle(3pt);
        \draw[] (3,5) circle(3pt);
        \draw[] (3,6) circle(3pt);
        \draw[] (3,7) circle(3pt);
        \draw[fill=black] (3,8) circle(3pt);
        \draw[fill=black] (3,9) circle(3pt);

        \draw[densely dotted] (1.8,9) -- (1.8,7.8) -- (2.8,7.8) -- (2.8,6.8) -- (3.2,6.8) -- (3.2,7.8) -- (4.2,7.8) -- (4.2,9);

        \draw[densely dotted] (2.8,0) -- (2.8,0.2) -- (3.2,0.2) -- (3.2,0);

        \draw[] (4,0) circle(3pt);
        \draw[] (4,1) circle(3pt);
        \draw[] (4,2) circle(3pt);
        \draw[] (4,3) circle(3pt);
        \draw[fill=black] (4,4) circle(3pt);
        \draw[fill=black] (4,5) circle(3pt);
        \draw[fill=black] (4,6) circle(3pt);
        \draw[] (4,7) circle(3pt);
        \draw[] (4,8) circle(3pt);
        \draw[] (4,9) circle(3pt);

        \draw[densely dotted] (3.8,2.8) -- (3.8,3.8) -- (2.8,3.8) -- (2.8,6.2) -- (3.8,6.2) -- (3.8,7.2) -- (4.2,7.2) -- (4.2,6.2) -- (5.2,6.2) -- (5.2,3.8) -- (4.2,3.8) -- (4.2, 2.8) -- (3.8,2.8);

        \draw[fill=black] (5,0) circle(3pt);
        \draw[fill=black] (5,1) circle(3pt);
        \draw[fill=black] (5,2) circle(3pt);
        \draw[] (5,3) circle(3pt);
        \draw[] (5,4) circle(3pt);
        \draw[] (5,5) circle(3pt);
        \draw[] (5,6) circle(3pt);
        \draw[] (5,7) circle(3pt);
        \draw[] (5,8) circle(3pt);
        \draw[] (5,9) circle(3pt);

        \draw[densely dotted] (3.8,0) -- (3.8,2.2) -- (4.8,2.2) -- (4.8,3.2) -- (5.2,3.2) -- (5.2, 2.2) -- (6.2,2.2) -- (6.2,0);

        \draw[] (6,0) circle(3pt);
        \draw[] (6,1) circle(3pt);
        \draw[] (6,2) circle(3pt);
        \draw[] (6,3) circle(3pt);
        \draw[] (6,4) circle(3pt);
        \draw[] (6,5) circle(3pt);
        \draw[] (6,6) circle(3pt);
        \draw[fill=black] (6,7) circle(3pt);
        \draw[fill=black] (6,8) circle(3pt);
        \draw[fill=black] (6,9) circle(3pt);

        \draw[densely dotted] (7.2,9) -- (7.2,6.8) -- (6.2,6.8) -- (6.2,5.8) -- (5.8,5.8) -- (5.8,6.8) -- (4.8,6.8) -- (4.8,9);

        \draw[] (7,0) circle(3pt);
        \draw[] (7,1) circle(3pt);
        \draw[] (7,2) circle(3pt);
        \draw[fill=black] (7,3) circle(3pt);
        \draw[fill=black] (7,4) circle(3pt);
        \draw[fill=black] (7,5) circle(3pt);
        \draw[] (7,6) circle(3pt);
        \draw[] (7,7) circle(3pt);
        \draw[] (7,8) circle(3pt);
        \draw[] (7,9) circle(3pt);

        \draw[densely dotted] (6.8,1.8) -- (6.8,2.8) -- (5.8,2.8) -- (5.8,5
        .2) -- (6.8,5.2) -- (6.8,6.2) -- (7.2,6.2) -- (7.2,5.2) -- (8.2,5.2) -- (8.2,2.8) -- (7.2,2.8) -- (7.2, 1.8) -- (6.8,1.8);

        \draw[densely dotted] (8,0) circle(3pt);
        \draw[densely dotted] (8,1) circle(3pt);
        \draw[] (8,2) circle(3pt);
        \draw[] (8,3) circle(3pt);
        \draw[] (8,4) circle(3pt);
        \draw[] (8,5) circle(3pt);
        \draw[] (8,6) circle(3pt);
        \draw[] (8,7) circle(3pt);
        \draw[] (8,8) circle(3pt);
        \draw[] (8,9) circle(3pt);

        \draw[densely dotted] (6.8,0) -- (6.8,1.2) -- (7.8,1.2) -- (7.8,2.2) -- (8.2,2.2) -- (8.2,1.2) -- (9,1.2);

        \draw[densely dotted] (8.2,9) -- (8.2,8.8) -- (7.8,8.8) -- (7.8,9);

        \draw[] (9,0) circle(3pt);
        \draw[] (9,1) circle(3pt);
        \draw[] (9,2) circle(3pt);
        \draw[] (9,3) circle(3pt);
        \draw[] (9,4) circle(3pt);
        \draw[] (9,5) circle(3pt);
        \draw[fill=black] (9,6) circle(3pt);
        \draw[fill=black] (9,7) circle(3pt);
        \draw[fill=black] (9,8) circle(3pt);
        \draw[] (9,9) circle(3pt);

        \draw[densely dotted] (9,1.8) -- (8.8,1.8) -- (8.8,4.2) -- (9,4.2);

        \draw[densely dotted] (9,4.8) -- (8.8,4.8) -- (8.8,5.8) -- (7.8,5.8) --(7.8,8.2) -- (8.8,8.2) -- (8.8,9);

        \draw (0.1,0) -- (0.9,0);
        \draw (1.1,0) -- (1.9,0);
        \draw (2.1,0) -- (2.9,0);
        \draw (3.1,0) -- (3.9,0);
        \draw (4.1,0) -- (4.9,0);
        \draw (5.1,0) -- (5.9,0);
        \draw (6.1,0) -- (6.9,0);
        \draw (7.1,0) -- (7.9,0);
        \draw (8.1,0) -- (8.9,0);

        \draw (0.1,1) -- (0.9,1);
        \draw (1.1,1) -- (1.9,1);
        \draw (2.1,1) -- (2.9,1);
        \draw (3.1,1) -- (3.9,1);
        \draw (4.1,1) -- (4.9,1);
        \draw (5.1,1) -- (5.9,1);
        \draw (6.1,1) -- (6.9,1);
        \draw (7.1,1) -- (7.9,1);
        \draw (8.1,1) -- (8.9,1);

        \draw (0.1,2) -- (0.9,2);
        \draw (1.1,2) -- (1.9,2);
        \draw (2.1,2) -- (2.9,2);
        \draw (3.1,2) -- (3.9,2);
        \draw (4.1,2) -- (4.9,2);
        \draw (5.1,2) -- (5.9,2);
        \draw (6.1,2) -- (6.9,2);
        \draw (7.1,2) -- (7.9,2);
        \draw (8.1,2) -- (8.9,2);

        \draw (0.1,3) -- (0.9,3);
        \draw (1.1,3) -- (1.9,3);
        \draw (2.1,3) -- (2.9,3);
        \draw (3.1,3) -- (3.9,3);
        \draw (4.1,3) -- (4.9,3);
        \draw (5.1,3) -- (5.9,3);
        \draw (6.1,3) -- (6.9,3);
        \draw (7.1,3) -- (7.9,3);
        \draw (8.1,3) -- (8.9,3);

        \draw (0.1,4) -- (0.9,4);
        \draw (1.1,4) -- (1.9,4);
        \draw (2.1,4) -- (2.9,4);
        \draw (3.1,4) -- (3.9,4);
        \draw (4.1,4) -- (4.9,4);
        \draw (5.1,4) -- (5.9,4);
        \draw (6.1,4) -- (6.9,4);
        \draw (7.1,4) -- (7.9,4);
        \draw (8.1,4) -- (8.9,4);

        \draw (0.1,5) -- (0.9,5);
        \draw (1.1,5) -- (1.9,5);
        \draw (2.1,5) -- (2.9,5);
        \draw (3.1,5) -- (3.9,5);
        \draw (4.1,5) -- (4.9,5);
        \draw (5.1,5) -- (5.9,5);
        \draw (6.1,5) -- (6.9,5);
        \draw (7.1,5) -- (7.9,5);
        \draw (8.1,5) -- (8.9,5);

        \draw (0.1,6) -- (0.9,6);
        \draw (1.1,6) -- (1.9,6);
        \draw (2.1,6) -- (2.9,6);
        \draw (3.1,6) -- (3.9,6);
        \draw (4.1,6) -- (4.9,6);
        \draw (5.1,6) -- (5.9,6);
        \draw (6.1,6) -- (6.9,6);
        \draw (7.1,6) -- (7.9,6);
        \draw (8.1,6) -- (8.9,6);

        \draw (0.1,7) -- (0.9,7);
        \draw (1.1,7) -- (1.9,7);
        \draw (2.1,7) -- (2.9,7);
        \draw (3.1,7) -- (3.9,7);
        \draw (4.1,7) -- (4.9,7);
        \draw (5.1,7) -- (5.9,7);
        \draw (6.1,7) -- (6.9,7);
        \draw (7.1,7) -- (7.9,7);
        \draw (8.1,7) -- (8.9,7);

        \draw (0.1,8) -- (0.9,8);
        \draw (1.1,8) -- (1.9,8);
        \draw (2.1,8) -- (2.9,8);
        \draw (3.1,8) -- (3.9,8);
        \draw (4.1,8) -- (4.9,8);
        \draw (5.1,8) -- (5.9,8);
        \draw (6.1,8) -- (6.9,8);
        \draw (7.1,8) -- (7.9,8);
        \draw (8.1,8) -- (8.9,8);

        \draw (0.1,9) -- (0.9,9);
        \draw (1.1,9) -- (1.9,9);
        \draw (2.1,9) -- (2.9,9);
        \draw (3.1,9) -- (3.9,9);
        \draw (4.1,9) -- (4.9,9);
        \draw (5.1,9) -- (5.9,9);
        \draw (6.1,9) -- (6.9,9);
        \draw (7.1,9) -- (7.9,9);
        \draw (8.1,9) -- (8.9,9);

        \draw (0,0.1) -- (0,0.9);
        \draw (0,1.1) -- (0,1.9);
        \draw (0,2.1) -- (0,2.9);
        \draw (0,3.1) -- (0,3.9);
        \draw (0,4.1) -- (0,4.9);
        \draw (0,5.1) -- (0,5.9);
        \draw (0,6.1) -- (0,6.9);
        \draw (0,7.1) -- (0,7.9);
        \draw (0,8.1) -- (0,8.9);

        \draw (1,0.1) -- (1,0.9);
        \draw (1,1.1) -- (1,1.9);
        \draw (1,2.1) -- (1,2.9);
        \draw (1,3.1) -- (1,3.9);
        \draw (1,4.1) -- (1,4.9);
        \draw (1,5.1) -- (1,5.9);
        \draw (1,6.1) -- (1,6.9);
        \draw (1,7.1) -- (1,7.9);
        \draw (1,8.1) -- (1,8.9);

        \draw (2,0.1) -- (2,0.9);
        \draw (2,1.1) -- (2,1.9);
        \draw (2,2.1) -- (2,2.9);
        \draw (2,3.1) -- (2,3.9);
        \draw (2,4.1) -- (2,4.9);
        \draw (2,5.1) -- (2,5.9);
        \draw (2,6.1) -- (2,6.9);
        \draw (2,7.1) -- (2,7.9);
        \draw (2,8.1) -- (2,8.9);

        \draw (3,0.1) -- (3,0.9);
        \draw (3,1.1) -- (3,1.9);
        \draw (3,2.1) -- (3,2.9);
        \draw (3,3.1) -- (3,3.9);
        \draw (3,4.1) -- (3,4.9);
        \draw (3,5.1) -- (3,5.9);
        \draw (3,6.1) -- (3,6.9);
        \draw (3,7.1) -- (3,7.9);
        \draw (3,8.1) -- (3,8.9);

        \draw (4,0.1) -- (4,0.9);
        \draw (4,1.1) -- (4,1.9);
        \draw (4,2.1) -- (4,2.9);
        \draw (4,3.1) -- (4,3.9);
        \draw (4,4.1) -- (4,4.9);
        \draw (4,5.1) -- (4,5.9);
        \draw (4,6.1) -- (4,6.9);
        \draw (4,7.1) -- (4,7.9);
        \draw (4,8.1) -- (4,8.9);

        \draw (5,0.1) -- (5,0.9);
        \draw (5,1.1) -- (5,1.9);
        \draw (5,2.1) -- (5,2.9);
        \draw (5,3.1) -- (5,3.9);
        \draw (5,4.1) -- (5,4.9);
        \draw (5,5.1) -- (5,5.9);
        \draw (5,6.1) -- (5,6.9);
        \draw (5,7.1) -- (5,7.9);
        \draw (5,8.1) -- (5,8.9);

        \draw (6,0.1) -- (6,0.9);
        \draw (6,1.1) -- (6,1.9);
        \draw (6,2.1) -- (6,2.9);
        \draw (6,3.1) -- (6,3.9);
        \draw (6,4.1) -- (6,4.9);
        \draw (6,5.1) -- (6,5.9);
        \draw (6,6.1) -- (6,6.9);
        \draw (6,7.1) -- (6,7.9);
        \draw (6,8.1) -- (6,8.9);

        \draw (7,0.1) -- (7,0.9);
        \draw (7,1.1) -- (7,1.9);
        \draw (7,2.1) -- (7,2.9);
        \draw (7,3.1) -- (7,3.9);
        \draw (7,4.1) -- (7,4.9);
        \draw (7,5.1) -- (7,5.9);
        \draw (7,6.1) -- (7,6.9);
        \draw (7,7.1) -- (7,7.9);
        \draw (7,8.1) -- (7,8.9);

        \draw (8,0.1) -- (8,0.9);
        \draw (8,1.1) -- (8,1.9);
        \draw (8,2.1) -- (8,2.9);
        \draw (8,3.1) -- (8,3.9);
        \draw (8,4.1) -- (8,4.9);
        \draw (8,5.1) -- (8,5.9);
        \draw (8,6.1) -- (8,6.9);
        \draw (8,7.1) -- (8,7.9);
        \draw (8,8.1) -- (8,8.9);

        \draw (9,0.1) -- (9,0.9);
        \draw (9,1.1) -- (9,1.9);
        \draw (9,2.1) -- (9,2.9);
        \draw (9,3.1) -- (9,3.9);
        \draw (9,4.1) -- (9,4.9);
        \draw (9,5.1) -- (9,5.9);
        \draw (9,6.1) -- (9,6.9);
        \draw (9,7.1) -- (9,7.9);
        \draw (9,8.1) -- (9,8.9);
    \end{tikzpicture}
      \caption{A local identifying code of density $\frac{3}{11}$ in the square grid.}
       \label{Construction in square grid}
    \end{subfigure}\qquad\qquad
    \begin{subfigure}[b]{0.4\textwidth}
       \begin{tikzpicture}[scale=0.6]
        \draw[] (0,0) circle(3pt);
        \draw[] (0,1) circle(3pt);
        \draw[] (0,2) circle(3pt);
        \draw[] (0,3) circle(3pt);
        \draw[] (0,4) circle(3pt);
        \draw[] (0,5) circle(3pt);
        \draw[fill=black] (0,6) circle(3pt);
        \draw[] (0,7) circle(3pt);
        \draw[fill=black] (0,8) circle(3pt);
        \draw[fill=black] (0,9) circle(3pt);

        \draw[] (1,0) circle(3pt);
        \draw[fill=black] (1,1) circle(3pt);
        \draw[] (1,2) circle(3pt);
        \draw[fill=black] (1,3) circle(3pt);
        \draw[fill=black] (1,4) circle(3pt);
        \draw[fill=black] (1,5) circle(3pt);
        \draw[fill=black] (1,6) circle(3pt);
        \draw[] (1,7) circle(3pt);
        \draw[fill=black] (1,8) circle(3pt);
        \draw[] (1,9) circle(3pt);

        \draw[fill=black] (2,0) circle(3pt);
        \draw[fill=black] (2,1) circle(3pt);
        \draw[] (2,2) circle(3pt);
        \draw[fill=black] (2,3) circle(3pt);
        \draw[] (2,4) circle(3pt);
        \draw[] (2,5) circle(3pt);
        \draw[] (2,6) circle(3pt);
        \draw[] (2,7) circle(3pt);
        \draw[] (2,8) circle(3pt);
        \draw[] (2,9) circle(3pt);

        \draw[] (3,0) circle(3pt);
        \draw[] (3,1) circle(3pt);
        \draw[] (3,2) circle(3pt);
        \draw[] (3,3) circle(3pt);
        \draw[] (3,4) circle(3pt);
        \draw[] (3,5) circle(3pt);
        \draw[] (3,6) circle(3pt);
        \draw[fill=black] (3,7) circle(3pt);
        \draw[] (3,8) circle(3pt);
        \draw[fill=black] (3,9) circle(3pt);

        \draw[] (4,0) circle(3pt);
        \draw[] (4,1) circle(3pt);
        \draw[fill=black] (4,2) circle(3pt);
        \draw[] (4,3) circle(3pt);
        \draw[fill=black] (4,4) circle(3pt);
        \draw[fill=black] (4,5) circle(3pt);
        \draw[fill=black] (4,6) circle(3pt);
        \draw[fill=black] (4,7) circle(3pt);
        \draw[] (4,8) circle(3pt);
        \draw[fill=black] (4,9) circle(3pt);

        \draw[fill=black] (5,0) circle(3pt);
        \draw[fill=black] (5,1) circle(3pt);
        \draw[fill=black] (5,2) circle(3pt);
        \draw[] (5,3) circle(3pt);
        \draw[fill=black] (5,4) circle(3pt);
        \draw[] (5,5) circle(3pt);
        \draw[] (5,6) circle(3pt);
        \draw[] (5,7) circle(3pt);
        \draw[] (5,8) circle(3pt);
        \draw[] (5,9) circle(3pt);

        \draw[] (6,0) circle(3pt);
        \draw[] (6,1) circle(3pt);
        \draw[] (6,2) circle(3pt);
        \draw[] (6,3) circle(3pt);
        \draw[] (6,4) circle(3pt);
        \draw[] (6,5) circle(3pt);
        \draw[] (6,6) circle(3pt);
        \draw[] (6,7) circle(3pt);
        \draw[fill=black] (6,8) circle(3pt);
        \draw[] (6,9) circle(3pt);

        \draw[] (7,0) circle(3pt);
        \draw[] (7,1) circle(3pt);
        \draw[] (7,2) circle(3pt);
        \draw[fill=black] (7,3) circle(3pt);
        \draw[] (7,4) circle(3pt);
        \draw[fill=black] (7,5) circle(3pt);
        \draw[fill=black] (7,6) circle(3pt);
        \draw[fill=black] (7,7) circle(3pt);
        \draw[fill=black] (7,8) circle(3pt);
        \draw[] (7,9) circle(3pt);

        \draw[fill=black] (8,0) circle(3pt);
        \draw[fill=black] (8,1) circle(3pt);
        \draw[fill=black] (8,2) circle(3pt);
        \draw[fill=black] (8,3) circle(3pt);
        \draw[] (8,4) circle(3pt);
        \draw[fill=black] (8,5) circle(3pt);
        \draw[] (8,6) circle(3pt);
        \draw[] (8,7) circle(3pt);
        \draw[] (8,8) circle(3pt);
        \draw[fill=black] (8,9) circle(3pt);

        \draw[fill=black] (9,0) circle(3pt);
        \draw[] (9,1) circle(3pt);
        \draw[] (9,2) circle(3pt);
        \draw[] (9,3) circle(3pt);
        \draw[] (9,4) circle(3pt);
        \draw[] (9,5) circle(3pt);
        \draw[] (9,6) circle(3pt);
        \draw[] (9,7) circle(3pt);
        \draw[] (9,8) circle(3pt);
        \draw[fill=black] (9,9) circle(3pt);

        \draw (0.1,0) -- (0.9,0);
        \draw (1.1,0) -- (1.9,0);
        \draw (2.1,0) -- (2.9,0);
        \draw (3.1,0) -- (3.9,0);
        \draw (4.1,0) -- (4.9,0);
        \draw (5.1,0) -- (5.9,0);
        \draw (6.1,0) -- (6.9,0);
        \draw (7.1,0) -- (7.9,0);
        \draw (8.1,0) -- (8.9,0);

        \draw (0.1,1) -- (0.9,1);
        \draw (1.1,1) -- (1.9,1);
        \draw (2.1,1) -- (2.9,1);
        \draw (3.1,1) -- (3.9,1);
        \draw (4.1,1) -- (4.9,1);
        \draw (5.1,1) -- (5.9,1);
        \draw (6.1,1) -- (6.9,1);
        \draw (7.1,1) -- (7.9,1);
        \draw (8.1,1) -- (8.9,1);

        \draw (0.1,2) -- (0.9,2);
        \draw (1.1,2) -- (1.9,2);
        \draw (2.1,2) -- (2.9,2);
        \draw (3.1,2) -- (3.9,2);
        \draw (4.1,2) -- (4.9,2);
        \draw (5.1,2) -- (5.9,2);
        \draw (6.1,2) -- (6.9,2);
        \draw (7.1,2) -- (7.9,2);
        \draw (8.1,2) -- (8.9,2);

        \draw (0.1,3) -- (0.9,3);
        \draw (1.1,3) -- (1.9,3);
        \draw (2.1,3) -- (2.9,3);
        \draw (3.1,3) -- (3.9,3);
        \draw (4.1,3) -- (4.9,3);
        \draw (5.1,3) -- (5.9,3);
        \draw (6.1,3) -- (6.9,3);
        \draw (7.1,3) -- (7.9,3);
        \draw (8.1,3) -- (8.9,3);

        \draw (0.1,4) -- (0.9,4);
        \draw (1.1,4) -- (1.9,4);
        \draw (2.1,4) -- (2.9,4);
        \draw (3.1,4) -- (3.9,4);
        \draw (4.1,4) -- (4.9,4);
        \draw (5.1,4) -- (5.9,4);
        \draw (6.1,4) -- (6.9,4);
        \draw (7.1,4) -- (7.9,4);
        \draw (8.1,4) -- (8.9,4);

        \draw (0.1,5) -- (0.9,5);
        \draw (1.1,5) -- (1.9,5);
        \draw (2.1,5) -- (2.9,5);
        \draw (3.1,5) -- (3.9,5);
        \draw (4.1,5) -- (4.9,5);
        \draw (5.1,5) -- (5.9,5);
        \draw (6.1,5) -- (6.9,5);
        \draw (7.1,5) -- (7.9,5);
        \draw (8.1,5) -- (8.9,5);

        \draw (0.1,6) -- (0.9,6);
        \draw (1.1,6) -- (1.9,6);
        \draw (2.1,6) -- (2.9,6);
        \draw (3.1,6) -- (3.9,6);
        \draw (4.1,6) -- (4.9,6);
        \draw (5.1,6) -- (5.9,6);
        \draw (6.1,6) -- (6.9,6);
        \draw (7.1,6) -- (7.9,6);
        \draw (8.1,6) -- (8.9,6);

        \draw (0.1,7) -- (0.9,7);
        \draw (1.1,7) -- (1.9,7);
        \draw (2.1,7) -- (2.9,7);
        \draw (3.1,7) -- (3.9,7);
        \draw (4.1,7) -- (4.9,7);
        \draw (5.1,7) -- (5.9,7);
        \draw (6.1,7) -- (6.9,7);
        \draw (7.1,7) -- (7.9,7);
        \draw (8.1,7) -- (8.9,7);

        \draw (0.1,8) -- (0.9,8);
        \draw (1.1,8) -- (1.9,8);
        \draw (2.1,8) -- (2.9,8);
        \draw (3.1,8) -- (3.9,8);
        \draw (4.1,8) -- (4.9,8);
        \draw (5.1,8) -- (5.9,8);
        \draw (6.1,8) -- (6.9,8);
        \draw (7.1,8) -- (7.9,8);
        \draw (8.1,8) -- (8.9,8);

        \draw (0.1,9) -- (0.9,9);
        \draw (1.1,9) -- (1.9,9);
        \draw (2.1,9) -- (2.9,9);
        \draw (3.1,9) -- (3.9,9);
        \draw (4.1,9) -- (4.9,9);
        \draw (5.1,9) -- (5.9,9);
        \draw (6.1,9) -- (6.9,9);
        \draw (7.1,9) -- (7.9,9);
        \draw (8.1,9) -- (8.9,9);

        \draw (0,0.1) -- (0,0.9);
        \draw (0,2.1) -- (0,2.9);
        \draw (0,4.1) -- (0,4.9);
        \draw (0,6.1) -- (0,6.9);
        \draw (0,8.1) -- (0,8.9);

        \draw (1,1.1) -- (1,1.9);
        \draw (1,3.1) -- (1,3.9);
        \draw (1,5.1) -- (1,5.9);
        \draw (1,7.1) -- (1,7.9);

        \draw (2,0.1) -- (2,0.9);
        \draw (2,2.1) -- (2,2.9);
        \draw (2,4.1) -- (2,4.9);
        \draw (2,6.1) -- (2,6.9);
        \draw (2,8.1) -- (2,8.9);

        \draw (3,1.1) -- (3,1.9);
        \draw (3,3.1) -- (3,3.9);
        \draw (3,5.1) -- (3,5.9);
        \draw (3,7.1) -- (3,7.9);

        \draw (4,0.1) -- (4,0.9);
        \draw (4,2.1) -- (4,2.9);
        \draw (4,4.1) -- (4,4.9);
        \draw (4,6.1) -- (4,6.9);
        \draw (4,8.1) -- (4,8.9);

        \draw (5,1.1) -- (5,1.9);
        \draw (5,3.1) -- (5,3.9);
        \draw (5,5.1) -- (5,5.9);
        \draw (5,7.1) -- (5,7.9);

        \draw (6,0.1) -- (6,0.9);
        \draw (6,2.1) -- (6,2.9);
        \draw (6,4.1) -- (6,4.9);
        \draw (6,6.1) -- (6,6.9);
        \draw (6,8.1) -- (6,8.9);

        \draw (7,1.1) -- (7,1.9);
        \draw (7,3.1) -- (7,3.9);
        \draw (7,5.1) -- (7,5.9);
        \draw (7,7.1) -- (7,7.9);

        \draw (8,0.1) -- (8,0.9);
        \draw (8,2.1) -- (8,2.9);
        \draw (8,4.1) -- (8,4.9);
        \draw (8,6.1) -- (8,6.9);
        \draw (8,8.1) -- (8,8.9);

        \draw (9,1.1) -- (9,1.9);
        \draw (9,3.1) -- (9,3.9);
        \draw (9,5.1) -- (9,5.9);
        \draw (9,7.1) -- (9,7.9);

        \draw[very thick] (-0.2,0.2) -- (0.2,0.2) -- (0.2,-0.2);

        \draw[very thick] (-0.2,0.8) -- (0.8,0.8) -- (0.8,-0.2) -- (3.2,-0.2) -- (3.2,1.2) -- (1.2,1.2) -- (1.2,2.2) -- (-0.2,2.2);

        \draw[very thick] (-0.2,4.2) -- (-0.2,2.8) -- (1.8,2.8) -- (1.8,1.8) -- (3.8,1.8) -- (3.8,0.8) -- (6.2,0.8) -- (6.2,2.2) -- (4.2,2.2) -- (4.2,3.2) -- (2.2,3.2) -- (2.2,4.2) -- (-0.2,4.2);

        \draw[very thick] (-0.2+3,4.2+1) -- (-0.2+3,2.8+1) -- (1.8+3,2.8+1) -- (1.8+3,1.8+1) -- (3.8+3,1.8+1) -- (3.8+3,0.8+1) -- (6.2+3,0.8+1) -- (6.2+3,2.2+1) -- (4.2+3,2.2+1) -- (4.2+3,3.2+1) -- (2.2+3,3.2+1) -- (2.2+3,4.2+1) -- (-0.2+3,4.2+1);

        \draw[very thick]
        (9.2,5.2) -- (8.2,5.2) -- (8.2,6.2) --
        (-0.2+6,4.2+2) -- (-0.2+6,2.8+2) -- (1.8+6,2.8+2) -- (1.8+6,1.8+2)
        -- (9.2,3.8);

        \draw[very thick] (-0.2,2.8+5) -- (1.8-1,2.8+5) -- (1.8-1,1.8+5) -- (3.8-1,1.8+5) -- (3.8-1,0.8+5) -- (6.2-1,0.8+5) -- (6.2-1,2.2+5) -- (4.2-1,2.2+5) -- (4.2-1,3.2+5) -- (2.2-1,3.2+5) -- (2.2-1,4.2+5) -- (-0.2,4.2+5);

        \draw[very thick] (-0.2,5.8) --(-0.2,4.8) -- (2.2,4.8) -- (2.2,6.2) -- (0.2,6.2) -- (0.2,7.2) -- (-0.2,7.2);

        \draw[very thick] (3.8,-0.2) -- (3.8,0.2) -- (6.2,0.2) -- (6.2,-0.2);

        \draw[very thick] (9.2,0.2) -- (9.2,1.2) -- (6.8,1.2) -- (6.8,-0.2) -- (8.8,-0.2);

        \draw[very thick] (1.8,9.2) -- (1.8,8.8) -- (3.8,8.8) -- (3.8,7.8) -- (5.8,7.8) -- (5.8,6.8) -- (8.2,6.8) -- (8.2,8.2) -- (6.2,8.2) -- (6.2,9.2) -- (4.2,9.2);

        \draw[very thick] (6.8,9.2) -- (6.8,8.8) -- (8.8,8.8)-- (8.8,7.8) -- (9.2,7.8);

        \draw[very thick] (9.2,7.2) --(8.8,7.2) -- (8.8,5.8) -- (9.2,5.8);
    \end{tikzpicture}
      \caption{A local identifying code of density $\frac{3}{8}$ in the hexagonal grid.}
      \label{Construction LID hexagonal grid}
    \end{subfigure}\vspace*{-2mm}
    \caption{Local identifying codes in the square and hexagonal grids.}\vspace*{-3mm}
    \end{figure}

\begin{proof}
    By a construction, in Figure \ref{Construction in square grid}, of a local identifying code in the square grid of density $\frac{3}{11}$, we have $\gamma^{L-ID}(\mathcal{S})  \leq \frac{3}{11}$.
    Next, we prove that $\gamma^{L-ID}(\mathcal{S}) \geq \frac{3}{11}$ using a share shifting scheme.

\medskip
    Let $C$ be a local identifying code in the square grid.
    In our share shifting scheme we shift $1/6$ share units from a codeword $\bc \in C$ to its unique codeword neighbour if $|I(\bc)| = 2$.
    In all the other cases no share is shifted.
    Let us denote by $s'$ the modified share function after applying the introduced scheme.
    We claim that $s'(\bc) \leq \frac{11}{3}$ for all $\bc \in C$ which yields by Lemma \ref{Lemma shareshift} that $D(C) \geq \frac{3}{11}$ and hence $\gamma^{L-ID}(\mathcal{S}) \geq \frac{3}{11}$.
    So, let $\bc \in C$ be an arbitrary codeword of $C$.

    \medskip
     Assume first that $|I(\bc)| = 1$, {\it i.e.}, that $I(\bc) = \{ \bc \}$. Every neighbour of $\bc$ is covered by at least 2 codewords since otherwise the code $C$ would not separate $\bc$ from all of its neighbours.
    So, in this case we have
    $
    s(\bc) \leq 1 + 4 \cdot \frac{1}{2} = 3 < \frac{11}{3}.
    $
    Since $\bc$ has no codeword neighbours, no share is shifted to $\bc$ and hence
    $$
    s'(\bc) = s(\bc) = 3 < \frac{11}{3}.
    $$

    Assume then that $|I(\bc)| = 2$ and let $\bc'$ be the unique codeword neighbour of $\bc$.
    Since $C$ separates $\bc$ and $\bc'$, we have $|I(\bc')| \geq 3$ and hence
    $s(\bc) \leq \frac{1}{2} + \frac{1}{3} + 3 \cdot 1 = \frac{23}{6} = 3 \frac{5}{6}$.
    Next, we shift $1/6$ share units from $\bc$ and no share is shifted to $\bc$ because $\bc$ has no codeword neighbours with exactly one codeword neighbour.
    So, we have
    $$
    s'(\bc) \leq \frac{23}{6} - \frac{1}{6} = 3 \frac{4}{6} = \frac{11}{3}.
    $$

    Finally, assume that $|I(\bc)| \geq 3$.
    If $|I(\bc)| = 3$, then $s(\bc) \leq \frac{1}{3} + 2 \cdot \frac{1}{2} + 2 \cdot 1 = \frac{10}{3} = 3 \frac{1}{3}$ and hence
    $s'(\bc) \leq \frac{10}{3} + 2 \cdot \frac{1}{6} = \frac{11}{3}$.
    If $|I(\bc)| \geq 4$, then
    $s(\bc) \leq \frac{1}{4} + 3 \cdot \frac{1}{2} + 1 = \frac{11}{4} = 2 \frac{3}{4}$ and hence
    $s'(\bc) \leq 2 \frac{3}{4} + 4 \cdot \frac{1}{6} < \frac{11}{3}$.

\medskip
    We have shown that $s'(\bc) \leq \frac{11}{3}$ for an arbitrary $\bc \in C$.
    The claim follows.
  \end{proof}

\begin{theorem}
$$
\gamma^{L-ID}(\mathcal{H}) = \frac{3}{8}.
$$
\end{theorem}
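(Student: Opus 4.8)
The plan is to prove the two inequalities separately. For $\gamma^{L-ID}(\mathcal{H}) \le \frac{3}{8}$ it suffices to check that the periodic configuration drawn in Figure~\ref{Construction LID hexagonal grid}, which has density $\frac{3}{8}$, is indeed a local identifying code: it is a covering code by inspection, it separates non-codeword neighbours automatically since $\mathcal{H}$ is triangle-free (Lemma~\ref{triangle-free lemma}), and it separates each codeword from its neighbours, which one verifies on a fundamental periodic tile (with enough surrounding context).

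For the reverse inequality $\gamma^{L-ID}(\mathcal{H}) \ge \frac{3}{8}$, I would use the share shifting method of Lemma~\ref{Lemma shareshift} with target bound $s'(\bc) \le \frac{8}{3}$ for every codeword $\bc$ of an arbitrary local identifying code $C \subseteq \Z^2$. The shifting rule is: if $|I(\bc)| = 2$, move $\frac{1}{6}$ share units from $\bc$ to its unique codeword neighbour; otherwise move nothing. The crucial preliminary fact is that when $|I(\bc)| = 2$ the unique codeword neighbour $\bc'$ satisfies $|I(\bc')| \ge 3$: indeed $I(\bc) = \{\bc,\bc'\} \subseteq I(\bc')$, and separation of the neighbours $\bc$ and $\bc'$ forces the inclusion to be strict. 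Consequently every transfer is from a codeword with $|I| = 2$ to one with $|I| \ge 3$, so no codeword both sends and receives, and a codeword $\bd$ with $|I(\bd)| = k \ge 3$ receives at most $\frac{k-1}{6}$ (one $\frac{1}{6}$ per codeword neighbour).

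The argument then reduces to a case check on $|I(\bc)| \in \{1,2,3,4\}$ (every vertex of $\mathcal{H}$ has degree $3$, so $|N[\bc]| = 4$). If $|I(\bc)| = 1$, each of the three neighbours is covered by at least two codewords, hence $s'(\bc) = s(\bc) \le 1 + 3\cdot\frac{1}{2} = \frac{5}{2} < \frac{8}{3}$. If $|I(\bc)| = 2$, then $s(\bc) \le \frac{1}{2} + \frac{1}{3} + 1 + 1 = \frac{17}{6}$ while $\bc$ sends $\frac{1}{6}$ and receives nothing, so $s'(\bc) \le \frac{16}{6} = \frac{8}{3}$. If $|I(\bc)| = 3$, its neighbours are two codewords and one non-codeword; each codeword neighbour contributes $\frac{1}{|I|} \le \frac{1}{2}$ to $s(\bc)$ and sends back $\frac{1}{6}$ only when $|I| = 2$, so it contributes at most $\frac{2}{3}$ to $s'(\bc)$, whence $s'(\bc) \le \frac{1}{3} + 2\cdot\frac{2}{3} + 1 = \frac{8}{3}$. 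If $|I(\bc)| = 4$, all three neighbours are codewords and the same per-neighbour bound gives $s'(\bc) \le \frac{1}{4} + 3\cdot\frac{2}{3} = \frac{9}{4} < \frac{8}{3}$. Thus $s'(\bc) \le \frac{8}{3}$ always, and Lemma~\ref{Lemma shareshift} gives $D(C) \ge \frac{3}{8}$. The delicate points — and the likely source of trouble — are the two tight cases $|I(\bc)| \in \{2,3\}$, where one must verify that $\frac{1}{6}$ is precisely the amount that can safely be shifted and that a codeword of $I$-size $3$ does not accumulate too much incoming share; checking that the density-$\frac{3}{8}$ construction is genuinely a local identifying code is routine but somewhat tedious.
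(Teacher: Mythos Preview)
Your proposal is correct and follows essentially the same approach as the paper: the same density-$\frac{3}{8}$ construction for the upper bound, and the same share shifting rule (move $\frac{1}{6}$ from each codeword with $|I|=2$ to its unique codeword neighbour) followed by the same four-case analysis for the lower bound. Your per-neighbour bookkeeping in the $|I(\bc)|\ge 3$ cases (bounding each codeword neighbour's total contribution to $s'(\bc)$ by $\frac{2}{3}$) is a slightly cleaner packaging of exactly the arithmetic the paper does.
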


\begin{proof}
    By a construction in Figure \ref{Construction LID hexagonal grid} of a local identifying code in the hexagonal grid of density $\frac{3}{8}$, we have $\gamma^{L-ID}(\mathcal{H})  \leq \frac{3}{8}$.
    Next, we prove that $\gamma^{L-ID}(\mathcal{H}) \geq \frac{3}{8}$ using a share shifting scheme.

    Let $C$ be a local identifying code in the hexagonal grid.
    In our share shifting scheme we shift $1/6$ share units from a codeword $\bc \in C$ to its unique codeword neighbour if $|I(\bc)| = 2$.
    In all the other cases no share is shifted.
    Let us denote by $s'$ the modified share function after applying the introduced scheme.
    We claim that $s'(\bc) \leq \frac{8}{3}$ for all $\bc \in C$ which yields by Lemma \ref{Lemma shareshift} that $D(C) \geq \frac{3}{8}$ and hence $\gamma^{L-ID}(\mathcal{H}) \geq \frac{3}{8}$.
    So, let $\bc \in C$ be an arbitrary codeword of $C$.

\medskip
    Assume first that
    $|I(\bc)| = 1$.
    Every neighbour of $\bc$ is covered by at least two codewords since otherwise the code $C$ would not separate $\bc$ from all of its neighbours. In this case we have $s(\bc) \leq 1 + 3 \cdot \frac{1}{2} < \frac{8}{3}$.
    Since $\bc$ has no codeword neighbours, no share is shifted to $\bc$ and hence
    $$
    s'(\bc) = s(\bc) < \frac{8}{3}.
    $$

    Assume then that $|I(\bc)|  = 2$ and let $\bc'$ be the unique codeword neighbour of $\bc$. Since $C$ separates $\bc$ and $\bc'$, we have $|I(\bc')| \geq 3$ and hence
    $s(\bc) \leq \frac{1}{2} + \frac{1}{3} + 2 \cdot 1 = \frac{17}{6} = 2 \frac{5}{6}$.

   \medskip
    Now we shift $1/6$ share units from $\bc$ to $\bc'$ and clearly no share is shifted to $\bc$.
    Thus,
    $$
    s'(\bc) \leq \frac{17}{6} - \frac{1}{6} = \frac{8}{3}.
    $$

    Finally, assume that $|I(\bc)| \geq 3$.
    If $|I(\bc)| = 3$, then $s(\bc) \leq \frac{1}{3} + 2 \cdot \frac{1}{2} + 1 = 2 \frac{1}{3}$ and hence
    $s'(\bc) \leq 2 \frac{1}{3} + 2 \cdot \frac{1}{6} = \frac{8}{3}$.
    If $|I(\bc)| = 4$, then $s(\bc) \leq \frac{1}{4} + 3 \cdot \frac{1}{2} = 1 \frac{3}{4}$ and hence
    $s'(\bc) \leq 1 \frac{3}{4} + 3 \cdot \frac{1}{6} < \frac{8}{3}$. \vspace*{-3mm}
 \end{proof}

\subsection{The triangular grid}\label{subsec:triangle}

\begin{theorem}
$$
\gamma^{L-LD}(\mathcal{T}) \in \left[\frac{2}{11}, \frac{2}{9}\right].
$$
\end{theorem}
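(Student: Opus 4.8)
The plan is to prove the two bounds $\gamma^{L-LD}(\mathcal{T})\ge\frac{2}{11}$ and $\gamma^{L-LD}(\mathcal{T})\le\frac{2}{9}$ separately: the first by a share estimate via Lemma~\ref{share infinite}, the second by an explicit periodic construction.

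\emph{Lower bound.} Let $C$ be an arbitrary local locating-dominating code in $\mathcal{T}$ and let $\bc\in C$. Since in $\mathcal{T}$ every vertex has exactly six neighbours and these six neighbours induce a $6$-cycle, the key observation is that \emph{at most three} neighbours $v$ of $\bc$ can satisfy $|I_C(v)|=1$. Indeed, such a $v$ must be a non-codeword with $I_C(v)=\{\bc\}$ (a codeword $v\sim\bc$ would have $\{v,\bc\}\subseteq I_C(v)$), and if two such neighbours were adjacent they would be non-codeword neighbours with equal $I$-sets, contradicting that $C$ is local locating-dominating; hence these neighbours form an independent set in the neighbourhood $6$-cycle, whose independence number is $3$. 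Writing $A$ for the set of neighbours $v$ of $\bc$ with $|I_C(v)|=1$, and using $|I_C(v)|\ge 2$ for every other neighbour, we get
\[
s(\bc)=\frac{1}{|I_C(\bc)|}+\sum_{v\sim\bc}\frac{1}{|I_C(v)|}\le 1+|A|+(6-|A|)\cdot\frac12=4+\frac{|A|}{2}\le\frac{11}{2}.
\]
By Lemma~\ref{share infinite} with $\alpha=\frac{11}{2}$ this yields $D(C)\ge\frac{2}{11}$, and as $C$ was arbitrary, $\gamma^{L-LD}(\mathcal{T})\ge\frac{2}{11}$.

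\emph{Upper bound.} I would exhibit the periodic code
\[
C=\{(i,j)\in\Z^2 : i\equiv j\pmod 3 \text{ and } i\not\equiv 0\pmod 3\}=\big((1,1)+3\Z^2\big)\cup\big((2,2)+3\Z^2\big),
\]
which has density $\frac{2}{9}$ (its fundamental domain is a $3\times 3$ block containing two codewords), and verify it is a local locating-dominating code. For domination: the lines in direction $(1,1)$ with $i-j\equiv 0\pmod 3$ carry a codeword at every position whose first coordinate is $\not\equiv 0\pmod 3$, so a non-codeword on such a line is dominated by one of its two line-neighbours; a vertex with $i-j\not\equiv 0\pmod 3$ has its two neighbours on the adjacent codeword-carrying line at first coordinates differing by $1$, hence not both $\equiv 0\pmod 3$, so at least one of them is a codeword. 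For separation of adjacent non-codewords: the pair $(\mathcal{T},C)$ is invariant under $3\Z^2$ and under the graph automorphisms $(i,j)\mapsto(j,i)$ and $(i,j)\mapsto(-i,-j)$, so only finitely many representative non-codewords must be inspected. A direct computation then shows that every non-codeword on a codeword-carrying line has $I$-set equal to two codewords at mutual distance $2$, every non-codeword off such a line with $|I_C|\ge 2$ has $I$-set equal to two codewords at mutual distance $1$, and all remaining non-codewords have $|I_C|=1$ with distinct singletons along every edge; since non-codewords of the first (resp.\ second) type are pairwise non-adjacent and a distance-$2$ pair can never equal a distance-$1$ pair, no two adjacent non-codewords share an $I$-set. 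Hence $C$ is a local locating-dominating code of density $\frac{2}{9}$, giving $\gamma^{L-LD}(\mathcal{T})\le\frac{2}{9}$.

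The main difficulty is the lower bound: the naive estimate $s(\bc)\le\frac{1}{|I_C(\bc)|}+6$ only gives $D(C)\ge\frac17$, so everything hinges on the combinatorial fact that the ``privately dominated'' neighbours of a codeword form an independent set in its neighbourhood cycle and hence number at most three. The upper bound is, by contrast, a finite verification made routine by the symmetries of $C$. I do not expect this method to close the remaining gap between $\frac{2}{11}$ and $\frac{2}{9}$.
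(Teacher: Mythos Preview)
Your proof is correct and follows essentially the same approach as the paper. For the lower bound, both arguments establish $s(\bc)\le\frac{11}{2}$ via Lemma~\ref{share infinite}; the paper splits into the cases $|I(\bc)|\ge 2$ and $I(\bc)=\{\bc\}$, while your independence argument on the neighbourhood $6$-cycle handles both at once and is in fact a slightly cleaner packaging of the same idea. For the upper bound, the paper's construction places pairs of horizontally adjacent codewords on every third row (with a horizontal shift between rows), whereas your code lies on the diagonals $i\equiv j\pmod 3$; the two constructions are different but both are periodic with fundamental domain $3\times 3$ containing two codewords, and both verify as local locating-dominating, so either yields the bound $\frac{2}{9}$.
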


\begin{proof}
    In Figure \ref{localLDtriangular} we have constructed a local locating-dominating code of density $\frac{2}{9}$.
    Thus, $\gamma^{L-LD}(\mathcal{T}) \leq \frac{2}{9}$.
    Next, we show that $\gamma^{L-LD}(\mathcal{T}) \geq \frac{2}{11}$.
    So, let $C$ be a local locating-dominating code in the triangular grid and let $\bc \in C$ be an arbitrary codeword.
    We show that $s(\bc) \leq \frac{11}{2}$ which gives the claim together with Lemma \ref{share infinite}.

 \medskip
    Assume first that $\bc$ has a codeword neighbour.
    We have $s(\bc) \leq 4 \cdot \frac{1}{2} + 3 \cdot 1 = \frac{10}{2} < \frac{11}{2}$.
Assume then that $I(\bc) = \{ \bc \}$. Since $C$ is a local locating-dominating code, it has to separate any two non-codeword neighbours, and hence it can cover at most three neighbours of $\bc$ only by $\bc$.

\smallskip
Thus, $s(\bc) \leq 4 \cdot 1 + 3 \cdot \frac{1}{2} = \frac{11}{2}$.
\end{proof}

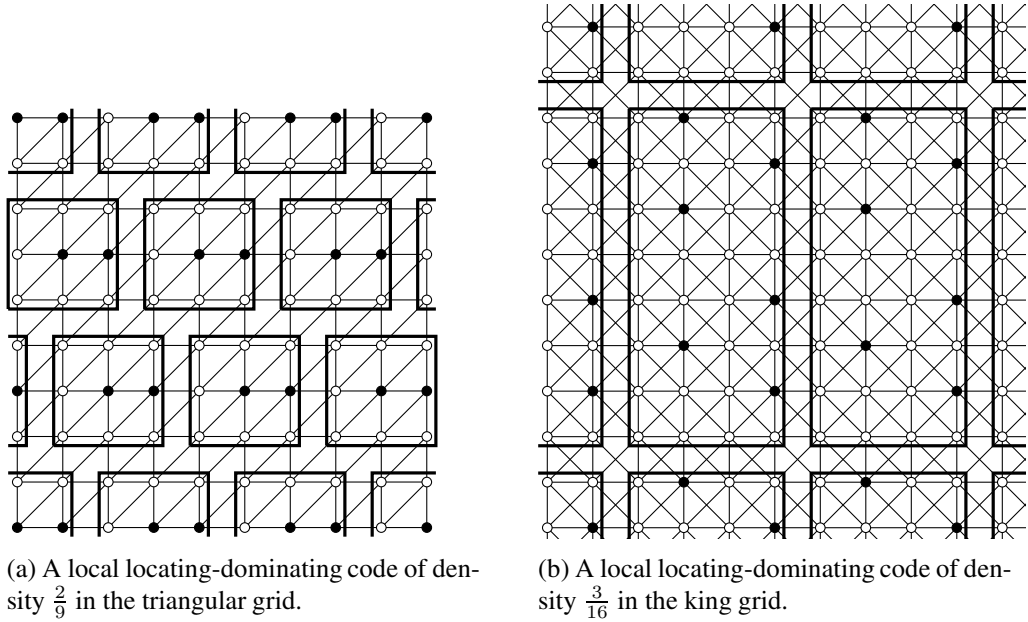
\begin{figure}[!ht]
\vspace*{2mm}
    \centering\begin{subfigure}[b]{0.4\textwidth}
        \begin{tikzpicture}[scale=0.6]
        \draw (0,0) grid (9,9);

        \draw (0,0) -- (9,9);
        \draw (1,0) -- (9,8);
        \draw (2,0) -- (9,7);
        \draw (3,0) -- (9,6);
        \draw (4,0) -- (9,5);
        \draw (5,0) -- (9,4);
        \draw (6,0) -- (9,3);
        \draw (7,0) -- (9,2);
        \draw (8,0) -- (9,1);

        \draw (0,1) -- (8,9);
        \draw (0,2) -- (7,9);
        \draw (0,3) -- (6,9);
        \draw (0,4) -- (5,9);
        \draw (0,5) -- (4,9);
        \draw (0,6) -- (3,9);
        \draw (0,7) -- (2,9);
        \draw (0,8) -- (1,9);


        \draw[fill=black] (0,0) circle(3pt);
        \draw[fill=black] (1,0) circle(3pt);
        \draw[fill=white] (2,0) circle(3pt);
        \draw[fill=black] (3,0) circle(3pt);
        \draw[fill=black] (4,0) circle(3pt);
        \draw[fill=white] (5,0) circle(3pt);
        \draw[fill=black] (6,0) circle(3pt);
        \draw[fill=black] (7,0) circle(3pt);
        \draw[fill=white] (8,0) circle(3pt);
        \draw[fill=black] (9,0) circle(3pt);

        \draw[fill=white] (0,1) circle(3pt);
        \draw[fill=white] (1,1) circle(3pt);
        \draw[fill=white] (2,1) circle(3pt);
        \draw[fill=white] (3,1) circle(3pt);
        \draw[fill=white] (4,1) circle(3pt);
        \draw[fill=white] (5,1) circle(3pt);
        \draw[fill=white] (6,1) circle(3pt);
        \draw[fill=white] (7,1) circle(3pt);
        \draw[fill=white] (8,1) circle(3pt);
        \draw[fill=white] (9,1) circle(3pt);

        \draw[fill=white] (0,2) circle(3pt);
        \draw[fill=white] (1,2) circle(3pt);
        \draw[fill=white] (2,2) circle(3pt);
        \draw[fill=white] (3,2) circle(3pt);
        \draw[fill=white] (4,2) circle(3pt);
        \draw[fill=white] (5,2) circle(3pt);
        \draw[fill=white] (6,2) circle(3pt);
        \draw[fill=white] (7,2) circle(3pt);
        \draw[fill=white] (8,2) circle(3pt);
        \draw[fill=white] (9,2) circle(3pt);

        \draw[fill=black] (0,3) circle(3pt);
        \draw[fill=white] (1,3) circle(3pt);
        \draw[fill=black] (2,3) circle(3pt);
        \draw[fill=black] (3,3) circle(3pt);
        \draw[fill=white] (4,3) circle(3pt);
        \draw[fill=black] (5,3) circle(3pt);
        \draw[fill=black] (6,3) circle(3pt);
        \draw[fill=white] (7,3) circle(3pt);
        \draw[fill=black] (8,3) circle(3pt);
        \draw[fill=black] (9,3) circle(3pt);

        \draw[fill=white] (0,4) circle(3pt);
        \draw[fill=white] (1,4) circle(3pt);
        \draw[fill=white] (2,4) circle(3pt);
        \draw[fill=white] (3,4) circle(3pt);
        \draw[fill=white] (4,4) circle(3pt);
        \draw[fill=white] (5,4) circle(3pt);
        \draw[fill=white] (6,4) circle(3pt);
        \draw[fill=white] (7,4) circle(3pt);
        \draw[fill=white] (8,4) circle(3pt);
        \draw[fill=white] (9,4) circle(3pt);

        \draw[fill=white] (0,5) circle(3pt);
        \draw[fill=white] (1,5) circle(3pt);
        \draw[fill=white] (2,5) circle(3pt);
        \draw[fill=white] (3,5) circle(3pt);
        \draw[fill=white] (4,5) circle(3pt);
        \draw[fill=white] (5,5) circle(3pt);
        \draw[fill=white] (6,5) circle(3pt);
        \draw[fill=white] (7,5) circle(3pt);
        \draw[fill=white] (8,5) circle(3pt);
        \draw[fill=white] (9,5) circle(3pt);

        \draw[fill=white] (0,6) circle(3pt);
        \draw[fill=black] (1,6) circle(3pt);
        \draw[fill=black] (2,6) circle(3pt);
        \draw[fill=white] (3,6) circle(3pt);
        \draw[fill=black] (4,6) circle(3pt);
        \draw[fill=black] (5,6) circle(3pt);
        \draw[fill=white] (6,6) circle(3pt);
        \draw[fill=black] (7,6) circle(3pt);
        \draw[fill=black] (8,6) circle(3pt);
        \draw[fill=white] (9,6) circle(3pt);

        \draw[fill=white] (0,7) circle(3pt);
        \draw[fill=white] (1,7) circle(3pt);
        \draw[fill=white] (2,7) circle(3pt);
        \draw[fill=white] (3,7) circle(3pt);
        \draw[fill=white] (4,7) circle(3pt);
        \draw[fill=white] (5,7) circle(3pt);
        \draw[fill=white] (6,7) circle(3pt);
        \draw[fill=white] (7,7) circle(3pt);
        \draw[fill=white] (8,7) circle(3pt);
        \draw[fill=white] (9,7) circle(3pt);

        \draw[fill=white] (0,8) circle(3pt);
        \draw[fill=white] (1,8) circle(3pt);
        \draw[fill=white] (2,8) circle(3pt);
        \draw[fill=white] (3,8) circle(3pt);
        \draw[fill=white] (4,8) circle(3pt);
        \draw[fill=white] (5,8) circle(3pt);
        \draw[fill=white] (6,8) circle(3pt);
        \draw[fill=white] (7,8) circle(3pt);
        \draw[fill=white] (8,8) circle(3pt);
        \draw[fill=white] (9,8) circle(3pt);

        \draw[fill=black] (0,9) circle(3pt);
        \draw[fill=black] (1,9) circle(3pt);
        \draw[fill=white] (2,9) circle(3pt);
        \draw[fill=black] (3,9) circle(3pt);
        \draw[fill=black] (4,9) circle(3pt);
        \draw[fill=white] (5,9) circle(3pt);
        \draw[fill=black] (6,9) circle(3pt);
        \draw[fill=black] (7,9) circle(3pt);
        \draw[fill=white] (8,9) circle(3pt);
        \draw[fill=black] (9,9) circle(3pt);

        \draw[very thick] (-0.2,1.2) -- (1.2,1.2) -- (1.2,-0.2);

        \draw[very thick] (1.8,-0.2) -- (1.8,1.2) -- (4.2,1.2) -- (4.2,-0.2);

        \draw[very thick] (4.8,-0.2) -- (4.8,1.2) -- (7.2,1.2) -- (7.2,-0.2);

        \draw[very thick] (7.8,-0.2) -- (7.8,1.2) -- (9.2,1.2);

        \draw[very thick] (-0.2,1.8) -- (0.2,1.8) -- (0.2,4.2) -- (-0.2,4.2);

        \draw[very thick] (0.8,1.8) -- (0.8,4.2) -- (3.2,4.2) -- (3.2,1.8) -- (0.8,1.8);

        \draw[very thick] (3.8,1.8) -- (3.8,4.2) -- (6.2,4.2) -- (6.2,1.8) -- (3.8,1.8);

        \draw[very thick] (3.8+3,1.8) -- (3.8+3,4.2) -- (6.2+3,4.2) -- (6.2+3,1.8) -- (3.8+3,1.8);

        \draw[very thick] (-0.2,4.8) -- (-0.2,7.2) -- (2.2,7.2) -- (2.2,4.8) -- (-0.2,4.8);

        \draw[very thick] (-0.2+3,4.8) -- (-0.2+3,7.2) -- (2.2+3,7.2) -- (2.2+3,4.8) -- (-0.2+3,4.8);

        \draw[very thick] (-0.2+6,4.8) -- (-0.2+6,7.2) -- (2.2+6,7.2) -- (2.2+6,4.8) -- (-0.2+6,4.8);

        \draw[very thick] (9.2,4.8) -- (8.8,4.8) -- (8.8,7.2) -- (9.2,7.2);

        \draw[very thick] (-0.2,7.8) -- (1.2,7.8) -- (1.2, 9.2);

        \draw[very thick] (1.8,9.2) -- (1.8,7.8) -- (4.2,7.8) -- (4.2,9.2);

        \draw[very thick] (1.8+3,9.2) -- (1.8+3,7.8) -- (4.2+3,7.8) -- (4.2+3,9.2);

        \draw[very thick] (9.2,7.8) -- (7.8,7.8) -- (7.8,9.2);
    \end{tikzpicture}
       \caption{A local locating-dominating code of density $\frac{2}{9}$ in the triangular grid.}
       \label{localLDtriangular}
    \end{subfigure}\qquad
    \begin{subfigure}[b]{0.4\textwidth}
 \begin{tikzpicture}[scale=0.6]\begin{scope}
    \clip(-0.2,1.77) rectangle (10.5,13.5);

        \draw (0,0) grid (14,14);

        \draw (0,14) -- (14,0);
        \draw (1,14) -- (14,1);
        \draw (2,14) -- (14,2);
        \draw (3,14) -- (14,3);
        \draw (4,14) -- (14,4);
        \draw (5,14) -- (14,5);
        \draw (6,14) -- (14,6);
        \draw (7,14) -- (14,7);
        \draw (8,14) -- (14,8);
        \draw (9,14) -- (14,9);
        \draw (10,14) -- (14,10);
        \draw (11,14) -- (14,11);
        \draw (12,14) -- (14,12);
        \draw (13,14) -- (14,13);

        \draw (0,13) -- (13,0);
        \draw (0,12) -- (12,0);
        \draw (0,11) -- (11,0);
        \draw (0,10) -- (10,0);
        \draw (0,9) -- (9,0);
        \draw (0,8) -- (8,0);
        \draw (0,7) -- (7,0);
        \draw (0,6) -- (6,0);
        \draw (0,5) -- (5,0);
        \draw (0,4) -- (4,0);
        \draw (0,3) -- (3,0);
        \draw (0,2) -- (2,0);
        \draw (0,1) -- (1,0);

        \draw (0,0) -- (14,14);
        \draw (1,0) -- (14,13);
        \draw (2,0) -- (14,12);
        \draw (3,0) -- (14,11);
        \draw (4,0) -- (14,10);
        \draw (5,0) -- (14,9);
        \draw (6,0) -- (14,8);
        \draw (7,0) -- (14,7);
        \draw (8,0) -- (14,6);
        \draw (9,0) -- (14,5);
        \draw (10,0) -- (14,4);
        \draw (11,0) -- (14,3);
        \draw (12,0) -- (14,2);
        \draw (13,0) -- (14,1);

        \draw (0,1) -- (13,14);
        \draw (0,2) -- (12,14);
        \draw (0,3) -- (11,14);
        \draw (0,4) -- (10,14);
        \draw (0,5) -- (9,14);
        \draw (0,6) -- (8,14);
        \draw (0,7) -- (7,14);
        \draw (0,8) -- (6,14);
        \draw (0,9) -- (5,14);
        \draw (0,10) -- (4,14);
        \draw (0,11) -- (3,14);
        \draw (0,12) -- (2,14);
        \draw (0,13) -- (1,14);


        \draw[fill=white] (0,0) circle(3pt);
        \draw[fill=white] (1,0) circle(3pt);
        \draw[fill=white] (2,0) circle(3pt);
        \draw[fill=white] (3,0) circle(3pt);
        \draw[fill=white] (4,0) circle(3pt);
        \draw[fill=white] (5,0) circle(3pt);
        \draw[fill=white] (6,0) circle(3pt);
        \draw[fill=white] (7,0) circle(3pt);
        \draw[fill=white] (8,0) circle(3pt);
        \draw[fill=white] (9,0) circle(3pt);
        \draw[fill=white] (10,0) circle(3pt);
        \draw[fill=white] (11,0) circle(3pt);
        \draw[fill=white] (12,0) circle(3pt);
        \draw[fill=white] (13,0) circle(3pt);
        \draw[fill=white] (14,0) circle(3pt);

        \draw[fill=white] (0,1) circle(3pt);
        \draw[fill=white] (1,1) circle(3pt);
        \draw[fill=white] (2,1) circle(3pt);
        \draw[fill=black] (3,1) circle(3pt);
        \draw[fill=white] (4,1) circle(3pt);
        \draw[fill=white] (5,1) circle(3pt);
        \draw[fill=white] (6,1) circle(3pt);
        \draw[fill=black] (7,1) circle(3pt);
        \draw[fill=white] (8,1) circle(3pt);
        \draw[fill=white] (9,1) circle(3pt);
        \draw[fill=white] (10,1) circle(3pt);
        \draw[fill=black] (11,1) circle(3pt);
        \draw[fill=white] (12,1) circle(3pt);
        \draw[fill=white] (13,1) circle(3pt);
        \draw[fill=white] (14,1) circle(3pt);

        \draw[fill=white] (0,2) circle(3pt);
        \draw[fill=black] (1,2) circle(3pt);
        \draw[fill=white] (2,2) circle(3pt);
        \draw[fill=white] (3,2) circle(3pt);
        \draw[fill=white] (4,2) circle(3pt);
        \draw[fill=black] (5,2) circle(3pt);
        \draw[fill=white] (6,2) circle(3pt);
        \draw[fill=white] (7,2) circle(3pt);
        \draw[fill=white] (8,2) circle(3pt);
        \draw[fill=black] (9,2) circle(3pt);
        \draw[fill=white] (10,2) circle(3pt);
        \draw[fill=white] (11,2) circle(3pt);
        \draw[fill=white] (12,2) circle(3pt);
        \draw[fill=black] (13,2) circle(3pt);
        \draw[fill=white] (14,2) circle(3pt);

        \draw[fill=white] (0,3) circle(3pt);
        \draw[fill=white] (1,3) circle(3pt);
        \draw[fill=white] (2,3) circle(3pt);
        \draw[fill=black] (3,3) circle(3pt);
        \draw[fill=white] (4,3) circle(3pt);
        \draw[fill=white] (5,3) circle(3pt);
        \draw[fill=white] (6,3) circle(3pt);
        \draw[fill=black] (7,3) circle(3pt);
        \draw[fill=white] (8,3) circle(3pt);
        \draw[fill=white] (9,3) circle(3pt);
        \draw[fill=white] (10,3) circle(3pt);
        \draw[fill=black] (11,3) circle(3pt);
        \draw[fill=white] (12,3) circle(3pt);
        \draw[fill=white] (13,3) circle(3pt);
        \draw[fill=white] (14,3) circle(3pt);

        \draw[fill=white] (0,4) circle(3pt);
        \draw[fill=white] (1,4) circle(3pt);
        \draw[fill=white] (2,4) circle(3pt);
        \draw[fill=white] (3,4) circle(3pt);
        \draw[fill=white] (4,4) circle(3pt);
        \draw[fill=white] (5,4) circle(3pt);
        \draw[fill=white] (6,4) circle(3pt);
        \draw[fill=white] (7,4) circle(3pt);
        \draw[fill=white] (8,4) circle(3pt);
        \draw[fill=white] (9,4) circle(3pt);
        \draw[fill=white] (10,4) circle(3pt);
        \draw[fill=white] (11,4) circle(3pt);
        \draw[fill=white] (12,4) circle(3pt);
        \draw[fill=white] (13,4) circle(3pt);
        \draw[fill=white] (14,4) circle(3pt);

        \draw[fill=white] (0,5) circle(3pt);
        \draw[fill=black] (1,5) circle(3pt);
        \draw[fill=white] (2,5) circle(3pt);
        \draw[fill=white] (3,5) circle(3pt);
        \draw[fill=white] (4,5) circle(3pt);
        \draw[fill=black] (5,5) circle(3pt);
        \draw[fill=white] (6,5) circle(3pt);
        \draw[fill=white] (7,5) circle(3pt);
        \draw[fill=white] (8,5) circle(3pt);
        \draw[fill=black] (9,5) circle(3pt);
        \draw[fill=white] (10,5) circle(3pt);
        \draw[fill=white] (11,5) circle(3pt);
        \draw[fill=white] (12,5) circle(3pt);
        \draw[fill=black] (13,5) circle(3pt);
        \draw[fill=white] (14,5) circle(3pt);

        \draw[fill=white] (0,6) circle(3pt);
        \draw[fill=white] (1,6) circle(3pt);
        \draw[fill=white] (2,6) circle(3pt);
        \draw[fill=black] (3,6) circle(3pt);
        \draw[fill=white] (4,6) circle(3pt);
        \draw[fill=white] (5,6) circle(3pt);
        \draw[fill=white] (6,6) circle(3pt);
        \draw[fill=black] (7,6) circle(3pt);
        \draw[fill=white] (8,6) circle(3pt);
        \draw[fill=white] (9,6) circle(3pt);
        \draw[fill=white] (10,6) circle(3pt);
        \draw[fill=black] (11,6) circle(3pt);
        \draw[fill=white] (12,6) circle(3pt);
        \draw[fill=white] (13,6) circle(3pt);
        \draw[fill=white] (14,6) circle(3pt);

        \draw[fill=white] (0,7) circle(3pt);
        \draw[fill=black] (1,7) circle(3pt);
        \draw[fill=white] (2,7) circle(3pt);
        \draw[fill=white] (3,7) circle(3pt);
        \draw[fill=white] (4,7) circle(3pt);
        \draw[fill=black] (5,7) circle(3pt);
        \draw[fill=white] (6,7) circle(3pt);
        \draw[fill=white] (7,7) circle(3pt);
        \draw[fill=white] (8,7) circle(3pt);
        \draw[fill=black] (9,7) circle(3pt);
        \draw[fill=white] (10,7) circle(3pt);
        \draw[fill=white] (11,7) circle(3pt);
        \draw[fill=white] (12,7) circle(3pt);
        \draw[fill=black] (13,7) circle(3pt);
        \draw[fill=white] (14,7) circle(3pt);

        \draw[fill=white] (0,8) circle(3pt);
        \draw[fill=white] (1,8) circle(3pt);
        \draw[fill=white] (2,8) circle(3pt);
        \draw[fill=white] (3,8) circle(3pt);
        \draw[fill=white] (4,8) circle(3pt);
        \draw[fill=white] (5,8) circle(3pt);
        \draw[fill=white] (6,8) circle(3pt);
        \draw[fill=white] (7,8) circle(3pt);
        \draw[fill=white] (8,8) circle(3pt);
        \draw[fill=white] (9,8) circle(3pt);
        \draw[fill=white] (10,8) circle(3pt);
        \draw[fill=white] (11,8) circle(3pt);
        \draw[fill=white] (12,8) circle(3pt);
        \draw[fill=white] (13,8) circle(3pt);
        \draw[fill=white] (14,8) circle(3pt);

        \draw[fill=white] (0,9) circle(3pt);
        \draw[fill=white] (1,9) circle(3pt);
        \draw[fill=white] (2,9) circle(3pt);
        \draw[fill=black] (3,9) circle(3pt);
        \draw[fill=white] (4,9) circle(3pt);
        \draw[fill=white] (5,9) circle(3pt);
        \draw[fill=white] (6,9) circle(3pt);
        \draw[fill=black] (7,9) circle(3pt);
        \draw[fill=white] (8,9) circle(3pt);
        \draw[fill=white] (9,9) circle(3pt);
        \draw[fill=white] (10,9) circle(3pt);
        \draw[fill=black] (11,9) circle(3pt);
        \draw[fill=white] (12,9) circle(3pt);
        \draw[fill=white] (13,9) circle(3pt);
        \draw[fill=white] (14,9) circle(3pt);

        \draw[fill=white] (0,10) circle(3pt);
        \draw[fill=black] (1,10) circle(3pt);
        \draw[fill=white] (2,10) circle(3pt);
        \draw[fill=white] (3,10) circle(3pt);
        \draw[fill=white] (4,10) circle(3pt);
        \draw[fill=black] (5,10) circle(3pt);
        \draw[fill=white] (6,10) circle(3pt);
        \draw[fill=white] (7,10) circle(3pt);
        \draw[fill=white] (8,10) circle(3pt);
        \draw[fill=black] (9,10) circle(3pt);
        \draw[fill=white] (10,10) circle(3pt);
        \draw[fill=white] (11,10) circle(3pt);
        \draw[fill=white] (12,10) circle(3pt);
        \draw[fill=black] (13,10) circle(3pt);
        \draw[fill=white] (14,10) circle(3pt);

        \draw[fill=white] (0,11) circle(3pt);
        \draw[fill=white] (1,11) circle(3pt);
        \draw[fill=white] (2,11) circle(3pt);
        \draw[fill=black] (3,11) circle(3pt);
        \draw[fill=white] (4,11) circle(3pt);
        \draw[fill=white] (5,11) circle(3pt);
        \draw[fill=white] (6,11) circle(3pt);
        \draw[fill=black] (7,11) circle(3pt);
        \draw[fill=white] (8,11) circle(3pt);
        \draw[fill=white] (9,11) circle(3pt);
        \draw[fill=white] (10,11) circle(3pt);
        \draw[fill=black] (11,11) circle(3pt);
        \draw[fill=white] (12,11) circle(3pt);
        \draw[fill=white] (13,11) circle(3pt);
        \draw[fill=white] (14,11) circle(3pt);

        \draw[fill=white] (0,12) circle(3pt);
        \draw[fill=white] (1,12) circle(3pt);
        \draw[fill=white] (2,12) circle(3pt);
        \draw[fill=white] (3,12) circle(3pt);
        \draw[fill=white] (4,12) circle(3pt);
        \draw[fill=white] (5,12) circle(3pt);
        \draw[fill=white] (6,12) circle(3pt);
        \draw[fill=white] (7,12) circle(3pt);
        \draw[fill=white] (8,12) circle(3pt);
        \draw[fill=white] (9,12) circle(3pt);
        \draw[fill=white] (10,12) circle(3pt);
        \draw[fill=white] (11,12) circle(3pt);
        \draw[fill=white] (12,12) circle(3pt);
        \draw[fill=white] (13,12) circle(3pt);
        \draw[fill=white] (14,12) circle(3pt);

        \draw[fill=white] (0,13) circle(3pt);
        \draw[fill=black] (1,13) circle(3pt);
        \draw[fill=white] (2,13) circle(3pt);
        \draw[fill=white] (3,13) circle(3pt);
        \draw[fill=white] (4,13) circle(3pt);
        \draw[fill=black] (5,13) circle(3pt);
        \draw[fill=white] (6,13) circle(3pt);
        \draw[fill=white] (7,13) circle(3pt);
        \draw[fill=white] (8,13) circle(3pt);
        \draw[fill=black] (9,13) circle(3pt);
        \draw[fill=white] (10,13) circle(3pt);
        \draw[fill=white] (11,13) circle(3pt);
        \draw[fill=white] (12,13) circle(3pt);
        \draw[fill=black] (13,13) circle(3pt);
        \draw[fill=white] (14,13) circle(3pt);

        \draw[fill=white] (0,14) circle(3pt);
        \draw[fill=white] (1,14) circle(3pt);
        \draw[fill=white] (2,14) circle(3pt);
        \draw[fill=black] (3,14) circle(3pt);
        \draw[fill=white] (4,14) circle(3pt);
        \draw[fill=white] (5,14) circle(3pt);
        \draw[fill=white] (6,14) circle(3pt);
        \draw[fill=black] (7,14) circle(3pt);
        \draw[fill=white] (8,14) circle(3pt);
        \draw[fill=white] (9,14) circle(3pt);
        \draw[fill=white] (10,14) circle(3pt);
        \draw[fill=black] (11,14) circle(3pt);
        \draw[fill=white] (12,14) circle(3pt);
        \draw[fill=white] (13,14) circle(3pt);
        \draw[fill=white] (14,14) circle(3pt);

        \draw[very thick] (1.8,11.2) -- (5.2,11.2) -- (5.2,3.8) -- (1.8,3.8) -- (1.8,11.2);

        \draw[very thick] (1.8+4,11.2) -- (5.2+4,11.2) -- (5.2+4,3.8) -- (1.8+4,3.8) -- (1.8+4,11.2);

        \draw[very thick] (1.8+8,11.2) -- (5.2+8,11.2) -- (5.2+8,3.8) -- (1.8+8,3.8) -- (1.8+8,11.2);

        \draw[very thick] (-0.2,3.8) -- (1.2,3.8) -- (1.2,11.2) -- (-0.2,11.2);

        \draw[very thick] (14.2,3.8) -- (13.8,3.8) --
        (13.8, 11.2) -- (14.2,11.2);

        \draw[very thick] (14.2,11.8) -- (13.8,11.8) -- (13.8,14.2);

        \draw[very thick] (1.8,14.2) -- (1.8,11.8) -- (5.2,11.8) -- (5.2,14.2);

        \draw[very thick] (1.8+4,14.2) -- (1.8+4,11.8) -- (5.2+4,11.8) -- (5.2+4,14.2);

        \draw[very thick] (1.8+8,14.2) -- (1.8+8,11.8) -- (5.2+8,11.8) -- (5.2+8,14.2);

        \draw[very thick] (-0.2,11.8) -- (1.2,11.8) -- (1.2,14.2);

        \draw[very thick] (-0.2,3.2) -- (1.2,3.2) -- (1.2,-0.2);

        \draw[very thick] (14.2,3.2) -- (13.8,3.2) -- (13.8,-0.2);

        \draw[very thick] (1.8,-0.2) -- (1.8,3.2) -- (5.2,3.2) -- (5.2,-0.2);

        \draw[very thick] (1.8+4,-0.2) -- (1.8+4,3.2) -- (5.2+4,3.2) -- (5.2+4,-0.2);

        \draw[very thick] (1.8+8,-0.2) -- (1.8+8,3.2) -- (5.2+8,3.2) -- (5.2+8,-0.2);
    \end{scope}\end{tikzpicture}
    \caption{A local locating-dominating code of density $\frac{3}{16}$ in the king grid.}
    \label{kinglocalLD}
    \end{subfigure}
        \caption{Local identifying and locating-dominating codes in the triangular and king grids.}
 \end{figure}

\begin{theorem}
    $$
    \gamma^{L-ID}(\mathcal{T}) = \frac{1}{4} = \gamma^{ID}(\mathcal{T}).
    $$
\end{theorem}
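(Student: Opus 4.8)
The plan is to prove the two bounds $\gamma^{L-ID}(\mathcal{T}) \le \frac14$ and $\gamma^{L-ID}(\mathcal{T}) \ge \frac14$ separately; the stated equality with $\gamma^{ID}(\mathcal{T})$ then follows because $\gamma^{ID}(\mathcal{T}) = \frac14$ is known (Table~\ref{known bounds}, \cite{Karpovsky}). The upper bound requires no new construction: every $1$-identifying code is in particular a local $1$-identifying code, so $\gamma^{L-ID}(\mathcal{T}) \le \gamma^{ID}(\mathcal{T}) = \frac14$. (Unlike the square and hexagonal cases above, the extremal identifying code of density $\frac14$ is not one in which every vertex sees exactly one codeword in its closed neighbourhood, so here I would simply invoke the known optimal identifying code rather than argue optimality of a fresh construction.)

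The real work is the lower bound $\gamma^{L-ID}(\mathcal{T}) \ge \frac14$, which I would obtain by showing that a suitably modified share of every codeword of a local identifying code $C$ in $\mathcal{T}$ is at most $4$, and then applying Lemma~\ref{Lemma shareshift}. Fix $\bc \in C$; note $|N[\bc]| = 7$ and that the six neighbours of $\bc$ induce a $6$-cycle (the \emph{link}). I would organise the argument by $j := |I(\bc)| - 1$, the number of codeword neighbours of $\bc$, and by the number $t$ of neighbours $u$ of $\bc$ with $I(u) = \{\bc\}$. Such a $u$ is a non-codeword whose only codeword neighbour is $\bc$, hence the $u$'s are pairwise non-adjacent and not adjacent to any codeword neighbour of $\bc$, which pins them down in the link $6$-cycle. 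If $j = 0$ then $|I(\bc)| = 1$, so local identification forces $|I(u)| \ge 2$ for every neighbour $u$ (else $\bc$ and $u$ are not separated), giving $t = 0$ and $s(\bc) \le 1 + 6 \cdot \frac12 = 4$. If $j \ge 2$, then $\frac{1}{|I(\bc)|} \le \frac13$ and a short check of the link shows $t \le 1$; one then gets $s(\bc) \le \frac{23}{6} < 4$ for $j = 2$, with even smaller bounds for $j \ge 3$, so these codewords carry a genuine surplus. The only dangerous case is $j = 1$: if $\bc'$ is the unique codeword neighbour, then separating the adjacent codewords $\bc$ and $\bc'$ forces a codeword adjacent to $\bc'$ but not to $\bc$, so $|I(\bc')| \ge 3$; combining this with $t \le 2$ gives only $s(\bc) \le \frac{13}{3} = 4 + \frac13$.

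To absorb the deficit of the $j = 1$ codewords I would use a share shifting scheme transferring share from each $j = 1$ codeword $\bc$ to its (unique, and necessarily $j \ge 2$) codeword neighbour $\bc'$, by an amount dictated by the local picture: when some further common neighbour of $\bc$ and $\bc'$ is itself adjacent to a third codeword it is multiply covered, which improves the bound on $s(\bc)$ to $\frac{25}{6}$ and lets one shift only $\frac16$; otherwise one shifts $\frac13$. The step I expect to be the main obstacle is verifying $s'(\bc') \le 4$ at the receiving end: a single codeword $\bc'$ may be the designated recipient of up to three $j = 1$ codewords (they are forced to be independent in the link $6$-cycle of $\bc'$), and one must show that the more such neighbours $\bc'$ has, and the closer together they lie, the more of $\bc'$'s own neighbours are forced to be multiply covered, so that the surplus of $\bc'$ always covers the total incoming share — a configuration-by-configuration bookkeeping. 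With that verified, $s'(\bc) \le 4$ for every $\bc \in C$, and Lemma~\ref{Lemma shareshift} yields $D(C) \ge \frac14$, hence $\gamma^{L-ID}(\mathcal{T}) \ge \frac14$.
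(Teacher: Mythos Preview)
Your upper bound matches the paper's exactly. For the lower bound, however, the paper obtains $s(\bc)\le 4$ for every codeword \emph{directly}, with no share shifting, by a sharper look at your dangerous case $j=1$.

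The observation you are missing is this. In the worst subcase $j=1$, $t=2$, one first checks (using the link $6$-cycle) that the two singly-covered neighbours must be the non-adjacent pair $\bu_1,\bu_3$ opposite $\bc'$, and that this forces every other neighbour of $\bc$ out of $C$, so $I(\bc)=\{\bc,\bc'\}$. You already extract $|I(\bc')|\ge 3$ from this. But the same reasoning applies to the two common neighbours $\bu_4,\bu_5$ of $\bc$ and $\bc'$: since $\{\bc,\bc'\}=I(\bc)\subseteq I(\bu_4)$ and $\bu_4$ is adjacent to $\bc$, separating $\bu_4$ from $\bc$ forces a third codeword in $I(\bu_4)$, and likewise for $\bu_5$. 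Hence
\[
s(\bc)\ \le\ \underbrace{\tfrac12}_{\bc}\ +\ \underbrace{1+1}_{\bu_1,\bu_3}\ +\ \underbrace{\tfrac12}_{\bu_2}\ +\ \underbrace{3\cdot\tfrac13}_{\bu_4,\bu_5,\bc'}\ =\ 4,
\]
and Lemma~\ref{share infinite} finishes the proof. (When $t\le 1$, the trivial estimate $s(\bc)\le 1+6\cdot\tfrac12=4$ already suffices.)

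Your shifting scheme may be completable, but the receiver-side bookkeeping you flag as the main obstacle is entirely avoidable once you notice that $\bu_4$ and $\bu_5$, not just $\bc'$, are triply covered in the extremal configuration.
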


\begin{proof}
    Since any identifying code is also a local identifying code, we have the upper bound
    $\gamma^{L-ID}(\mathcal{T}) \leq \gamma^{ID}(\mathcal{T}) = \frac{1}{4}$ (see Table \ref{known bounds}).
    Next, we prove the lower bound $\gamma^{L-ID}(\mathcal{T}) \geq \frac{1}{4}$.
    So, let $C \subseteq \Z^2$ be a local identifying code in the triangular grid.
    We show that $D(C) \geq \frac{1}{4}$ by showing that $s(\bc) \leq 4$ for all $\bc \in C$ which then gives the claim by Lemma \ref{share infinite}.
    Let $\bc \in C$ be an arbitrary codeword.

\medskip
    Assume first that $I(\bc) = \{ \bc \}$.
    Since $C$ separates $\bc$ from its neighbours, every neighbour of $\bc$ is covered by at least two codewords and hence $s(\bc) \leq 1 + 6 \cdot \frac{1}{2} = 4$.

\medskip
    Assume then that $|I(\bc)| \geq 2$ and let $\bc' \in C$ be a codeword neighbour of $\bc$. The codeword $\bc$ has three neighbours, say $\bu_1, \bu_2$ and $\bu_3$, that are not covered by $\bc'$. One of them is a neighbour of the two others.
    Without loss of generality we may assume that $\bu_2$ is a neighbour of both $\bu_1$ and $\bu_3$. It follows that $\bu_1$ and $\bu_3$ are not neighbours.
    If $C$ covers at least two of these three points by at least two codewords, then $s(\bc) \leq 6 \cdot \frac{1}{2} + 1 = 4$.
    So, let us assume that $C$ covers two of these points by only one codeword -- with $\bc$.
    Note that if $C$ covers $\bu_2$ only by $\bc$, then its neighbours $\bu_1$ and $\bu_3$ are covered by at least two codewords because $C$ -- being a local identifying code -- separates $\bu_2$ from its neighbours.
    So, we assume that $I(\bu_1) = \{ \bc \} = I(\bu_3)$,
    and then
    we have $|I(\bu_2)| \geq 2$ and $I(\bc) = \{ \bc, \bc' \}$.
    The codeword neighbour $\bc'$ of $\bc$ covers the two remaining neighbours of $\bc$, say $\bu_4$ and $\bu_5$, and of course $\bc$ and $\bc'$.
    Since $I(\bc) = \{ \bc, \bc' \}$, the code $C$ covers the points $\bu_4, \bu_5$ and $\bc'$ by at least three codewords in order to separate them from their neighbour $\bc$.
    Thus, we have
    $$
    s(\bc) \leq 2 \cdot \frac{1}{2} + 2 \cdot 1 + 3 \cdot \frac{1}{3} = 4.
    $$

\vspace*{-8mm}
\end{proof}

\subsection{The king grid}\label{subsec:king}

Next, we consider consider the optimal densities of local locating-dominating and local identifying codes in the king grid $\mathcal{K}$.

\medskip
Let us begin with some terminology.
We call the points $\bx + (\pm 1,\pm 1)$ the \emph{corner neighbours} of $\bx \in \Z^2$
and the points $\bx + (\pm 1,0), \bx + (0,\pm 1)$ the \emph{non-corner neighbours} of $\bx$.
If $\by$ is a corner neighbour of $\bx$, then $|N[\by] \cap N[\bx]| = 4$ and if $\by$ is a non-corner neighbour of $\bx$, then
$|N[\by] \cap N[\bx]| = 6$.
We say that two corner neighbours of a point $\bx$ are \emph{adjoining} if their Euclidean distance is $2$ and they are \emph{opposite} if they are not adjoining, {\it i.e.}, if their Euclidean distance is $2 \sqrt{2}$.
Two non-corner neighbours are adjoining if their Euclidean distance is $\sqrt{2}$ and opposite if they are not adjoining in which case their Euclidean distance is $2$.
Note that two adjoining non-corner neighbours of a vertex are neighbours, in particular.
A non-corner neighbour of $\bx$ is \emph{between} two adjoining corner neighbours of $\bx$ if it is at Euclidean distance 1 from both of them.

\begin{theorem}
    $$
    \gamma^{L-LD}(\mathcal{K}) = \frac{3}{16}.
    $$
\end{theorem}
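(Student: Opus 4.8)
The plan is to prove the two bounds $\gamma^{L-LD}(\mathcal{K}) \le \frac{3}{16}$ and $\gamma^{L-LD}(\mathcal{K}) \ge \frac{3}{16}$ separately. For the upper bound I would exhibit the periodic code drawn in Figure~\ref{kinglocalLD}: it is invariant under a rank-two sublattice of $\Z^2$ of index $16$ and has three codewords per fundamental domain, hence density $\frac{3}{16}$. It then remains to verify, directly from a single period together with a one-cell collar, that the code is a covering code and that any two adjacent non-codewords have distinct $I$-sets; unlike in the square and hexagonal grids, the king grid contains triangles, so Lemma~\ref{triangle-free lemma} does not apply and this separation check is a genuine (but finite) computation.

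For the lower bound I would use shares. By Lemma~\ref{share infinite} it is enough to prove that $s(\bc) \le \frac{16}{3}$ for every codeword $\bc$ of an arbitrary local locating-dominating code $C$ in $\mathcal{K}$ (and, if this fails in a few tight configurations, to repair it with a share shifting scheme and invoke Lemma~\ref{Lemma shareshift} instead). The backbone of the argument is a small amount of king-grid geometry: among the eight neighbours of a vertex $\bx$, the four corner neighbours are pairwise non-adjacent and form the unique independent set of size four, each non-corner neighbour is adjacent to the two corner neighbours flanking it and to the two non-corner neighbours adjoining it, and a non-corner neighbour lying between two corner neighbours is adjacent to both of them. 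Since $C$ separates adjacent non-codewords, no two neighbours of $\bc$ that are covered by $\bc$ alone can be adjacent; this already caps the number of such privately covered neighbours, and moreover a neighbour of $\bc$ that is adjacent to another codeword cannot be privately covered by $\bc$.

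The computation then splits on $|I(\bc)|$. When $|I(\bc)|\ge 3$ the codeword $\bc$ has at least two codeword neighbours, which both cover many of $\bc$'s remaining neighbours and keep the privately covered ones few, so $s(\bc)$ stays below $\frac{16}{3}$. When $|I(\bc)| = 2$, with unique codeword neighbour $\bc'$, that codeword neighbour alone forbids several neighbours of $\bc$ from being privately covered, leaving at most two; a short analysis of the possible positions of these two (using the geometry above and the fact that $\bc'$ still has to be separated from its own non-codeword neighbours) shows that the codewords forced to exist in order to cover the other neighbours raise enough $I$-sizes to $3$ that $s(\bc) \le \frac{16}{3}$. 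The critical and longest case is $|I(\bc)| = 1$, i.e. $\bc$ isolated: here I would show that pushing the forced non-codewords out to king-distance $2$ quickly produces a pair of adjacent neighbours of $\bc$ both covered by $\bc$ alone unless at most one neighbour of $\bc$ is privately covered, and then that in the remaining case the two codewords that must sit at king-distance $2$ in order to cover the two non-corner neighbours flanking the privately covered corner raise enough $I$-sizes to bring $s(\bc)$ down to $\frac{16}{3}$.

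The main obstacle I anticipate is exactly this last case and the bookkeeping it entails: one has to enumerate the admissible local pictures around an isolated (or nearly isolated) codeword, keep careful track of which vertices are thereby forced to be non-codewords and hence which further codewords must appear, and --- if a share shifting scheme is used --- check that a codeword receiving transferred share cannot itself be a tight configuration, so that no recipient overflows. The eight-neighbour structure of the king grid, with corner and non-corner neighbours behaving differently, makes this more delicate than the corresponding arguments for the square, hexagonal and triangular grids.
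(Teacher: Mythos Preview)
Your upper bound via the periodic construction of Figure~\ref{kinglocalLD} matches the paper. For the lower bound, however, the paper does \emph{not} use shares at all: it argues directly that every axis-aligned $4\times 4$ square $D$ contains at least three codewords. Writing $D'$ for the inner $2\times 2$ block (whose closed neighbourhood is exactly $D$), one does a four-line case split on $|C\cap D'|\in\{0,1,2,3,4\}$, using only that $D'$ must be covered from within $D$ and that the (pairwise adjacent) non-codewords of $D'$ must be separated. This immediately yields $D(C)\ge 3/16$ and bypasses all of the king-grid share bookkeeping.

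Your share route is not hopeless, but your isolated-codeword analysis is wrong as stated. You assert that if two or more neighbours of $\bc$ are privately covered by $\bc$ then one can force an adjacent privately covered pair. In the king grid the four corner neighbours of $\bc$ are pairwise non-adjacent, and two \emph{opposite} corners, say $\bc+(1,1)$ and $\bc+(-1,-1)$, can both have $I$-set $\{\bc\}$ without contradiction: the induced non-codewords then force the single codewords $\bc+(2,-1),\bc+(1,-2),\bc+(-2,1),\bc+(-1,2)$, after which every non-corner neighbour has $|I|=2$ and the other two corners have $|I|=3$, giving $s(\bc)=1+2\cdot 1+4\cdot\tfrac12+2\cdot\tfrac13=\tfrac{17}{3}>\tfrac{16}{3}$. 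So the naive bound $s(\bc)\le\tfrac{16}{3}$ fails and a share-shifting rule is genuinely required, not optional; moreover the recipient codewords at king-distance~$2$ then need their own analysis. The paper's $4\times4$ counting avoids all of this.
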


\begin{proof}
    By a construction we have $\gamma^{L-LD}(\mathcal{K}) \leq \frac{3}{16}$, see Figure \ref{kinglocalLD}.

\medskip
    Next, we show that $\gamma^{L-LD}(\mathcal{K}) \geq \frac{3}{16}$.
    Let $C \subseteq \Z^2$ be a local locating-dominating code in the king grid.
    We claim that any $4 \times 4$ square $D \subseteq \Z^2$ contains at least three codewords of $C$.
    This implies that $D(C) \geq \frac{3}{16}$.
    Let $\bt \in \Z^2$ be such that $D = \{0,1,2,3\} \times \{0,1,2,3\} + \bt$.
    Let $D' = \{1,2\} \times \{1,2\} + \bt$ be the $2 \times 2$ square inside of $D$ that does not intersect the border of $D$.
    Notice that the neighbourhood of $D'$ is $D$.
    We have four separate cases according to the number of codewords in $D'$.

    \begin{itemize}
        \item If $|C \cap D'| \in \{3,4\}$, then  $|C \cap D| \geq 3$ and hence the claim holds.

        \item Assume that $|C \cap D'| = 2$.
        Let $\ba$ and $\bb$ be the two non-codewords in $D'$.
        They are neighbours and the two codewords in $D'$ are also neighbours of both $\ba$ and $\bb$.
        This means that there must be a third codeword in $D$ which separates $\ba$ and $\bb$ and hence
        $|C \cap D| \geq 3$.

        \item Assume that $|C \cap D'| = 1$.
        We need at least two more codewords in $D$ to separate the three non-codewords in $D'$ from each other.
        Thus, $|C \cap D| \geq 3$.

        \item Finally, assume that $|C \cap D'| = 0$.
        With one codeword in $D \setminus D'$ the code $C$ can cover at most two points of $D'$.
        So, we need at least two codewords in $D$ to cover the points of $D'$.
        To separate them we need at least three codewords in $D$.
        So, also in this case
        $|C \cap D| \geq 3$.
    \end{itemize}

    \vspace*{-8mm}
\end{proof}

Finally, let us settle the question of the optimal density of local identifying codes in the king grid.
It turns out that it is $2/9$, the same as the optimal density of identifying codes.
For the proof we introduce a share shifting scheme with the following two rules:

\begin{itemize}
 \leftskip=4.5mm 
\item[\textbf{Rule 1}:] If $\bc\in C$ has $|I(\bc)|=2$ and $ s(\bc)>\frac{9}{2}$, then we shift $1/4$ share units from $\bc$ to the adjacent corner neighbour codeword $\bc'$ which has a non-corner codeword neighbour.
\item[\textbf{Rule 2}:] If $\bc\in C$ has $|I(\bc)|=1$ and $s(\bc)>\frac{9}{2}$, then we shift $1/12$ share units from $\bc$ to two pairwise non-adjacent codewords  at (graphic) distance $2$ and Euclidean distance $\sqrt{5}$ from $\bc$ which are covered by at least three codewords. If there are adjacent codewords $\bc_1$ and $\bc_2$ which satisfy these conditions, then we choose a  codeword $\bc_i$ which satisfies $I(\bc_i)\not\subseteq I(\bc_j)$ where $\{i,j\}=\{1,2\}$.
\end{itemize}

\noindent
We have illustrated Rule $2$ in Figure \ref{constellations2} Constellation 5.
We denote by $s'(\bc)$ the share of a codeword $\bc\in C$ after applying Rule $1$ and by $s''(\bc)$ the share of $\bc\in C$ after applying both Rules $1$ and $2$ (in that order). In Lemma \ref{LemmaKingRule1} and Theorem \ref{TheLIDKing}, we will notice that Rule $1$ is applied only to codewords with $\frac{9}{2} < s(\bc)\leq \frac{19}{4}$ and Rule $2$ is applied only to codewords with $s(\bc)\in\{4\frac{7}{12} , 4\frac{2}{3}\}$. Moreover, after applying them, we will have $s''(\bc)\leq \frac{9}{2}$ for every $\bc\in C$.

\begin{lemma}\label{LemmaKingRule1}
    Let $C$ be a local identifying code in the king grid and let $\bc,\bc'\in C$ be such that $|I(\bc)|\geq2$ and $\bc'\in I(\bc)$. After applying Rule $1$, we have $s'(\bc)\leq 9/2$ and $s'(\bc')\leq4$ if Rule $1$ shifted share to $\bc'$.
\end{lemma}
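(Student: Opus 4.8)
The quantity to control is the share $s(\bc)=\sum_{\bu\in N[\bc]}|I(\bu)|^{-1}$; recall that $|N[\bc]|=9$ in $\mathcal K$. My plan is to bound $s(\bc)$ (and, for recipients, the shifted share $s'(\bc)$) by combining three kinds of purely local constraints. \emph{Coverage}: each codeword neighbour $\bd$ of $\bc$ covers, together with $\bc$, the $|N[\bc]\cap N[\bd]|-2$ vertices of $(N[\bc]\cap N[\bd])\setminus\{\bc,\bd\}$ — that is $2$ vertices if $\bd$ is a corner neighbour of $\bc$, and $4$ if a non-corner neighbour — so each such vertex $\bu$ has $|I(\bu)|\ge 2$. \emph{Separation}: since $C$ is local identifying, adjacent vertices have distinct $I$-sets; in particular, if $I(\bc)=\{\bc,\bd\}$ then every neighbour $\bu$ of $\bc$ with $\{\bc,\bd\}\subseteq I(\bu)$ has $|I(\bu)|\ge 3$, and more generally a vertex whose $I$-set has size $2$ excludes all further codewords from its closed neighbourhood, thereby forcing larger $I$-sets on several of its own neighbours — this cascading effect is precisely what trims the naive estimates below $9/2$. \emph{Independence}: among pairwise adjacent neighbours of $\bc$ at most one can have $I$-set exactly $\{\bc\}$, and two adjacent vertices cannot both have $I$-set $\{\bc,\bd\}$.

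With these tools the argument splits on $|I(\bc)|$. If $|I(\bc)|\ge 3$, then $\bc$ has at least two codeword neighbours whose coverage forces most neighbours of $\bc$ to have $|I|\ge 2$; filtering through the separation cascade and independence gives $s(\bc)\le\frac92$, and since Rule~1 never removes share from a codeword with $|I|\ge 3$, we get $s'(\bc)\le s(\bc)\le\frac92$ unless $\bc$ is a \emph{recipient}, which is the subject of the second assertion. If $|I(\bc)|=2$ with unique codeword neighbour $\bc'$, and $\bc'$ is a non-corner neighbour of $\bc$ (Constellation~1 of Figure~\ref{constellations1}), then the four common neighbours of $\bc,\bc'$ have $|I|\ge 3$, $|I(\bc')|\ge 3$ by separation of the adjacent pair $\bc,\bc'$, and the three remaining neighbours of $\bc$ contribute at most $2$ in total — the ``middle'' one, adjacent to the other two, is singly covered only if the other two each have $|I|\ge 2$, and if it has a second codeword then by the king geometry that codeword already covers one of the other two — so $s(\bc)\le\frac92$ and $\bc$ gives away nothing. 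If $\bc'$ is a corner neighbour of $\bc$ (Constellation~2), I would run through the finitely many coverage patterns of $N[\bc]$; in all but the ``heavy'' ones $s(\bc)\le\frac92$, and in the heavy ones $\frac92<s(\bc)\le\frac{19}{4}$, where Rule~1 fires (the unique codeword neighbour $\bc'$ of $\bc$ is exactly the designated corner recipient, once it has a non-corner codeword neighbour) and $s'(\bc)=s(\bc)-\frac14\le\frac92$; note a codeword with $|I|=2$ cannot itself receive, so it only loses $\frac14$.

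For the recipient bound, suppose Rule~1 sends share to $\bc'$. Then some giver $\bc$ with $I(\bc)=\{\bc,\bc'\}$ is a corner neighbour of $\bc'$, and $\bc'$ has a non-corner codeword neighbour $\bd$. From $I(\bc)=\{\bc,\bc'\}$ every common neighbour of $\bc$ and $\bc'$ has $|I|\ge 3$; every common neighbour of $\bc'$ and $\bd$ has $|I|\ge 2$, pushed to $|I|\ge 3$ if $|I(\bd)|=2$ (separation from $\bd$), while if $|I(\bd)|\ge 3$ the third codeword of $\bd$ lies in $N[\bd]$ and, by the geometry, cannot avoid all common neighbours of $\bc'$ and $\bd$, so at least one of the latter still has $|I|\ge 3$. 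Crucially, the single neighbour of $\bc'$ remote from both $\bc$ and $\bd$ cannot be covered by $\bc'$ alone: that would leave the common neighbour of $\bc$ and $\bc'$ lying between them with $I$-set exactly $\{\bc,\bc'\}$, contradicting its separation from the giver $\bc$. Feeding in these multiplicities bounds $s(\bc')$ strictly below $4$; and since each giver transfers only $\frac14$, at most two corner neighbours of $\bc'$ can be givers while $\bc'$ still retains a non-corner codeword neighbour, and a second giver creates extra coverage lowering the bound on $s(\bc')$ by more than the additional $\frac14$ received — so $s'(\bc')\le 4$ in every case.

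The main obstacle is exactly this bookkeeping: every first-order estimate of $s(\bc)$ — obtained by ignoring the size-$2$ cascade — overshoots $\frac92$, and likewise $s(\bc')$ overshoots $4$, so in each configuration one must locate the specific separation consequence that tightens it (typically: a codeword or non-codeword with $|I|=2$ forbids further codewords in its neighbourhood, raising the multiplicities of several neighbours). Organising the $|I(\bc)|=2$, corner-neighbour case into the small number of constellations of Figure~\ref{constellations1}, and checking that a recipient's incoming share is always absorbed by the overlap forced on its neighbourhood, is where the real effort lies.
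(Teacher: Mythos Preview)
Your outline captures the right strategy---local separation constraints combined with a case split---and your Constellation~1 analysis (the $|I(\bc)|=2$, non-corner case) is sound, in fact a shade tighter than the paper's. The recipient argument is also essentially correct once one unpacks your ``lying between them'' sentence: if the remote neighbour $\br$ of $\bc'$ had $I(\br)=\{\bc'\}$, every codeword candidate for the common neighbour $\bw\in N[\bc]\cap N[\bc']$ adjacent to $\br$ is ruled out, forcing $I(\bw)=\{\bc,\bc'\}=I(\bc)$, a contradiction.

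The genuine gap is your treatment of $|I(\bc)|\ge 3$. You write that coverage plus ``filtering through the separation cascade'' gives $s(\bc)\le\tfrac92$, but this is precisely where the work lies and it does \emph{not} fall out for free. Take $I(\bc)=\{\bc,\bc+(1,1),\bc+(-1,-1)\}$ with the two remaining corners $\bc+(1,-1),\bc+(-1,1)$ singly covered: your first-order estimate is $\tfrac13+6\cdot\tfrac12+2=\tfrac{16}{3}$, and even after imposing that among each adjacent triple covered by $\{\bc,\bc'\}$ at most one has $|I|=2$, the bound is still $\tfrac{14}{3}>\tfrac92$. What actually saves this configuration is that the singly covered corners eliminate all but one or two candidate codewords for the third element of $I(\bc+(1,0))$ and $I(\bc+(0,1))$, and then separating $\bc'$ from those vertices forces $|I(\bc')|\ge 4$ (and by symmetry $|I(\bc+(-1,-1))|\ge 4$), finally yielding $2\cdot 1+2\cdot\tfrac12+3\cdot\tfrac13+2\cdot\tfrac14=\tfrac92$. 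This is exactly the argument the paper carries out inside its Case~2.1, and it is not a generic ``cascade'' but a targeted deduction.

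The structural difference is that the paper splits on the \emph{type} of codeword neighbour (non-corner present vs.\ corners only) rather than on $|I(\bc)|$. That organisation is more efficient here: the non-corner case is handled uniformly for all $|I(\bc)|\ge 2$ in one line, and the corners-only case is stratified by how many of the five vertices outside $N[\bc']$ are singly covered, with the $|I(\bc)|\ge 3$ sub-case appearing naturally as ``$\bc+(-1,-1)\in C$''. Your split forces you to redo essentially the same constellation analysis in the $|I(\bc)|\ge 3$ branch that you already acknowledge must be done for $|I(\bc)|=2$; the paper avoids that duplication.
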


\begin{proof}
       We have two claims, that $s'(\bc)\leq 9/2$ for all $\bc\in C$ and $s'(\bc')\leq 4$ if Rule $1$ shifts share to $\bc'\in C$  (notice that Rule $1$ may shift share to $\bc'$ more than once). We will confirm the second claim each time after we have shifted share into a codeword. In the following, we assume that $\bc'\in C$ is a codeword neighbour of $\bc\in C$.

\medskip
    \textbf{Case 1.} Assume first that $\bc'$ is a non-corner neighbour of $\bc$ and that Rule $1$ does not shift share to $\bc$. Without loss of generality, we may assume that $\bc' = \bc + (1,0)$. We have illustrated this case in Figure \ref{constellations1} Constellation 1. The codeword $\bc'$ does not cover the points $\bc + (-1,1)$, $\bc + (-1,0)$ and $\bc + (-1,-1)$. At most one of these points can be covered by exactly one codeword.
    Since $C$ separates $\bc$ and $\bc'$, at least one of them is covered by at least three codewords.
    Also, $C$ separates the neighbours $\bc + (0,1)$ and $\bc + (1,1)$ which means that at least one of them is covered by at least three codewords. Similarly, $C$ separates  the neighbours $\bc + (0,-1)$ and $\bc + (1,-1)$ which means that at least one of them is covered by at least three codewords.
    Thus, in this case in the neighbourhood of $\bc$, at most one point is covered by only one codeword, at most five points are covered by only two codewords and at least three points are covered by at least three codewords and hence
       $$
       s'(\bc)=s(\bc) \leq  1 + 5 \cdot \frac{1}{2} + 3 \cdot \frac{1}{3} = \frac{9}{2}.
       $$

\textbf{Case 2.} Assume then that $\bc'$ is a corner neighbour of $\bc$ and that $\bc$ does not have any non-corner codeword neighbours. Thus, $s'(\bc)\leq s(\bc)$. Without loss of generality, we may assume that $\bc' = \bc + (1,1)$. First, there are at least three vertices in the closed neighbourhood of $\bc$ covered by at least three codewords: Indeed, $\bc'$ covers the points in the set $\{\bc,\bc',\bc + (0,1),\bc + (1,0)\} = N[\bc]\cap N[\bc']$ and since all of these vertices are adjacent, at most one of them can have  $\{\bc,\bc'\}$ as its $I$-set. Hence, $\bc'$ has an adjacent non-corner codeword.

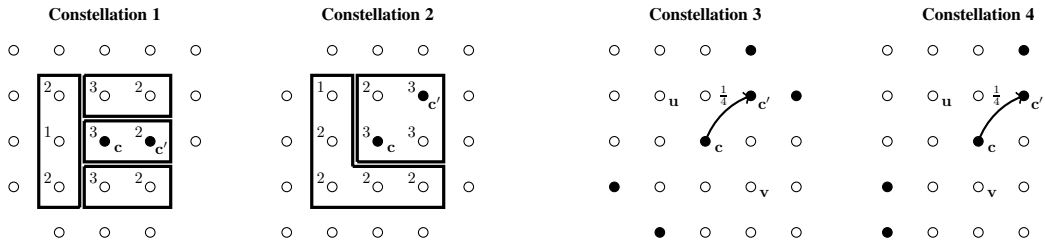
\begin{figure}[!ht]
\vspace*{-2mm}
    \centering
    \begin{tikzpicture}[scale=0.6]
                \begin{scope}[xshift=-9cm]

        \node[scale=0.6] at (1,4.8) {\textbf{Constellation 1}};

        \draw[] (0,0) circle(3pt);
        \draw[] (1,0) circle(3pt);
        \draw[] (2,0) circle(3pt);
        \draw[] (0,1) circle(3pt);
        \draw[] (1,1) circle(3pt); 
        \draw[] (2,1) circle(3pt);

        \draw[] (-1,2) circle(3pt);
        \draw[] (0,2) circle(3pt);
        \draw[fill=black] (1,2) circle(3pt); \node[scale=0.6] at (1.3,1.85) {\small $\bc$};
        \draw[fill=black] (2,2) circle(3pt); \node[scale=0.6] at (2.25,1.85) {\small $\bc'$};
        \draw[] (3,2) circle(3pt);

        \draw[] (-1,3) circle(3pt);
        \draw[] (0,3) circle(3pt);
        \draw[] (1,3) circle(3pt);
        \draw[] (2,3) circle(3pt);
        \draw[] (3,3) circle(3pt); 
        \draw[] (-1,1) circle(3pt); 

        \draw[] (-1,4) circle(3pt);
        \draw[] (0,4) circle(3pt);
        \draw[] (1,4) circle(3pt);
        \draw[] (2,4) circle(3pt);
        \draw[] (3,4) circle(3pt);

        \draw[very thick] (0.45,0.55) -- (-0.45,0.55) -- (-0.45,3.45) -- (0.45,3.45) -- (0.45,0.55);
        \node[scale=0.6] at (-0.25,3.2) {\small $2$};
        \node[scale=0.6] at (-0.25,2.2) {\small $1$};
        \node[scale=0.6] at (-0.25,1.2) {\small $2$};

        \draw[very thick] (0.55,2.45) -- (2.45,2.45) -- (2.45,1.55) -- (0.55,1.55) -- (0.55,2.45);
        \node[scale=0.6] at (1.75,3.2) {\small $2$};
        \node[scale=0.6] at (0.75,3.2) {\small $3$};

        \draw[very thick] (0.55,3.45) -- (2.45,3.45) -- (2.45,2.55) -- (0.55,2.55) -- (0.55,3.45);
        \node[scale=0.6] at (1.75,2.2) {\small $2$};
        \node[scale=0.6] at (0.75,2.2) {\small $3$};

        \draw[very thick] (0.55,1.45) -- (2.45,1.45) -- (2.45,0.55) -- (0.55,0.55) -- (0.55,1.45);
        \node[scale=0.6] at (1.75,1.2) {\small $2$};
        \node[scale=0.6] at (0.75,1.2) {\small $3$};


        \node[scale=0.6] at (7,4.8) {\textbf{Constellation 2}};

        \draw[] (5,0) circle(3pt);
        \draw[] (6,0) circle(3pt);
        \draw[] (7,0) circle(3pt);
        \draw[] (8,0) circle(3pt);

        \draw[] (5,1) circle(3pt);
        \draw[] (6,1) circle(3pt);
        \draw[] (7,1) circle(3pt);
        \draw[] (8,1) circle(3pt);
        \draw[] (9,1) circle(3pt);

        \draw[] (5,2) circle(3pt);
        \draw[] (6,2) circle(3pt);
        \draw[fill=black] (7,2) circle(3pt); 		\node[scale=0.6] at (7.3,1.85) {\small $\bc$};
        \draw[] (8,2) circle(3pt);
        \draw[] (9,2) circle(3pt);

        \draw[] (5,3) circle(3pt);
        \draw[] (6,3) circle(3pt);
        \draw[] (7,3) circle(3pt);
        \draw[fill=black] (8,3) circle(3pt);
        \node[scale=0.6] at (8.25,2.85) {\small $\bc'$};
        \draw[] (9,3) circle(3pt);

        \draw[] (6,4) circle(3pt);
        \draw[] (7,4) circle(3pt);
        \draw[] (8,4) circle(3pt);
        \draw[] (9,4) circle(3pt);

                  \draw[very thick] (6.55,1.55) -- (8.45,1.55) -- (8.45,3.45) -- (6.55,3.45) -- (6.55,1.55);
                \node[scale=0.6] at (6.75,3.2) {\small $2$};
                \node[scale=0.6] at (6.75,2.2) {\small $3$};
                \node[scale=0.6] at (7.75,3.2) {\small $3$};
                \node[scale=0.6] at (7.75,2.2) {\small $3$};

        \draw[very thick] (6.45,1.45) -- (8.45,1.45) -- (8.45,0.55) -- (5.55,0.55) -- (5.55,3.45) -- (6.45,3.45) -- (6.45,1.45);
                \node[scale=0.6] at (5.75,3.2) {\small $1$};
                \node[scale=0.6] at (5.75,2.2) {\small $2$};
                \node[scale=0.6] at (5.75,1.2) {\small $2$};
                \node[scale=0.6] at (6.75,1.2) {\small $2$};
                \node[scale=0.6] at (7.75,1.2) {\small $2$};

        \end{scope}

\begin{scope}[xshift=4.2cm,yshift=12cm]
        \node[scale=0.6] at (1,-7.2) {\textbf{Constellation 3}};

        \draw (-1,-8) circle(3pt);
        \draw (0,-8) circle(3pt);
        \draw (1,-8) circle(3pt);
        \draw[fill=black] (2,-8) circle(3pt);

        \draw (-1,-9) circle(3pt);
        \draw (0,-9) circle(3pt); \node[scale=0.6] at (0.3,-9.15) {\small $\bu$};
        \draw[] (1,-9) circle(3pt);
        \draw[fill=black] (2,-9) circle(3pt); \node[scale=0.6] at (2.3,-9.15) {\small $\bc'$};
        \draw[fill=black] (3,-9) circle(3pt);

        \draw[] (-1,-10) circle(3pt);
        \draw (0,-10) circle(3pt);
        \draw[fill=black] (1,-10) circle(3pt); \node[scale=0.6] at (1.3,-10.15) {\small $\bc$};
        \draw[] (2,-10) circle(3pt);
        \draw[] (3,-10) circle(3pt);

        \draw[fill=black] (-1,-11) circle(3pt);
        \draw (0,-11) circle(3pt);
        \draw[] (1,-11) circle(3pt);
        \draw[] (2,-11) circle(3pt); \node[scale=0.6] at (2.3,-11.15) {\small $\bv$};
        \draw[] (3,-11) circle(3pt);

        \draw[fill=black] (0,-12) circle(3pt);
        \draw[] (1,-12) circle(3pt);
        \draw[] (2,-12) circle(3pt);
        \draw[] (3,-12) circle(3pt);

        \node[scale=0.6] at (1.4,-9) {\small $\frac{1}{4}$};
        \path[draw,thick]
        (1,-10) edge [->,bend left=25, looseness=0.9] node {} (2,-9)
         ;

        \node[scale=0.6] at (7,-7.2) {\textbf{Constellation 4}};

        \draw[] (5,-8) circle(3pt);
        \draw[] (6,-8) circle(3pt);
        \draw[] (7,-8) circle(3pt);
        \draw[fill=black] (8,-8) circle(3pt);

        \draw[] (5,-9) circle(3pt);
        \draw[] (6,-9) circle(3pt); \node[scale=0.6] at (6.3,-9.15) {\small $\bu$};
        \draw[] (7,-9) circle(3pt);
        \draw[fill=black] (8,-9) circle(3pt); \node[scale=0.6] at (8.3,-9.15) {\small $\bc'$};

        \draw[] (5,-10) circle(3pt);
        \draw[] (6,-10) circle(3pt);
        \draw[fill=black] (7,-10) circle(3pt); \node[scale=0.6] at (7.3,-10.15) {\small $\bc$};
        \draw[] (8,-10) circle(3pt);

        \draw[fill=black] (5,-11) circle(3pt);
        \draw[] (6,-11) circle(3pt);
        \draw[] (7,-11) circle(3pt); \node[scale=0.6] at (7.3,-11.15) {\small $\bv$};
        \draw[] (8,-11) circle(3pt);

        \draw[fill=black] (5,-12) circle(3pt);
        \draw[] (6,-12) circle(3pt);
        \draw[] (7,-12) circle(3pt);
        \draw[] (8,-12) circle(3pt);

             \node[scale=0.6] at (7.4,-9) {\small $\frac{1}{4}$};
        \path[draw,thick]
        (7,-10) edge [->,bend left=25, looseness=0.9] node {} (8,-9)
         ;
     \end{scope}    \end{tikzpicture}
    \caption{Constellations in the king grid. The edges have been omitted for simplicity.  Numbers adjacent to vertices denote the minimum number of codewords that might cover them. Numbers within boxes can sometimes be switched with each other within the same box. Constellations 3 and 4 also illustrate shifting using Rule 1.}
    \label{constellations1}\vspace*{-3mm}
\end{figure}

The codeword $\bc'$ does not cover the points $\bc + (-1,1)$, $\bc + (-1,0)$, $\bc + (-1,-1)$, $\bc + (0,-1)$ and $\bc + (1,-1)$ among the points in $N[\bc]$.
Clearly
at most two of these points are covered by only one codeword --- $\bc$. Let us assume that there exist two such points and let us name them $\bu$ and $\bv$.
Otherwise, we have the case as in Figure \ref{constellations1} Constellation 2 giving
$s'(\bc)=s(\bc) \leq 1 + 5 \cdot \frac{1}{2} + 3 \cdot \frac{1}{3} = \frac{9}{2}$. 

\smallskip
Since the two non-corner neighbours of $\bc$ that $\bc'$ does not cover are neighbours, at least one of $\bu$ and $\bv$ is a corner neighbour of $\bc$.

 \medskip
\textbf{Case 2.1} Assume first that $\bu$ and $\bv$ are both corner neighbours of $\bc$.
If they are adjoining, then $C$ separates neither of them from the non-corner neighbour between them.
Thus, $\bu$ and $\bv$ are opposite corner neighbours of $\bc$ and hence, $\bu = \bc + (-1,1)$ and $\bv = \bc + (1,-1)$ (or vice versa).

Note that $\bc'$ is covered by at least four codewords. Indeed, recall that at least one of $\bc + (1,0)$ and $\bc + (0,1)$ is covered by at least three codewords. Moreover, since $I(\bu)=I(\bv)=\{\bc\}$, the only possible locations for the third codeword are $\bv'=\bc + (2,1)$ and $\bu'=\bc + (1,2)$. However, if only one of these two vertices is a codeword, say $\bu'$, then we require a fourth codeword in $I(\bc')$ to separate $\bc'$ and $\bc+(0,1)$. Thus, $|I(\bc')|\geq4$.
If $\bc + (-1,-1) \in C$, then $C$ covers it by four codewords due to the same arguments as above and hence, $s(\bc) \leq 2 \cdot 1 + 2 \cdot \frac{1}{2} + 3 \cdot \frac{1}{3} + 2 \cdot \frac{1}{4} = \frac{9}{2}$.

\medskip
So, let us assume that $\bc + (-1,-1) \not \in C$.
Thus, $I(\bc) = \{ \bc, \bc' \}$ and hence, $\bu',\bv'\in C$.  Now, $\bc + (-2,-1) \in C$ since $C$ separates $\bc + (-1,0)$ and $\bu$.
Similarly, $\bc + (-1,-2) \in C$ since $C$ separates $\bc + (0,-1)$ and $\bv$. See Constellation $3$ in Figure \ref{constellations1}.
Thus,
$$
s(\bc) \leq 2 \cdot 1 + 3 \cdot \frac{1}{2} + 3 \cdot \frac{1}{3} +  \frac{1}{4} = \frac{19}{4}.
$$

 Furthermore, we can give a rough upper bound $$s(\bc')\leq \frac{3}{2}+\frac{5}{3}+\frac{1}{4}=3\frac{5}{12}$$ as we can see from Figure \ref{constellations1} Constellation 3.

\medskip
 Now, we shift $1/4$ share units according to Rule $1$ from $\bc$ to $\bc'$. After this, we have
 $$s'(\bc)\leq\frac{19}{4}-\frac{1}{4}=\frac{9}{2}$$

 and
  \begin{equation}\label{EqCase21s'c'}
     s'(\bc')\leq3\frac{5}{12}+\frac{1}{4}= 3\frac{2}{3}<4.
 \end{equation} Indeed, observe that Rule $1$ can shift share to vertex $\bc'$ only once as $\bc'$ does not have any other suitable corner neighbours.

\medskip
\textbf{Case 2.2} Finally, assume that $\bu$ is a corner neighbour of $\bc$ and $\bv$ is a non-corner neighbour of $\bc$.
Without loss of generality, we may assume that $\bu = \bc + (-1,1)$ and that $\bv = \bc + (0,-1)$. See Constellation $4$ of Figure \ref{constellations1}.
Now, $I(\bc) = \{ \bc , \bc' \}$.
We have $\bc + (-2,-1) \in C$ and $\bc + (-2,-2)\in C$ because $C$ separates $\bc + (-1,0)$ and $\bu$ and because $C$ separates $\bc + (-1,-1)$ and $\bc + (-1,0)$, respectively.
Since $C$ separates $\bc + (0,1)$ and $\bc$, we have $\bc + (1,2) \in C$ and
since $C$ separates $\bc + (0,1)$ and $\bc' = \bc + (1,1)$, $\bc'$ is covered by at least four codewords.
The code $C$ separates $\bc + (1,0)$ and $\bc$ and thus $\bc + (1,0)$ is covered by at least three codewords.
Finally, $C$ separates $\bc + (1,-1)$ and $\bv$ and hence the point $\bc + (1,-1)$ is covered by at least two codewords.
Thus,
$$
s(\bc) \leq 2 \cdot 1 + 3 \cdot \frac{1}{2} + 3 \cdot \frac{1}{3} +  \frac{1}{4} = \frac{19}{4}.
$$

Let us then consider $s(\bc')$. Recall that $\bc'$ is covered by at least four codewords. Hence, at least one of vertices $\bc'+(1,-1),\bc'+(1,0),\bc'+(1,1)$ is a codeword and thus $\bc'+(1,0)$ is covered by at least three codewords. Furthermore, $\bc+(1,0)$ must be covered by at least three codewords to separate it from $\bc$. Moreover, at most one of $\bc'+(-1,1),\bc'+(0,1),\bc'+(1,1)$ can be covered by only two codewords. Consequently, the only other neighbours of $\bc'$ which can be covered by only two codewords are $\bc$ and $\bc'+(1,-1)$. Notice that since $\bc+(1,0)$ is covered by at least three codewords, $\bc'+(1,-1)$ is covered by at least two codewords. Shares of other points have been considered for $\bc$ and can be re-verified with Constellation 4 of Figure \ref{constellations1}. Thus,
$$s(\bc')\leq \frac{3}{2}+\frac{5}{3}+\frac{1}{4}=3\frac{5}{12}.$$

When we apply Rule $1$, we shift $1/4$ share units away from $s(\bc)$ and hence $s'(\bc)\leq \frac{9}{2}$. Moreover, we shift to $\bc'$ at most $\frac{1}{2}$ share (if $\bc+(2,0)\in C$ and it is in somewhat similar position to $\bc$). Hence, we have \begin{equation}\label{EqCase22s'c'}s'(\bc')\leq 3\frac{11}{12}<4.\end{equation}

Together these two cases give the claim, with the observations that we have calculated value of modified share $s'$ for $\bc$ and $\bc'$ (see Equations (\ref{EqCase21s'c'}) and (\ref{EqCase22s'c'})) whenever we have shifted share and all cases in which share can be shifted with Rule $1$ have been considered.
\end{proof}

Now, we are ready to prove the exact density of optimal local identifying codes in the king grid.

\begin{theorem}\label{TheLIDKing}
    $$
    \gamma^{L-ID}(\mathcal{K}) = \frac{2}{9}.
    $$
\end{theorem}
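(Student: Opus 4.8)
The plan is to prove the two bounds $\gamma^{L-ID}(\mathcal{K}) \le \tfrac29$ and $\gamma^{L-ID}(\mathcal{K}) \ge \tfrac29$ separately. The upper bound is free: every identifying code is in particular a local identifying code, so $\gamma^{L-ID}(\mathcal{K}) \le \gamma^{ID}(\mathcal{K}) = \tfrac29$ by Table~\ref{known bounds}. Thus the whole content is in the lower bound. Fix a local identifying code $C$ in $\mathcal{K}$; it is in particular a covering code. I would apply the share shifting scheme built from Rules~1 and~2 and invoke Lemma~\ref{Lemma shareshift}: it suffices to show $s''(\bc) \le \tfrac92$ for \emph{every} $\bc \in C$, since then $D(C) \ge \tfrac{1}{9/2} = \tfrac29$. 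The proof is then a case analysis on $|I(\bc)|$, bearing in mind that Rule~1 moves share out of codewords with $|I(\bc)| = 2$ and Rule~2 out of codewords with $|I(\bc)| = 1$, while both rules only move share \emph{into} codewords with $|I| \ge 3$.

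\textbf{Cases $|I(\bc)| \le 2$.} If $|I(\bc)| = 2$ then $\bc$ has a single codeword neighbour, so it cannot be a Rule~1 or Rule~2 recipient; hence $s''(\bc) = s'(\bc) \le \tfrac92$ is precisely Lemma~\ref{LemmaKingRule1} (for $s(\bc) \le \tfrac92$ nothing is shifted, otherwise Rule~1 removes $\tfrac14$ and the lemma certifies the bound). If $|I(\bc)| = 1$ then $\bc$ has no codeword neighbour, so it only gives share away, and I must show $s(\bc) \le \tfrac92 + \tfrac16 = 4\tfrac23$ and that when $s(\bc) > \tfrac92$ the two Rule~2 recipients exist. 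Since $C$ separates $\bc$ from each of its eight neighbours and $I(\bc) = \{\bc\}$, every neighbour lies in at least two $I$-sets, giving the crude $s(\bc) \le 1 + 8\cdot\tfrac12 = 5$; to improve this to $4\tfrac23$ one exploits that $C$ must also separate the adjacent pairs among the eight neighbours (in particular adjacent corner/non-corner neighbours), which forces enough of them to be covered at least three times. A bookkeeping over the possible coverage profiles of $N[\bc]$ should leave only the two critical profiles with $s(\bc) \in \{4\tfrac{7}{12}, 4\tfrac23\}$, and in exactly these the separation constraints pin down two non-adjacent triply-covered codewords at graphic distance $2$ and Euclidean distance $\sqrt5$ from $\bc$ (Constellation~5 of Figure~\ref{constellations2}); removing $2\cdot\tfrac1{12}$ then yields $s''(\bc) \le \tfrac92$.

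\textbf{Case $|I(\bc)| \ge 3$.} This is the crux and the expected main obstacle. Here $\bc$ gives nothing away but may receive from both rules, so $s''(\bc) = s(\bc) + R_1 + R_2$ where $R_1$ is the total Rule~1 share received and $R_2 = \tfrac1{12}\cdot(\text{number of Rule~2 shifters towards }\bc)$. If $R_1 > 0$, Lemma~\ref{LemmaKingRule1} already gives $s(\bc) + R_1 \le 4$, so it remains to show $R_2 \le \tfrac12$, i.e.\ that at most six of the eight lattice points at Euclidean distance $\sqrt5$ from $\bc$ are Rule~2 shifters towards $\bc$; if $R_1 = 0$ one needs $s(\bc) + R_2 \le \tfrac92$ directly. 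The structural input is that each Rule~2 shifter $\bc_0$ has $|I(\bc_0)| = 1$ with $s(\bc_0) > \tfrac92$, which (by the previous case) forces a rigid pattern around $\bc_0$; together with $\bc$ being a designated recipient, this pins down part of $N[\bc]$ and forces several of its neighbours to be covered three or four times, so that $s(\bc)$ is small exactly when the incoming share is large. Enumerating the finitely many ways a codeword with $|I| \ge 3$ can simultaneously be a Rule~1 and/or Rule~2 recipient --- Constellations~6, 7, 8 of Figure~\ref{constellations2}, split further according to $|I(\bc)| = 3$ versus $|I(\bc)| \ge 4$ --- and verifying $s(\bc) + R_1 + R_2 \le \tfrac92$ in each, finishes the proof. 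The points on which I expect to spend the most effort are (i) bounding the number of simultaneous Rule~2 shifters towards a fixed $\bc$ (the reason Rule~2 carries the tie-breaking clause $I(\bc_i) \not\subseteq I(\bc_j)$), and (ii) sharpening the estimate of $s(\bc)$ for $|I(\bc)| = 3$ when there is also an incoming Rule~1 transfer.
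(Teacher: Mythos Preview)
Your proposal is correct and follows essentially the same approach as the paper: the upper bound via $\gamma^{ID}(\mathcal{K})$, the lower bound via the share shifting scheme with Rules~1 and~2, and the case split on $|I(\bc)|$ with Lemma~\ref{LemmaKingRule1} handling $|I(\bc)|\ge 2$ after Rule~1. The only organizational difference is that the paper does not split the recipient analysis into $R_1>0$ versus $R_1=0$; instead it observes that \emph{every} Rule~2 recipient $\bc'$ arises from some isolated sender, uses the forced structure (your Constellations~6--8, which are the paper's Cases~1--3) to get the uniform bound $s(\bc')\le 4\tfrac14$, notes $s'(\bc')\le 4\tfrac14$ in either event, and then shows that at most \emph{three} (not six) of the eight $\sqrt5$-points can be Rule~2 senders toward $\bc'$, giving $s''(\bc')\le 4\tfrac14+\tfrac{3}{12}=\tfrac92$ directly---so your anticipated difficulty~(ii) does not actually arise, while~(i) is indeed where the work is.
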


\begin{proof}
    Since any identifying code is, in particular, a local identifying code, we have $\gamma^{L-ID}(\mathcal{K}) \leq \gamma^{ID}(\mathcal{K}) = \frac{2}{9}$.
    We prove the lower bound $\gamma^{L-ID}(\mathcal{K}) \geq \frac{2}{9}$ by showing that $s''(\bc) \leq \frac{9}{2}$ for each $\bc \in C$. After that, the claim follows from  Lemma \ref{Lemma shareshift}. Recall that we apply first Rule $1$ and then Rule $2$ to obtain value for modified share function $s''$.
    By Lemma \ref{LemmaKingRule1}, after applying Rule $1$, each codeword $\bc$ with $|I(\bc)|\geq2$ has share of at most $s'(\bc)\leq \frac{9}{2}$. In the following, we consider a codeword $\bc$ with $I(\bc)=\{\bc\}$. Thus, Rule $1$ does not shift any share into or away from $\bc$ and $s(\bc)=s'(\bc)$. Moreover, also Rule $2$ cannot shift share to $\bc$ and hence, $s''(\bc)\leq s(\bc)$.

Observe that now each neighbour of $\bc$ is covered by at least two codewords. Moreover, if three or more of them are covered by at least three codewords, then $s''(\bc)\leq s(\bc)\leq 1+\frac{5}{2}+\frac{3}{3}=\frac{9}{2}$. Hence, we may assume that at most two of them are covered by three or more codewords. Consider now any non-corner neighbour $\bu$ of $\bc$. We may assume that $I(\bu)=\{\bc,\bc'\}$. However, $\bc'$ is also adjacent to at least one of the corner neighbours of $\bc$ adjacent to $\bu$. Thus, to separate that corner neighbour from $\bu$, it is covered by at least three codewords. Hence, there has to be at least two opposite corner neighbours of $\bc$ which are covered by three codewords. Assume that $\bc+(1,-1)$ and $\bc+(-1,1)$ are covered by three codewords.

\medskip
Assume first that $\{\bc+(2,-1),\bc+(1,-2)\}\not\subseteq C$ and without loss of generality that $\bc+(2,-1)\in C$. 
Observe that since $\bc+(0,-1)$ is covered by exactly two codewords, one of those codewords is either $\bc+(0,-2)$ or $\bc+(-1,-2)$. However, vertex $\bc+(-1,-1)$ is adjacent to both of those codewords and hence, $|I(\bc+(-1,-1))|\geq3$, a contradiction. Therefore,  $\{\bc+(2,-1),\bc+(1,-2)\}\subseteq C$. 
In this case, we have $s(\bc)\leq 1+\frac{6}{2}+\frac{2}{3}=4\frac{2}{3}$. Here we have the inequality since it is possible that one of the two corner neighbours is actually covered by four codewords (in that case $s(\bc)=4\frac{7}{12}$). Furthermore, since Rule $1$ does not affect the codeword $\bc$, we have $s'(\bc)\leq 4\frac{2}{3}$.

Observe that both codewords $\bc+(2,-1)$ and $\bc+(1,-2)$ cannot be covered by only two codewords since they are separated by code $C$.  Moreover, similar considerations can also be applied to codewords $\bc+(-2,1)$ and $\bc+(-1,2)$. However, as they are in a symmetric position compared to $\bc+(2,-1)$ and $\bc+(1,-2)$, we do not mention them in the following arguments.  Let us denote     $\bc'=\bc+(2,-1)$ and assume, without loss of generality, that $I(\bc')\not\subseteq I(\bc+(1,-2))$. Thus, at least one of vertices $\bc+(3,0)$, $\bc+(3,-1)$ and $\bc+(3,-2)$ is a codeword. Let us now divide the proof into three cases based on which one of these three vertices is a codeword.

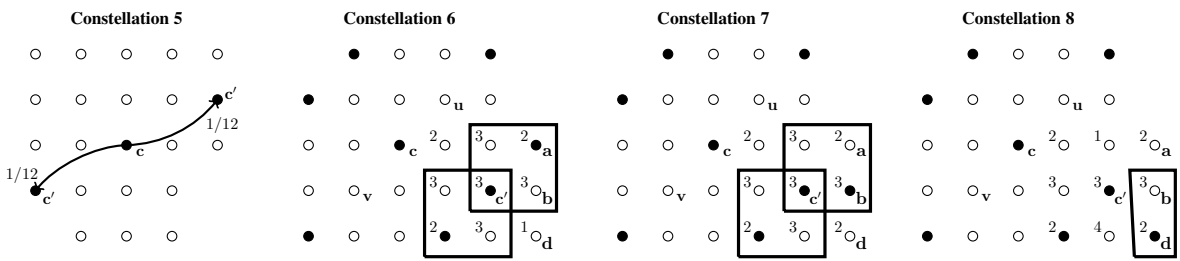
\begin{figure}[!h]
\vspace*{3mm}
    \centering
    \begin{tikzpicture}[scale=0.6]
                \begin{scope}[xshift=-9cm]

        \node[scale=0.6] at (1,4.8) {\textbf{Constellation 5}};

        \draw[] (0,0) circle(3pt);
        \draw[] (1,0) circle(3pt);
        \draw[] (2,0) circle(3pt);

        \draw[] (0,1) circle(3pt);
        \draw[] (1,1) circle(3pt); 
        \draw[] (2,1) circle(3pt);

        \draw[] (-1,2) circle(3pt);
        \draw[] (0,2) circle(3pt);
        \draw[fill=black] (1,2) circle(3pt); \node[scale=0.6] at (1.3,1.85) {\small $\bc$};
        \draw[] (2,2) circle(3pt);
        \draw[] (3,2) circle(3pt);

        \draw[] (-1,3) circle(3pt);
        \draw[] (0,3) circle(3pt);
        \draw[] (1,3) circle(3pt);
        \draw[] (2,3) circle(3pt);
        \draw[fill=black] (3,3) circle(3pt); \node[scale=0.6] at (3.3,3.15) {\small $\bc'$};
        \draw[fill=black] (-1,1) circle(3pt); \node[scale=0.6] at (-0.7,0.85) {\small $\bc'$};

        \draw[] (-1,4) circle(3pt);
        \draw[] (0,4) circle(3pt);
        \draw[] (1,4) circle(3pt);
        \draw[] (2,4) circle(3pt);
        \draw[] (3,4) circle(3pt);

        \node[scale=0.6] at (-1.3,1.35) {\small $1/12$};
        \node[scale=0.6] at (3.1,2.45) {\small $1/12$};

\path[draw,thick]
     (1,2) edge [->,bend right=25, looseness=0.9] node {} (3,3)
     (1,2) edge [->,bend right=25, looseness=0.8] node {} (-1,1)

    ;


        \node[scale=0.6] at (7,4.8) {\textbf{Constellation 6}};

        \draw[fill=black] (5,0) circle(3pt);
        \draw[] (6,0) circle(3pt);
        \draw[] (7,0) circle(3pt);
        \draw[fill=black] (8,0) circle(3pt);

        \draw[] (5,1) circle(3pt);
        \draw[] (6,1) circle(3pt);
        \node[scale=0.6] at (6.3,0.85) {\small $\bv$};
        \draw[] (7,1) circle(3pt);
        \draw[] (8,1) circle(3pt);
        \draw[fill=black] (9,1) circle(3pt);
        \node[scale=0.6] at (9.25,0.85) {\small $\bc'$};
        \draw[] (9,0) circle(3pt);

        \draw[] (5,2) circle(3pt);
        \draw[] (6,2) circle(3pt);
        \draw[fill=black] (7,2) circle(3pt); 		\node[scale=0.6] at (7.3,1.85) {\small $\bc$};
        \draw[] (8,2) circle(3pt);
        \draw[] (9,2) circle(3pt);

        \draw[fill=black] (5,3) circle(3pt);
        \draw[] (6,3) circle(3pt);
        \draw[] (7,3) circle(3pt);
        \draw[] (8,3) circle(3pt);
        \node[scale=0.6] at (8.3,2.85) {\small $\bu$};
        \draw[] (9,3) circle(3pt);

        \draw[fill=black] (6,4) circle(3pt);
        \draw[] (7,4) circle(3pt);
        \draw[] (8,4) circle(3pt);
        \draw[fill=black] (9,4) circle(3pt);

       \draw[fill=black] (10,2) circle(3pt);
        \draw[] (10,1) circle(3pt);
        \draw[] (10,0) circle(3pt);
        \node[scale=0.6] at (10.25,1.85)  {$\ba$};
        \node[scale=0.6] at (10.25,0.85)  {$\bb$};
        \node[scale=0.6] at (10.25,-0.15)  {$\bd$};

        \draw[very thick] (8.55,0.55) -- (10.45,0.55) -- (10.45,2.45) -- (8.55,2.45) -- (8.55,0.55);
        \node[scale=0.6] at (8.75,1.2) {\small $3$};
        \node[scale=0.6] at (9.75,2.2) {\small $2$};
        \node[scale=0.6] at (8.75,2.2) {\small $3$};
        \node[scale=0.6] at (9.75,1.2) {\small $3$};

        \begin{scope}[xshift=-1cm,yshift=-1cm]
        \draw[very thick] (8.55,0.55) -- (10.45,0.55) -- (10.45,2.45) -- (8.55,2.45) -- (8.55,0.55);
        \node[scale=0.6] at (8.75,1.2) {\small $2$};
        \node[scale=0.6] at (8.75,2.2) {\small $3$};
        \node[scale=0.6] at (8.75,3.2) {\small $2$};
        \node[scale=0.6] at (9.75,1.2) {\small $3$};
        \node[scale=0.6] at (10.75,1.2) {\small $1$};
        \end{scope}
        \end{scope}
        \begin{scope}[xshift=-2.1cm,yshift=0cm]
        \node[scale=0.6] at (7,4.8) {\textbf{Constellation 7}};
        \draw[fill=black] (5,0) circle(3pt);
        \draw[] (6,0) circle(3pt);
        \draw[] (7,0) circle(3pt);
        \draw[fill=black] (8,0) circle(3pt);

        \draw[] (5,1) circle(3pt);
        \draw[] (6,1) circle(3pt);
        \node[scale=0.6] at (6.3,0.85) {\small $\bv$};
        \draw[] (7,1) circle(3pt);
        \draw[] (8,1) circle(3pt);
        \draw[fill=black] (9,1) circle(3pt);
        \node[scale=0.6] at (9.25,0.85) {\small $\bc'$};
        \draw[] (9,0) circle(3pt);

        \draw[] (5,2) circle(3pt);
        \draw[] (6,2) circle(3pt);
        \draw[fill=black] (7,2) circle(3pt); 		\node[scale=0.6] at (7.3,1.85) {\small $\bc$};
        \draw[] (8,2) circle(3pt);
        \draw[] (9,2) circle(3pt);

        \draw[fill=black] (5,3) circle(3pt);
        \draw[] (6,3) circle(3pt);
        \draw[] (7,3) circle(3pt);
        \draw[] (8,3) circle(3pt);
        \node[scale=0.6] at (8.3,2.85) {\small $\bu$};
        \draw[] (9,3) circle(3pt);

        \draw[fill=black] (6,4) circle(3pt);
        \draw[] (7,4) circle(3pt);
        \draw[] (8,4) circle(3pt);
        \draw[fill=black] (9,4) circle(3pt);

       \draw[] (10,2) circle(3pt);
        \draw[fill=black] (10,1) circle(3pt);
        \draw[] (10,0) circle(3pt);
        \node[scale=0.6] at (10.25,1.85)  {$\ba$};
        \node[scale=0.6] at (10.25,0.85)  {$\bb$};
        \node[scale=0.6] at (10.25,-0.15)  {$\bd$};

                \draw[very thick] (8.55,0.55) -- (10.45,0.55) -- (10.45,2.45) -- (8.55,2.45) -- (8.55,0.55);
        \node[scale=0.6] at (8.75,1.2) {\small $3$};
        \node[scale=0.6] at (9.75,2.2) {\small $2$};
        \node[scale=0.6] at (8.75,2.2) {\small $3$};
        \node[scale=0.6] at (9.75,1.2) {\small $3$};

        \begin{scope}[xshift=-1cm,yshift=-1cm]
        \draw[very thick] (8.55,0.55) -- (10.45,0.55) -- (10.45,2.45) -- (8.55,2.45) -- (8.55,0.55);
        \node[scale=0.6] at (8.75,1.2) {\small $2$};
        \node[scale=0.6] at (8.75,2.2) {\small $3$};
        \node[scale=0.6] at (8.75,3.2) {\small $2$};
        \node[scale=0.6] at (9.75,1.2) {\small $3$};
        \node[scale=0.6] at (10.75,1.2) {\small $2$};
        \end{scope}
     \end{scope}
        \begin{scope}[xshift=4.6cm,yshift=0cm]
        \node[scale=0.6] at (7,4.8) {\textbf{Constellation 8}};

        \draw[fill=black] (5,0) circle(3pt);
        \draw[] (6,0) circle(3pt);
        \draw[] (7,0) circle(3pt);
        \draw[fill=black] (8,0) circle(3pt);

        \draw[] (5,1) circle(3pt);
        \draw[] (6,1) circle(3pt);
        \node[scale=0.6] at (6.3,0.85) {\small $\bv$};
        \draw[] (7,1) circle(3pt);
        \draw[] (8,1) circle(3pt);
        \draw[fill=black] (9,1) circle(3pt);
        \node[scale=0.6] at (9.25,0.85) {\small $\bc'$};
        \draw[] (9,0) circle(3pt);

        \draw[] (5,2) circle(3pt);
        \draw[] (6,2) circle(3pt);
        \draw[fill=black] (7,2) circle(3pt); 		\node[scale=0.6] at (7.3,1.85) {\small $\bc$};
        \draw[] (8,2) circle(3pt);
        \draw[] (9,2) circle(3pt);

        \draw[fill=black] (5,3) circle(3pt);
        \draw[] (6,3) circle(3pt);
        \draw[] (7,3) circle(3pt);
        \draw[] (8,3) circle(3pt);
        \node[scale=0.6] at (8.3,2.85) {\small $\bu$};
        \draw[] (9,3) circle(3pt);

        \draw[fill=black] (6,4) circle(3pt);
        \draw[] (7,4) circle(3pt);
        \draw[] (8,4) circle(3pt);
        \draw[fill=black] (9,4) circle(3pt);

       \draw[] (10,2) circle(3pt);
        \draw[] (10,1) circle(3pt);
        \draw[fill=black] (10,0) circle(3pt);
        \node[scale=0.6] at (10.25,1.85)  {$\ba$};
        \node[scale=0.6] at (10.25,0.85)  {$\bb$};
        \node[scale=0.6] at (10.25,-0.15)  {$\bd$};
                \node[scale=0.6] at (8.75,1.2) {\small $3$};
        \node[scale=0.6] at (9.75,2.2) {\small $2$};
        \node[scale=0.6] at (8.75,2.2) {\small $1$};
        \node[scale=0.6] at (9.75,1.2) {\small $3$};

        \begin{scope}[xshift=-1cm,yshift=-1cm]
        \draw[very thick] (10.55,0.55) -- (11.45,0.55) -- (11.45,2.45) -- (10.45,2.45) -- (10.55,0.55);
        \node[scale=0.6] at (8.75,1.2) {\small $2$};
        \node[scale=0.6] at (8.75,2.2) {\small $3$};
        \node[scale=0.6] at (8.75,3.2) {\small $2$};
        \node[scale=0.6] at (9.75,1.2) {\small $4$};
        \node[scale=0.6] at (10.75,1.2) {\small $2$};
        \end{scope}

     \end{scope}
    \end{tikzpicture}
    \caption{Constellations in the king grid. The edges have been omitted for simplicity. Numbers adjacent to vertices denote the minimum number of codewords that might cover them. Numbers within boxes can sometimes be switched with each other within the same box.}
    \label{constellations2}\vspace*{-1mm}
\end{figure}

\textbf{Case 1. $\ba=\bc+(3,0)\in C$}: This case is illustrated in Figure \ref{constellations2} Constellation 6. Now, at least three of the four vertices of $\bc'+\{(0,0),$ $(1,0),$ $(1,1),$ $(0,1)\}$ and of $\bc'+\{(0,0),(-1,0),(-1,-1),(0,-1)\}$ are covered by at least three codewords while the fourth is covered by at least two codewords. Furthermore, $\bc'+(-1,1)$ is covered by exactly two codewords. Thus, $$s(\bc')\leq 1+\frac{3}{2}+\frac{5}{3}=4\frac{1}{6}.$$ 

\textbf{Case 2. $\bb=\bc+(3,-1)\in C$}: This case is illustrated in Figure \ref{constellations2} Constellation 7. As in the previous case,  at least three of the four vertices of $\bc'+\{(0,0),(1,0),(1,1),(0,1)\}$ and of $\bc'+\{(0,0),(-1,0),(-1,-1),(0,-1)\}$ are covered by at least three codewords while the fourth is covered by at least two codewords. Furthermore, $\bc'+(-1,1)$ and $\bc'+(1,-1)$ are covered by at least two codewords. Thus, $$s(\bc')\leq \frac{4}{2}+\frac{5}{3}=3\frac{2}{3}.$$ 

\textbf{Case 3. $\bbd=\bc+(3,-2)\in C$}: This case is illustrated in Figure \ref{constellations2} Constellation 8. Now, $\bc'+(-1,0)$ is covered by three codewords, $\bc'+(-1,1)$ is covered by two codewords, $\bc'+(-1,-1)$ by at least two codewords, both $\bc'$ and $\bc'+(0,-1)$ are covered by at least three codewords and one of them by at least four codewords. Furthermore, both $\bbd$ and $\bb$ are covered by at least two codewords and at least one by at least three codewords. Finally, at least one of $\ba$ and $\bc'+(0,1)$ is covered by at least two codewords. Thus, $$s(\bc')\leq 1+\frac{4}{2}+\frac{3}{3}+\frac{1}{4}=4\frac{1}{4}.$$

Hence, in all three cases $s(\bc')\leq 4\frac{1}{4}=4\frac{3}{12}$. Moreover, if Rule $1$ shifts share to $\bc'$, then $s'(\bc')\leq 4$ as we have seen in Lemma \ref{LemmaKingRule1}. Hence,  $s'(\bc')\leq 4\frac{1}{4}$.

\medskip
Furthermore, there are at most three codewords at Euclidean distance $\sqrt{5}$ from $\bc'$ which are covered only by themselves. Indeed, we have $\bc$ and other possibilities are at points $\bc+(4,0), \bc+(4,-2)$ and $\bc+(3,-3)$. However, at most three of these vertices can be in $C$ and be covered only by themselves, simultaneously. Hence, $$s''(\bc')\leq 4\frac{3}{12}+\frac{3}{12}=\frac{9}{2}.$$

Therefore $s''(\bc)\leq \frac{9}{2}$ for each $\bc\in C$ since $4\frac{2}{3}-\frac{2}{12}=\frac{9}{2}$ and the claim follows with Lemma~\ref{Lemma shareshift}.
\end{proof}

\section{Conclusions}\label{sec:conclusions}

We introduced two new classes of covering codes for every positive integer $r$ -- the local $r$-identifying and local $r$-locating-dominating codes -- and studied them in binary hypercubes and infinite grids for $r=1$.
We studied the sizes of optimal local identifying codes in binary hypercubes and gave a general lower bound and a general upper bound that are asymptotically close.
Also, for some small binary hypercubes precise values for the optimal codes were found.
We studied the densities of optimal local identifying and local locating-dominating codes in (infinite) square, hexagonal, triangular and king grids.
In all except one of the cases we obtained optimal constructions.

For future research, we suggest studying the introduced new codes in binary hypercubes and in infinite grids for $r > 1$ and in different graphs.
This is especially interesting since the relationship between local 1-identification and local $r$-identification differs from the relationship between 1-identification and $r$-identification.
The same applies to location-domination.
Also, one could try improving the bounds of this paper in the cases where the size or the density of an optimal code was not settled.


\begin{thebibliography}{10}
\providecommand{\url}[1]{\texttt{#1}}
\providecommand{\urlprefix}{URL }
\expandafter\ifx\csname urlstyle\endcsname\relax
  \providecommand{\doi}[1]{doi:\discretionary{}{}{}#1}\else
  \providecommand{\doi}{doi:\discretionary{}{}{}\begingroup
  \urlstyle{rm}\Url}\fi
\providecommand{\eprint}[2][]{\url{#2}}

\bibitem{Harary76}
Harary F, Melter RA.
\newblock On the metric dimension of a graph.
\newblock \emph{Ars Combin.}, 1976.
\newblock \textbf{2}:191--195.

\bibitem{Slater75}
Slater PJ.
\newblock Leaves of trees.
\newblock In: Proceedings of the {S}ixth {S}outheastern {C}onference on
  {C}ombinatorics, {G}raph {T}heory, and {C}omputing ({F}lorida {A}tlantic
  {U}niv., {B}oca {R}aton, {F}la., 1975), Congr. Numer., No. XIV. Utilitas
  Math., Winnipeg, Man., 1975 pp. 549--559.

\bibitem{Karpovsky}
Karpovsky MG, Chakrabarty K, Levitin LB.
\newblock On a new class of codes for identifying vertices in graphs.
\newblock \emph{IEEE Trans. Inf. Theory}, 1998.
\newblock \textbf{IT-44}:599--611. doi:10.1109/18.661507.

\bibitem{Slater2}
Rall DF, Slater PJ.
\newblock On location-domination numbers for certain classes of graphs.
\newblock \emph{Congr. Numer.}, 1984.
\newblock \textbf{45}:97--106.

\bibitem{Slater}
Slater PJ.
\newblock Dominating and reference sets in a graph.
\newblock \emph{J. Math. Phys. Sci.}, 1988.
\newblock \textbf{22}:445--455.

\bibitem{localmd1}
Okamoto F, Phinezy B, Zhang P.
\newblock On local metric dimensions of graphs.
\newblock \emph{J. Combin. Math. Combin. Comput.}, 2010.
\newblock \textbf{72}:243--259.

\bibitem{localmd2}
Okamoto F, Phinezy B, Zhang P.
\newblock The local metric dimension of a graph.
\newblock \emph{Math. Bohem.}, 2010.
\newblock \textbf{135}(3):239--255. ISSN:0862-7959.
doi:10.21136/MB.2010.140702.

\bibitem{Klavzar}
Klav{\^z}ar S, Tavakoli M.
\newblock Local metric dimension of graphs: Generalized hierarchical products
  and some applications.
\newblock \emph{Appl. Math. Comput.}, 2020.
\newblock \textbf{364}:124676. doi:10.1016/j.amc.2019.124676.

\bibitem{rufidim}
Herva P, Laihonen T.
\newblock Local identifying and local locating-dominating codes in graphs.
\newblock \emph{Proceedings of the Sixth Russian-Finnish Symposium on Discrete
  Mathematics}, 2021.
\newblock \urlprefix\url{http://urn.fi/URN:ISBN:978-952-12-4113-0}.

\bibitem{Herva}
Herva P.
\newblock Lokaali identifiointi graafeissa (in {F}innish).
\newblock Master's thesis, University of Turku, 2020.

\bibitem{Lobstein}
Jean D, Lobstein A.
\newblock Watching systems, identifying, locating-dominating and discriminating
  codes in graphs.
\newblock \url{https://dragazo.github.io/bibdom/main.pdf}.

\bibitem{muller1987np}
M{\"u}ller H, Brandst{\"a}dt A.
\newblock The {NP}-completeness of {S}teiner tree and dominating set for
  chordal bipartite graphs.
\newblock \emph{Theor. Comput Sci.}, 1987.
\newblock \textbf{53}(2-3):257--265. doi:10.1016/0304-3975(87)90067-3.

\bibitem{cohen1999identifying}
Cohen GD, Honkala IS, Lobstein A, Z{\'e}mor G.
\newblock On identifying codes.
\newblock In: Codes and Association Schemes. Springer, 1999 pp. 97--109.
doi:10.1090/dimacs/056/07.

\bibitem{Slater3}
Slater PJ.
\newblock Fault-tolerant locating dominating sets.
\newblock \emph{Discrete Math.}, 2002.
\newblock \textbf{249}:179--189.  doi:10.1016/ S0012-365X(01)00244-8.

\bibitem{klavvzar2023nonlocal}
Klav{\v{z}}ar S, Kuziak D.
\newblock Nonlocal metric dimension of graphs.
\newblock \emph{Bull. Malaysian Math. Sci. Soc.}, 2023.
\newblock \textbf{46}(2):66. doi:10.1007/s40840-022-01459-x.

\bibitem{chakraborty2023new}
Chakraborty D, Foucaud F, Nandi S, Sen S, Supraja D.
\newblock New bounds and constructions for neighbor-locating colorings of
  graphs.
\newblock In: Algorithms and Discrete Applied Mathematics: 9th International
  Conference, CALDAM 2023, Gandhinagar, India, February 9--11, 2023,
  Proceedings. Springer, 2023 pp. 121--133.   doi:10.1007/978-3-031-25211-2\_9.

\bibitem{chartrand2002locating}
Chartrand G, Erwin D, Henning M, Slater P, Zhang P.
\newblock The locating-chromatic number of a graph.
\newblock \emph{Bull. Inst. Combin. Appl}, 2002.
\newblock \textbf{36}(89):101.

\bibitem{esperet2012locally}
Esperet L, Gravier S, Montassier M, Ochem P, Parreau A.
\newblock Locally identifying coloring of graphs.
\newblock \emph{Electron. J. Comb.}, 2012.
\newblock \textbf{19}(2):P40.  doi:10.37236/2417.

\bibitem{dev2022red}
Dev SR, Dey S, Foucaud F, Klasing R, Lehtil{\"a} T.
\newblock The {RED-BLUE SEPARATION} problem on graphs.
\newblock In: Combinatorial Algorithms: 33rd International Workshop, IWOCA
  2022, Trier, Germany, June 7--9, 2022, Proceedings. Springer, 2022 pp.
  285--298. doi:10.1007/978-3-031-06678-8\_21.

\bibitem{ranjan2022red}
Dev SR, Dey S, Foucaud F, Klasing R, Lehtil{\"a} T.
\newblock The {RED-BLUE SEPARATION} problem on graphs.
\newblock \emph{Theor. Comput. Sci.}, 2023.
\newblock \textbf{970}.  doi:10.1016/j.tcs.2023.114061.

\bibitem{Charon2}
Charon I, Cohen G, Hudry O, Lobstein A.
\newblock New identifying codes in the binary Hamming space.
\newblock \emph{Eur. J. Comb.}, 2010.
\newblock \textbf{32}:491--501.  doi:10.1016/j.ejc.2009.03.032.

\bibitem{honkala2004locating}
Honkala I, Laihonen T, Ranto S.
\newblock On locating-dominating codes in binary {H}amming spaces.
\newblock \emph{Discrete Math. Theor. Comput. Sci.}, 2004.
\newblock \textbf{6}(2).  doi:10.46298/dmtcs.322.

\bibitem{ranto2007identifying}
Ranto S.
\newblock Identifying and locating-dominating codes in binary {H}amming spaces.
\newblock Turku Centre for Computer Science, 2007.

\bibitem{junnila2022improved}
Junnila V, Laihonen T, Lehtil{\"a} T.
\newblock Improved lower bound for locating-dominating codes in binary
  {H}amming spaces.
\newblock \emph{Des. Codes Cryptogr.}, 2022.
\newblock \textbf{90}(1):67--85.  doi:10.1007/s10623-021-00963-8.

\bibitem{Blass}
Blass U, Honkala I, Litsyn S.
\newblock Bounds on identifying codes.
\newblock \emph{Discrete Math.}, 2001.
\newblock \textbf{241}:119--128.  doi:10.1016/S0012-365X(01)00113-3.

\bibitem{exoo2008new}
Exoo G, Laihonen T, Ranto S.
\newblock New bounds on binary identifying codes.
\newblock \emph{Discret. Appl. Math.}, 2008.
\newblock \textbf{156}(12):2250--2263. doi:10.1016/j.dam.2007.09.017.

\bibitem{coveringcodes}
Cohen G, Honkala I, Litsyn S, Lobstein A.
\newblock Covering Codes.
\newblock Elsevier, 1997.  ISBN: 9780444825117.

\bibitem{stanton1968covering}
Stanton R, Kalbfleisch J.
\newblock Covering problems for dichotomized matchings.
\newblock \emph{Aequ. Math.}, 1968.
\newblock \textbf{1}:94--103. doi:10.1007/BF01817562.

\bibitem{wille1996new}
Wille L.
\newblock New binary covering codes obtained by simulated annealing.
\newblock \emph{IEEE Trans. Inf. Theory}, 1996.
\newblock \textbf{42}(1):300--302.

\bibitem{ostergard2005unidirectional}
\"Osterg\r{a}rd P, Seuranen E.
\newblock Unidirectional covering codes.
\newblock \emph{IEEE Trans. Inf. Theory}, 2005.
\newblock \textbf{52}(1):336--340.  doi:10.1109/TIT.2005.860449.

\bibitem{Verstraten}
Verstraten K.
\newblock A Generalization of the Football Pool Problem.
\newblock \url{https://oeis.org/A238305/a238305.pdf}, 2014.

\bibitem{azarija2017total}
Azarija J, Henning M, Klav{\v{z}}ar S.
\newblock (Total) Domination in Prisms.
\newblock \emph{Electron. J. Comb.}, 2017.
\newblock \textbf{24}(1):P19. doi:10.37236/6288.

\bibitem{goddard2018note}
Goddard W, Henning MA.
\newblock A note on domination and total domination in prisms.
\newblock \emph{J. Comb. Optim.}, 2018.
\newblock \textbf{35}:14--20.  doi:10.1007/s10878-017-0150-0.

\bibitem{Junnila}
Junnila V, Laihonen T.
\newblock Optimal lower bound for 2-identifying codes in the hexagonal grid.
\newblock \emph{Electron. J. Comb.}, 2012.
\newblock \textbf{19}(2):P38.  doi:10.37236/2414.

\bibitem{Ben-Haim}
Ben-Haim Y, Litsyn S.
\newblock Exact minimum density of codes identifying vertices in the square  grid.
\newblock \emph{SIAM J. Discrete Math.}, 2005.
\newblock \textbf{19}:69--82.  doi:10.1137/S0895480104444089.

\bibitem{Cukierman}
Cukierman A, Yu G.
\newblock New bounds on the minimum density of an identifying code for the
  infinite hexagonal grid.
\newblock \emph{Discret. Appl. Math.}, 2013.
\newblock \textbf{161}:2910--2924.  doi:10.1016/j.dam.2013.06.002.

\bibitem{Cohen1}
Cohen G, Honkala I, Lobstein A.
\newblock Bounds for codes identifying vertices in the hexagonal grid.
\newblock \emph{SIAM J. Discrete Math.}, 2000.
\newblock \textbf{13}:492--504.  doi:10.1137/S089548019936099.

\bibitem{Charon}
Charon I, Hudry O, Lobstein A.
\newblock Identifying codes with small radius in some infinite regular graphs.
\newblock \emph{Electron. J. Comb.}, 2002.
\newblock \textbf{9}(1):R11.  doi:10.37236/1628.

\bibitem{Cohen2}
Cohen G, Honkala I, Lobstein A.
\newblock On codes identifying vertices in the two-dimensional square lattice
  with diagonals.
\newblock \emph{IEEE Trans. on Comput.}, 2001.
\newblock \textbf{50}:174--176.  doi:10.1109/12.908992.

\bibitem{Honkala2}
Honkala I, Laihonen T.
\newblock On locating-dominating sets in infinite grids.
\newblock \emph{Eur. J. Comb.}, 2006.
\newblock \textbf{27}:218--227.  doi:10.1016/j.ejc.2004.09.002.

\bibitem{Honkala1}
Honkala I.
\newblock An optimal locating-dominating set in the infinite triangular grid.
\newblock \emph{Discrete Math.}, 2006.
\newblock \textbf{306}:2670--2681. doi:10.1016/j.disc.2006.04.028.

\bibitem{cranston2017introduction}
Cranston DW, West DB.
\newblock An introduction to the discharging method via graph coloring.
\newblock \emph{Discrete Mathematics}, 2017.
\newblock \textbf{340}(4):766--793. doi:10.1016/j.disc.2016.11.022.
\end{thebibliography}
\end{document}